\documentclass[11pt]{article}
\usepackage{graphicx}
\graphicspath{{images/}{../images/}}
\usepackage[T1]{fontenc}
\usepackage[utf8]{inputenc}
\usepackage{amssymb}
\usepackage{amsfonts}
\usepackage{dsfont}
\usepackage{mathtools}
\usepackage{amsthm}
\usepackage{amsmath}
\usepackage{textcomp}
\usepackage{xcolor}
\usepackage{color,soul}
\usepackage{eurosym}
\usepackage{chngcntr}
\usepackage[round]{natbib}
\usepackage[english=british]{csquotes}
\usepackage{enumitem}
\usepackage{subcaption}


\counterwithout{table}{section}
\counterwithout{table}{subsection}

\usepackage{geometry}
\geometry{hmargin=1.72cm,vmargin=2.4cm}

\usepackage{blindtext}
\usepackage{hyperref}
\usepackage{cleveref}
\hypersetup{
    colorlinks=true,
    citecolor=blue,
    filecolor=blue,
    linkcolor=blue,
    urlcolor=blue
}

\DeclareMathOperator*{\esssup}{ess\,sup}

\newtheorem{lemme}{Lemma}[section]
\newtheorem{thm}{Theorem}[section]

\newtheorem{cor}{Corollary}[section]
\newtheorem{defi}{Definition}[section]
\newtheorem{rem}{Remark}[section]
\newtheorem{assume}{Assumption}

\newcommand{\hs}{\vspace{3mm}}

\begin{document}

\title{Competition and Incentives in a Shared Order Book}

\author{René \textsc{Aïd}\footnote{Université Paris Dauphine, PSL Research University, LEDa UMR 8007-260, 75775 Paris Cedex 16, rene.aid@dauphine.psl.eu. René Aïd thanks the financial support of the {\em Finance and Sustainable Development EDF-CA CIB Chair}, the {\em Finance for Energy Market} Research Initiative, the French ANR PEPR Math-Vives project MIRTE ANR-23-EXMA-0011.} \qquad Philippe \textsc{Bergault}\footnote{Université Paris Dauphine, CEREMADE, 75775 Paris Cedex 16, bergault@ceremade.dauphine.fr.} \qquad Mathieu \textsc{Rosenbaum}\footnote{Université Paris Dauphine, CEREMADE, 75775 Paris Cedex 16, rosenbaum@ceremade.dauphine.fr.} }
\date{}
\maketitle

\begin{abstract}
Recent regulation on intraday electricity markets has led to the development of shared order books with the intention to foster competition and increase market liquidity. In this paper, we address the question of the efficiency of such regulations by analysing the situation of two exchanges sharing a single limit order book, ie a quote by a market maker can be hit by a trade arriving on the other exchange. We develop a Principal-Agent model where each exchange acts as the Principal of her own market maker acting as her Agent. Exchanges and market makers have all CARA utility functions with potentially different risk-aversion parameters. In terms of mathematical result, we show existence and uniqueness of the resulting Nash equilibrium between exchanges, give the optimal incentive contracts and provide numerical solution to the PDE satisfied by the certainty equivalent of the exchanges. From an economic standpoint, our model demonstrates that incentive provision constitutes a public good. More precisely, it highlights the presence of a competitiveness spillover effect: when one exchange optimally incentivizes its market maker, the competing exchange also reaps indirect benefits. This interdependence gives rise to a free-rider problem. Given that providing incentives entails a cost, the strategic interaction between exchanges may lead to an equilibrium in which neither platform offers incentives—ultimately resulting in diminished overall competition.
\end{abstract}

\vspace{3mm}

\setlength\parindent{0pt}

\textbf{Key words:} Make-take fees, market making, financial regulation, algorithmic trading, principal-agent problem, stochastic control, intraday electricity markets.

\vspace{2mm}

\tableofcontents

\section{Introduction}

In the last decade, intraday electricity markets have attracted growing attention from both market operators and academics. This trend has been fueled by the massive deployment of intermittent renewable energy sources across the European Union, which has significantly increased the uncertainty in supply-demand imbalances faced by electricity market participants. Intraday trading offers all market players -- producers, suppliers, and intermediaries alike -- an efficient tool to adjust their positions as updated information about weather conditions and consumption patterns becomes available.

\hs

To improve the efficiency and liquidity of these markets, new regulations have been adopted at the European level. In particular, Regulation (EU) 2019/943 (Article~7) has introduced the possibility for shared order books between trading platforms operating in different countries. Under this mechanism, an order submitted on one exchange is visible and executable by participants of another exchange, regardless of the physical constraints (such as saturated cross-border transmission capacity). This setup aims at fostering integration and competition between exchanges by removing artificial barriers to trading.

\hs

To understand the implications of this regulation, it is helpful to recall how order books function. In continuous double auction markets -- used in most financial and energy markets -- transactions occur through a limit order book, a centralized system that aggregates limit orders, i.e., orders to buy or sell a given quantity at a specified price. Market makers or liquidity providers post such limit orders -- offering to buy (bid) or sell (ask) at specific prices -- thus providing liquidity to the market. Other participants submit market orders, which are executed immediately against the best available limit orders on the opposite side of the book. Matching occurs according to price priority: market orders consume liquidity by executing against the best available price, then the next best, and so on, until the entire quantity is fulfilled or no further matching orders remain. Liquidity is thus endogenously provided, and the efficiency of the market critically depends on the incentives for market makers to participate.

\hs

A shared order book goes one step further: instead of each exchange operating its own independent book, several platforms agree to synchronize their order books so that any order, regardless of where it was posted, can be matched by incoming orders from any of the participating exchanges. This leads to a situation in which quotes posted on one venue can be hit by orders arriving on another. While this structure increases accessibility and reduces fragmentation, it also introduces new strategic interactions between exchanges.

\hs

This setup closely resembles the functioning of the European intraday electricity market, where two major exchanges -- EPEX Spot and Nord Pool -- operate across multiple countries and account for the overwhelming majority of trading volume. As of 2024, companies' website provides the following information on intraday trading volume: 215~TWh for EPEX Spot and 114~TWh for NordPool. Although several other smaller platforms exist, most liquidity and trading activity is concentrated on these two venues. In practice, market participants -- whether they are liquidity takers (such as utilities or large industrial consumers) or liquidity providers (market makers and trading firms) -- tend to have a natural exchange to which they are historically or geographically attached. Typically, utilities trade on the geographical zones where their production assets are located. While switching between exchanges is technically possible, it is often operationally costly. This observation motivates one of the core assumptions of our model: each exchange is associated with a dedicated market maker, who does not switch between platforms. Similarly, liquidity takers are assumed to interact with a specific exchange, although the execution of their orders may occur on either venue due to the shared order book. Figure \ref{fig:slob} illustrates how liquidity provision and consumption operate across multiple exchanges connected via a shared order book. In this example, a market sell order submitted through Exchange $E_1$ is executed on Exchange $E_0$ at the best bid which is $99.5$ posted by Market Maker $M_0$. if the order were not shared, the sell order going only through exchange E$_1$ would have been executed at the bid price of Market Maker M$_1$ at $99.2$.

\begin{figure}[htb]
  \centering
  \includegraphics[width=0.96\textwidth]{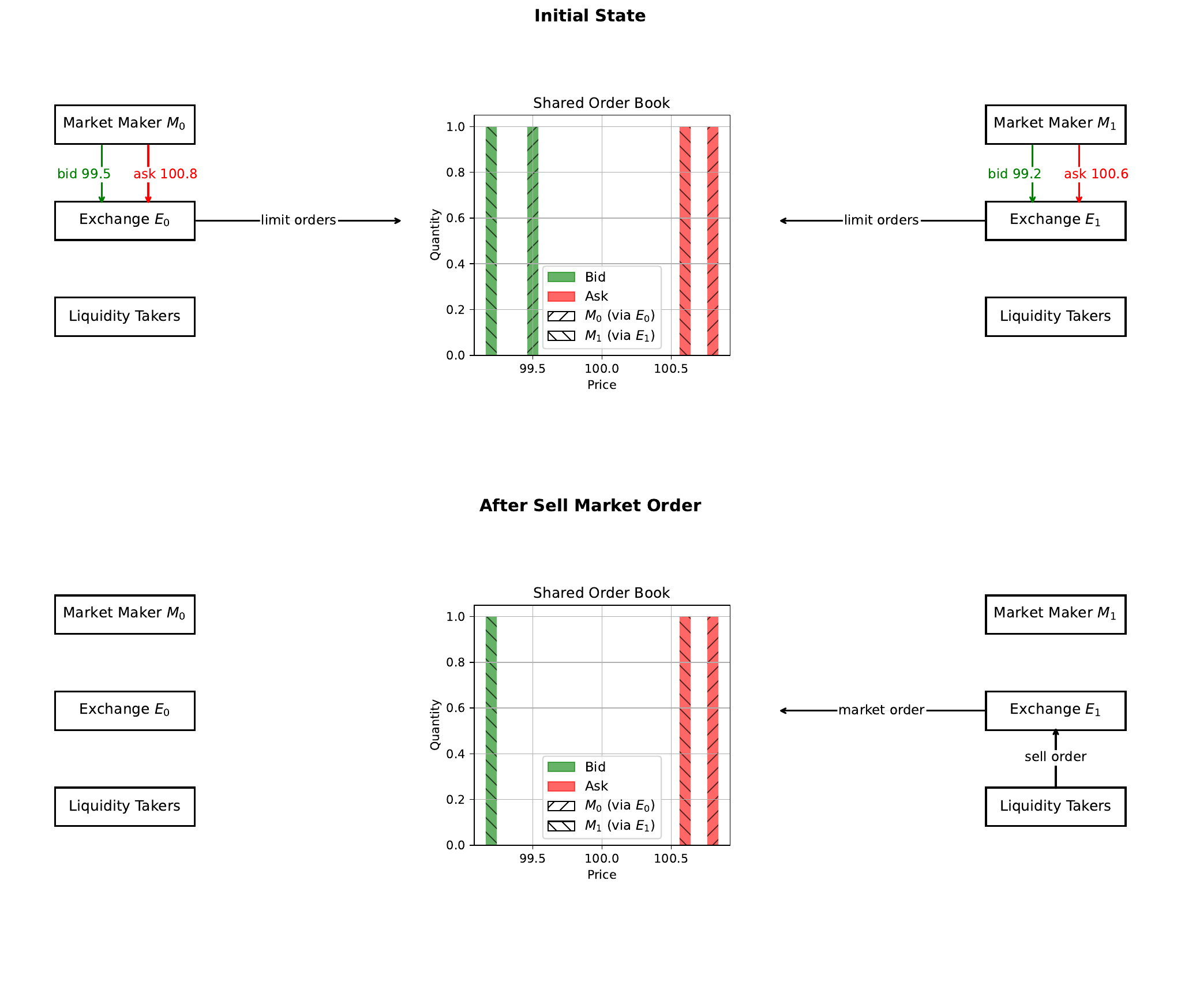}
  \caption{Mechanism of a shared limit order book. The top panel shows initial liquidity provision by two market makers on different exchanges. The bottom panel shows a market sell order submitted through Exchange $E_1$ that executes against the best bid which is quoted by Market Maker M$_0$.}
  \label{fig:slob}
\end{figure}

While our primary focus is on the European intraday electricity market, similar incentive issues arise in other trading environments, most notably in decentralized finance (DeFi). In particular, the use of so-called aggregators -- such as 1inch or Matcha -- effectively connects multiple decentralized exchanges (DEXs), allowing liquidity takers to access the best available quotes across platforms without directly interacting with each one. This introduces a form of competition between venues akin to a shared order book. However, a key distinction lies in the routing and fee structure: in DeFi, liquidity takers interact with the aggregator, which optimally splits the trade across DEXs, but the transaction fees are still collected by the venue where the liquidity is posted. 

\hs

A helpful analogy can be drawn with ride-hailing platforms such as Uber and Lyft. Imagine a situation where these two platforms agree to share their driver availability: a user opening the Uber app could also see and book drivers registered on Lyft, and vice versa. The passenger would then choose the driver offering the best price or shortest expected arrival time (ETA), regardless of their original platform. However, the fees collected from the ride would still be partially attributed to the platform on which the driver is registered. This mirrors the functioning of a shared order book: liquidity takers (passengers) gain access to the best available liquidity (drivers) across platforms, while liquidity providers (market makers or drivers) remain tied to a specific venue, and the economic benefit (fees) depends both on the platform through which the trade is routed and where the liquidity originates.

\hs 

At first glance, one might expect that removing barriers to trading -- as the shared order book mechanism does -- should unambiguously improve market efficiency and social welfare. However, a key insight is that such integration may alter the incentives of exchanges to actively promote liquidity provision. If an exchange anticipates that its members will benefit from the liquidity provided by a competing venue, it may reduce the incentives offered to its own market maker, effectively relying on the competitor’s efforts. This potential for free-riding raises an important question: does the sharing of order books between competing exchanges increase overall liquidity and welfare, or can it lead to negative externalities, such as reduced incentives, lower liquidity, and increased social cost?

\hs

We address this problem by developing a continuous-time model of strategic interaction between two exchanges over a finite trading horizon for a single asset. The fundamental price of the asset is modeled as an arithmetic Brownian motion. Each exchange relies on a dedicated market maker to provide liquidity to its members. Market makers control liquidity by quoting bid and ask prices around the fundamental price: the tighter the spread, the higher the rate of order arrivals. A market maker’s profit and loss consists of three components: gains from the bid-ask spread, inventory costs due to price fluctuations, and incentive payments received from the exchange. The exchange’s profit and loss includes the transaction revenue (assumed linear in volume) and the cost of incentivizing liquidity provision. Incentives take the form of payments indexed on the arrival of orders on \emph{both} exchanges and on the fundamental price of the asset. All players -- exchanges and market makers -- are assumed to have exponential (CARA) utility functions.

\hs

{\bf Results}  First, in terms of mathematical results, we show existence and uniqueness of the resulting Nash equilibrium between exchanges, give the optimal incentive contracts and provide numerical solution to the PDE satisfied by the certainty equivalent of the exchanges. Second, we provide clear evidence that incentive provision to market making in the context of competing exchange is a public good. The costly incentive mechanism implemented by a single exchange largely benefit to a passive competitor, ie a competitor who does not implement her own incentive mechanism. This result is a simple application of a {\em spill-over competitive effect}.  Indeed, incentives provided by one exchange to her market-maker induces more aggressive quotes, which leads to an increase in market orders, which in turns, increases the competitive pressure to the other market maker who is compelled to adjust its own quotes to maintain its level of market orders. The presence of this competitiveness spill-over effect makes incentive provision a public good. Both platforms would personally benefit from the incentives provided by the other exchange to her market maker. But, because the provision of incentive is costly (the platform has to pay the reservation utility of the market maker), an exchange may benefit from not incentivizing her market maker and letting the other platform pay for incentive provision. It is a free-rider problem. As often in the provision of public good, the consequence may be dramatic at equilibrium, as the Nash equilibrium results in no incentive provision by any exchange.

\hs

{\bf Related literature} Market design and liquidity provision have been extensively studied in the literature. \cite{Glosten94} studies the adoption of the electronic order book and shows that it yields many benefits in terms of liquidity compared to other types of market organization. \cite{Foucault16} studies the impact of news on liquidity and show that stocks with more informative news are more liquid. Our study focuses on competition between exchanges, a topic with many open questions presented in  \cite{Cantillon11}. \cite{Colliard12} shows that, although competition increases liquidity and reduces trading fees, it is not always beneficial to investors who are driven to post limit orders with a smaller execution probability. Competition between dealers in OTC markets is studied in \cite{Lester15}. An equilibrium model for coexisting exchanges is presented in \cite{Malamud17}, but it does not account for the possibility of a shared order book.\\ 

This notion of shared order book is particularly relevant in the case of European electricity markets, as mentioned above. The behavior of those markets has recently been an active field of research. \cite{Favetto19} and \cite{Graf21} utilize Hawkes processes to show that, in the intraday market, the order book activity increases exponentially as time approaches maturity. A model for the dynamics of intraday electricity prices has been proposed in \cite{Deschatre24} using a common shock Poisson model that allows the authors to reproduce the correlation structure and the increase of volatility as time to maturity decreases, also known as the Samuelson effect. An alternative model based on marked Hawkes processes was proposed in \cite{Deschatre22} and also allows the authors to reproduce many stylized facts such as the Samuelson effect. Finally, \cite{Bergault24} introduced a model for sparse order books particularly suited for intraday electricity markets.\\

Our work is closely related to that of \cite{ElEuch21}, that combines the \cite{Avellaneda08} model and a Principal-Agent framework with drift control,\footnote{See \cite{Sannikov08} and \cite{Elie19contracting,Elie19tale} for the theoretical foundations of Principal-Agent frameworks.} to increase market liquidity on a single asset by proposing a contract to a market maker. In their formulation, the exchange offers a contract, resulting from a Stackelberg equilibrium between the market maker and the exchange, to the market maker in order to decrease the bid-ask spread offered on the asset and therefore increase liquidity. The optimal contract proposed is a remuneration indexed on the number of transactions realized on the asset as well as its dynamics. This contract is obtained in semi closed-form, depending on the market maker’s inventory trajectory and market volatility. Numerical results show that the contract effectively reduces the spreads and subsequently increases liquidity provision. This work has been extended in various ways, taking into account several market participants, the specificities of options market and enabling trading on dark liquidity pools, see \cite{Baldacci21,Baldacci23market,Baldacci24how} (see also \cite{Aqsha25} for a recent application to decentralized markets). A different approach to make-take fees based on SPDE control theory is proposed in \cite{Baldacci24}. An other interesting study on make-take fees is that of \cite{Foucault13}. Finally, an other tool for exchanges to increase liquidity on a given market is that of setting an optimal tick size: several works have been published in that direction, see for instance \cite{Dayri15} and  \cite{Baldacci23}.\\

\section{The model}

We consider two exchanges or platforms, hereafter respectively denoted by $E_0$ and $E_1$, sharing a common limit order book. Each exchange has one designated market maker, respectively denoted by $M_0$ and $M_1$. Sharing a common limit order book means that a quote from the market maker $M_0$ can be hit by a trade going through his exchange~$E_0$ or from the other exchange $E_1$. In our model, we introduce a measure of the connection between the two exchanges that allows to assess the effect of connection of liquidity and exchanges value functions. In the absence of connection, a market maker can benefit from quotes less competitive that the other market maker. When exchanges are perfectly connected, a market maker is not able to receive more than the best of the two quotes.  We turn now to the precise description of the model. The section~\ref{ssec:setting} provides the description of the dynamics of the inventory of the two market makers and provides the probabiliist framework of the model. The section~\ref{ssec:no-incentive} presents a first benchmark situation when the exchanges do not provide any incentives to their market makers. Sections~\ref{ssec:1-incentive} and \ref{ssec:2-incentive} presents the situation with respectively only one exchange providing optimal incentive to her market maket and the two exchange engaged in a fierce competition where each provides an optimal incentive to her market maker.

\subsection{General setting}\label{ssec:setting}

Let $T>0$ be a final time, $\Omega_c$ the set of continuous functions from $[0,T]$ into $\mathbb R$, $\Omega_d$ the set of piecewise constant càdlàg functions from $[0,T]$ into $\mathbb N$, and $\Omega = \Omega_c \times \big( \Omega_d \big)^8$ with corresponding Borel algebra $\mathcal F$. The observable state is the canonical process $ (\chi_t)_{t \in [0,T]} = \big(W_t, N^{b,0,0}_t, N^{b,0,1}_t,   N^{b,1,0}_t, N^{b,1,1}_t, N^{a,0,0}_t, N^{a,0,1}_t,   N^{a,1,0}_t,  N^{a,1,1}_t \big)_{t \in [0,T]}$ of the measurable space $(\Omega, \mathcal F)$, with
\[
 W_t(\omega) := w(t), \; N^{b,i,j}_t(\omega) :=  n^{b,i,j}(t), \; N^{a,i,j}_t(\omega) :=  n^{a,i,j}(t), \; \text{for all } (i,j) \in \{0,1\}^2, \text{for all } t \in [0,T],
\]
with $\omega := (w, n^{b,0,0}, n^{b,0,1}, n^{b,1,0}, n^{b,1,1}, n^{a,0,0}, n^{a,0,1}, 
n^{a,1,0}, n^{a,1,1}) \in \Omega.$\\

Let $\bar q>0$.We introduce the unique probability measure $\mathbb P$ on $(\Omega, \mathcal F )$ such that, under $\mathbb P$, $W$ is a standard Brownian motion, for $(i,j) \in \{0,1 \}^2$, $N^{b,i,j}$ and $N^{a,i,j}$ are Poisson processes with respective intensities $\mathds{1}_{\{  N^{b,i,0}_{t-} + N^{b,i,1}_{t-} - N^{a,i,0}_{t-} - N^{a,i,1}_{t-}  < \bar q \}}$ and $\mathds{1}_{\{  N^{b,i,0}_{t-} + N^{b,i,1}_{t-} - N^{a,i,0}_{t-} - N^{a,i,1}_{t-}  > - \bar q \}}$, and $W,$ $N^{b,i,j}$, and $N^{a,i,j}$ are independent. We endow the space $(\Omega, \mathcal F )$ with the $\mathbb P-$augmented canonical filtration $\mathbb F := \left(\mathcal F_t  \right)_{t\in [0,T]} $ generated by $ (\chi_t)_{t \in [0,T]}$.\\

The trading activity consists of a single risky asset whose price $S$ is given by
\[
S_t := S_0 + \sigma W_t, \quad t\in [0,T],
\]
with initial price $S_0>0$ and volatility $\sigma>0$ given. The two market makers propose bid and ask quotes for the asset on a continuous basis. More precisely, the bid and ask quotes of market maker $i\in \{0,1\}$ are respectively given by
\[
P^{b,i}_t := S_t - \delta^{b,i}_t \quad \text{and} \quad P^{a,i}_t := S_t + \delta^{a,i}_t \quad t \in [0,T].
\]
We define the following aggregated counting processes:
\[
N^{b,i} := N^{b,i,0} + N^{b,i,1}, \quad N^{a,i} := N^{a,i,0} + N^{a,i,1}, \quad i\in \{0,1\},
\]
which represents respectively the number of trades at the bid and at the ask for the market maker $i$ resulting from the sum of market orders going through exchanges $0$ and $1$.\footnote{For instance, the situation described in Figure \ref{fig:slob} corresponds to a jump of $N^{b,0,1}$.} Besides, we define the inventory processes of the market makers, respectively denoted by $(Q^{0}_t)_{t \in [0,T]}$ and $(Q^{1}_t)_{t \in [0,T]}$, given by
\[
Q^{i}_t := N^{b,i}_t - N^{a,i}_t, \quad i\in \{0,1\}, \ t \in [0,T].
\]
We define the set of admissible controls for the market makers as\footnote{The last constraint is a mathematical device: it ensures that when a market maker reaches their risk limit and is therefore unable to trade on one side of the book, the corresponding (fictitious) quote is set to its maximum value. This prevents the inactive market maker from interfering with the trading activity of the other market maker who remains active.}
\begin{align*}
\mathcal A := \Bigg\{ \delta = \left(\delta^{b,i}_t, \delta^{a,i}_t \right)_{t \in [0,T]}^{i \in \{0,1\}} : \mathbb R^4-&\text{valued predictable processes bounded by } \delta_{\infty},\\
&\qquad \qquad \mathbb P \left( \delta^{k,i}_t = \delta_{\infty}  \left| \phi(k) Q^i_t = \bar q \right. \right) = 1 \text{ for } k \in \{b,a\}\Bigg\},
\end{align*}
where $\delta_{\infty}>0$ is a given constant, $\phi(b) = 1,$ $\phi(a) = -1$.\\

Finally, for $j\in \{0,1\}$, we define the map $\Lambda^{j}: \mathbb R  \rightarrow \mathbb R_+$ such that:
\begin{align}\label{intens_func}
&\Lambda^{j}(d) := A^j \exp \Big(-\frac{\kappa}{\sigma} \big(d + c^j \big) \Big),
\end{align}
where $A^j, \kappa,$ and $c^j$ are positive constants. 

\hs

We introduce for each $\delta \in \mathcal A$ the following Doléans-Dade exponential process $(L^\delta_t)_{t\in [0,T]}$ by:
\[
L^\delta_t := \exp \Bigg( \sum_{ \substack{(i,j) \in \{0,1\} \\ k\in \{b,a\}} }  \bigg( \int_0^t \log \Big( \Lambda^{j}\big(\delta^{k,i}_s\big) \Big) \mathrm dN^{k,i,j}_s - \int_0^t \Big( \Lambda^{j}\big(\delta^{k,i}_s\big)-1\Big) \mathrm ds  \bigg)  \Bigg),
\]
and define the probability measure $\mathbb P^\delta$ given by $ \frac{\mathrm d\mathbb P^\delta}{\mathrm d\mathbb P} := L^\delta_T.$ In particular, all the probability measures $\mathbb P^\delta$ are equivalent, and we therefore use the notation \textit{a.s.} for almost surely without ambiguity.\\

To sum up, under $\mathbb P^{\delta},$ for all $(i,j) \in \{0,1\}^2$, the processes $N^{b,i,j}$ and $N^{a,i,j}$ modelling respectively the number of trades at the bid and at the ask for the market maker $i$ resulting from a market order through exchange $j$ have respective intensities $\left(\lambda^{b,i,j,\delta}_t \right)_{t \in [0,T]}$ and $\left(\lambda^{a,i,j,\delta}_t \right)_{t \in [0,T]}$ given by
\begin{align*}
&\lambda^{b,i,j,\delta}_t := \mathds{1}_{\{  Q^{i}_{t-}  < \bar q \}} \Lambda^{j}\big(\delta^{b,i}_t\big), \quad
    \lambda^{a,i,j,\delta}_t := \mathds{1}_{\{Q^{i}_{t-}  >- \bar q \}} \Lambda^{j}\big(\delta^{a,i}_t\big),
\end{align*}
and $W$ is a standard Brownian motion. This type of exponential intensities for order arrivals is standard in the literature (see \cite{Avellaneda08}). The two exchanges differ only by a constant $A^j e^{-\frac{\kappa c^j}{\sigma}}$ on their relative effect on the intensity of orders arrival because the parameter $\kappa$ is the same for both. The parameter $\bar q$ serves as a risk constraint for market makers: it caps their inventory by preventing further buying when their position is too long, and further selling when it is too short.

\hs

We describe the situation where exchange $E_0$ proposes a contract to its market maker $M_0$, and exchange $E_1$ proposes a contract to its market maker $M_1$. As usual in game theory, we use the notation $i,j$ for the two-player game where $j$ means not $i$.


\hs

Consider the exchange $i$. Let us fix $\beta \in (0,1) $. We introduce the cash process $\left(X^{i,\delta}_t \right)_{t\in [0,T]}$ of market maker $M_i$, given by
\begin{align*}
X^{i,\delta}_t &:= \int_0^t \Big(P^{a,i}_s - \beta \big(\delta^{a,i}_s - \underline{\delta^{a}_s}  \big) \Big) \mathrm dN^{a,i}_s - \int_0^t \Big(P^{b,i}_s + \beta \big( \delta^{b,i}_s - \underline{\delta^{b}_s} \big) \Big) \mathrm dN^{b,i}_s\\ 
& = \int_0^t \Big(\delta^{a,i}_s - \beta \big(\delta^{a,i}_s - \underline{\delta^{a}_s}  \big) \Big) \mathrm dN^{a,i}_s + \int_0^t \Big(\delta^{b,i}_s - \beta \big( \delta^{b,i}_s - \underline{\delta^{b}_s} \big) \Big) \mathrm dN^{b,i}_s - \int_0^t S_s\mathrm dQ^{i}_s    
\end{align*}
where for all $s \in [0,T],$ $k\in \{b,a\}$, $\delta^k_s = \left( \delta^{k,i}_s, \delta^{k,j}_s \right),$ and for $d = \left(d^i, d^j \right) \in \mathbb R^2$, we define $\underline d = \min \left(d^i, d^j\right).$

\hs

The parameter $\beta$ introduces a penalty on the remuneration of the market maker when her quote moves away from the best quote. This represents the fact that, when the quote of the market maker is hit despite the fact that it is not at the first limit, it means that an order has been split and therefore she should not get the remuneration associated with a “full” order (recall that in our model, we make the assumption that all orders are of size 1). In a way, the parameter $\beta$ can also be interpreted as a measure of the efficiency of connection between the two exchanges. When $\beta = 0$, the market maker $M_0$ can take full benefit of a higher quote compared to market maker $M_1$, whereas when $\beta = 1$, she never gets more than the smaller quote. In the latter case, the connection between the two exchanges provides the most efficient service to the users.

\hs

From a modeling perspective, the introduction of $\beta$ also plays a crucial technical role. It allows us to account for the partial execution of market orders -- when they are split across multiple quotes -- without explicitly modeling heterogeneous order sizes or implementing a full matching engine. Instead of varying the size of the executed trade, which would significantly complicate the analysis, we penalize the executed price when the quote is not at the top of the book. Although this is a very simplified representation of reality (for instance, in our model, it may happen that the market maker with the largest quote captures a trade at the expense of a more competitive market maker, something that never occurs in an actual limit order book), this simplification makes the model tractable while preserving the key economic intuition behind shared liquidity access. In particular, in a real order book, either the quotes will be identical -- in which case this issue does not really arise -- or they will be separated by at least one tick, which, due to the exponential form of the intensities, substantially reduces the probability of execution for the less competitive market maker.

\hs

Let us define, for a given $\delta \in \mathcal A$, the profit and loss process $\left( PL^{i,\delta}_t \right)_{t\in [0,T]}$ of market maker $M_i$ as 
\[
PL^{i,\delta}_t := X^{i,\delta}_t + Q^{i}_t S_t = \int_0^t \Big(\delta^{a,i}_s - \beta \big(\delta^{a,i}_s - \underline{\delta^{a}_s}  \big) \Big) \mathrm dN^{a,i}_s + \int_0^t \Big(\delta^{b,i}_s - \beta \big( \delta^{b,i}_s - \underline{\delta^{b}_s} \big) \Big) \mathrm dN^{b,i}_s + \int_0^t Q^{i}_s \mathrm dS_s.
\]

In order to increase market liquidity in the limit order book, exchange $E_i$ proposes to market maker $M_i$ an incentive contract given by an $\mathcal F_T-$measurable random variable $\xi^i$. Given the spread vector $\delta^j$ quoted by market maker $M_j$, and the contract $\xi^i$, we define the objective function $J_{M_i}$ of market maker $M_i$ by
\begin{align*}
J_{M_i}\left(\delta^i, \delta^j, \xi^i \right) &:= \mathbb E^{\delta} \Big[ -e^{-\gamma^i \left(PL^{i,\delta}_T + \xi^i \right)} \Big],
\end{align*}
where $\gamma^i>0$ is the absolute risk aversion of the CARA market maker, $\delta = \left(\delta^i, \delta^j\right)$, and $\mathbb E^\delta$ denotes the expectation under probability $\mathbb P^\delta$. The market maker wants to solve the utility maximization problem
\begin{align}\label{M0PB}
    V_{M_i}\left( \delta^j, \xi^i \right) := \underset{\delta^i \in \mathcal A^i \left(\delta^{j}\right)}{\sup} J_{M_i}\left(\delta^i, \delta^j, \xi^i \right),
\end{align}
where $\mathcal A^i \left(\delta^{j}\right) := \left\{ \delta^i : \delta^i \otimes_i \delta^{j} \in \mathcal A \right\},$ $i\in\{0,1\}.$

\hs

We now give a definition of a Nash equilibrium between market makers $M_0$ and $M_1$ in our framework.

\begin{defi}[{\bf Market makers Nash equilibrium}]
\begin{itemize}
    \item[(i)]For a given contract $\xi^0$ proposed by exchange $E_0$, and a given contract $\xi^1$ proposed by exchange $E_1$, and denoting by $\xi := \left(\xi^0, \xi^1 \right)$, a Nash equilibrium between the market makers is a set of controls $\bar \delta(\xi) \in \mathcal A$ such that
\begin{align*}
V_{M_0}\left( \bar \delta^1(\xi), \xi^0 \right) = J_{M_0}\left(\bar \delta^0(\xi), \bar \delta^1(\xi), \xi^0 \right), \quad
V_{M_1}\left( \bar \delta^0(\xi), \xi^1 \right) = J_{M_1}\left(\bar \delta^0(\xi), \bar \delta^1(\xi), \xi^1\right). \end{align*}
    For a given couple of contracts $\xi = \left(\xi^0, \xi^1 \right)$ proposed by the exchanges, we introduce the set $\text{NE}(\xi) \subset \mathcal A$ of the associated Nash equilibria.
    \item[(ii)] Moreover, we say that a Nash equilibrium $\bar \delta(\xi) \in \mathcal A$ is Pareto-optimal if
\begin{align*}
V_{M_0}\left( \bar \delta^1(\xi), \xi^0 \right) = \underset{\delta \in \text{NE}(\xi)}{\sup} J_{M_0}\left(\delta^0, \delta^1, \xi^0 \right), \quad
V_{M_1}\left( \bar \delta^0(\xi), \xi^1 \right) = \underset{\delta \in \text{NE}(\xi)}{\sup} J_{M_1}\left(\delta^0, \delta^1, \xi^1\right).
\end{align*}

For a given couple of contracts $\xi = \left(\xi^0, \xi^1 \right)$ proposed by the exchanges, we introduce the set $\overline{\text{NE}}(\xi)\subset \mathcal A$ of the associated Pareto-optimal Nash equilibria.
\end{itemize}
\end{defi}

Let $i \in \{0,1\}$. Let us define the reservation level $R_i \in \mathbb R$ of market maker $M_i$ as its maximal utility level without contracts, i.e. 
$$R_i := \underset{\bar \delta \in \text{NE}\left((0,0)\right)}{\sup}  J_{M_i}\left( \bar \delta^0, \bar \delta^1, 0  \right),$$
and set $\eta^i>0$ the exchange's absolute risk aversion parameter.\\

We can now define the set $\mathcal C$ of admissible contracts $\xi = (\xi^0, \xi^1)$:
\begin{defi}[{\bf Admissible contracts}]
The set of admissible contracts $\mathcal C$ is defined as the set of $\mathbb R^2-$valued, $\mathcal F_T- $measurable random variables $\xi= (\xi^0, \xi^1)$, such that $\overline{\text{NE}}(\xi) \neq \emptyset$, and for $i\in \{0,1\}$, there exists $\bar \gamma^i > \gamma^i$ and $\bar \eta^i > \eta^i$, $ \forall \delta \in \mathcal A$
\begin{align} \label{integcond0}
\mathbb E^{\delta} \left[ e^{-\bar \gamma^i \xi^i}\right] < +\infty \quad \text{and} \quad \mathbb E^{\delta} \left[ e^{\bar \eta^i \xi^i}\right]< +\infty,    
\end{align}
and 
$$V_{M_i}\left( \bar \delta^i(\xi), \xi^i \right) \ge R_i \quad \forall \bar \delta(\xi) \in \overline{\text{NE}}(\xi).$$
\end{defi}

\hs

Let us first solve the problem of the two market makers when exchanges $E_0$ and $E_1$ propose an arbitrary couple of admissible contract $(\xi^0, \xi^1) \in \mathcal C$ to their respective market makers $M_0$ and $M_1$.\\

We introduce the notations $\mathcal R := \mathbb R^4 \times \mathbb R^4 \times \mathbb R$, $\bar{\mathcal Q}:= [-\bar q, \bar q ]$, and $\mathcal B_{\infty} := [-\delta_{\infty}, \delta_{\infty}]$. We now define the Hamiltonian functions of the market makers.\\

For $(d^0, d^1, z^l, q) \in \mathcal B_{\infty}^2 \times \mathcal B_{\infty}^2 \times \mathcal R \times \bar{\mathcal Q}^2$, we denote $z^l := \left( \left( z^{l,b,i,j} \right)_{i,j = 0,1}, \left( z^{l,a,i,j} \right)_{i,j = 0,1}, z^{l,S}  \right)$ and $d^i := \left(d^{b,i}, d^{a,i} \right)$ $ \forall i,l \in \{0,1\}$. For $q^0 \in \bar{\mathcal Q},$ we define
$$
\mathcal B_{\infty}^2(q^0) = \begin{cases}
\mathcal B_{\infty}^2 \qquad \text{if } q^0 \in (-\bar q, \bar q)\\
\{\delta_\infty\} \times \mathcal B_{\infty} \qquad  \text{if } q^0 = \bar q\\
\mathcal B_{\infty} \times \{\delta_\infty\} \qquad  \text{if } q^0 = -\bar q.
\end{cases}
$$
Finally, we introduce the set $\Psi$ given by
\begin{align*}
    \Psi = \Bigg\{(d^0, d^1, z^0, q) \in \mathcal B_{\infty}^2 \times \mathcal B_{\infty}^2 \times \mathcal R \times \bar{\mathcal Q}^2 \bigg| d^0 \in  \mathcal B_{\infty}^2(q^0), d^1 \in  \mathcal B_{\infty}^2(q^1) \Bigg\}.
\end{align*}

The variable $z^{\ell,k,i,j}$ can be interpreted as a payment rate for market maker $\ell$ for incentives for bid ($k=b$) or ask ($k=a$) depending on what is the observed rate of order arrival $N^{k,i,j}$ to market maker $i$ on exchange $j$.

\begin{defi}[{\bf Market makers Hamiltonian}]
For $( z^0, q) \in \mathcal R \times \bar{\mathcal Q}^2$, and $d^1 \in \mathcal B_{\infty}^2(q^1)$, the Hamiltonian function $H^0$ of market maker $M_0$ is given by
\begin{align}
    H^0 (d^1, z^0, q) := \underset{d^0 \in \mathcal B_{\infty}^2(q^0)}{\sup} h^0(d^0, d^1, z^0, q),
\end{align}
where $h^0$ is defined on $\Psi$ by
\begin{align*}
    h^0(d^0, d^1, z^0, q) := \sum_{k\in \{b,a\}} \Bigg(&  \sum_{j=0}^1 \frac{1 - e^{-\gamma^0 \Big(z^{0,k,0,j} + {d^{k,0}} - \beta \big( d^{k,0} - \underline{d^k} \big) \Big)}}{\gamma^0}  \mathds{1}_{\{q^0 \phi(k)  < \bar q \}} \Lambda^{j}\big(d^{k,0}\big)\\
    & + \sum_{j=0}^1 \frac{1 - e^{-\gamma^0 z^{0,k,1,j} }}{\gamma^0}  \mathds{1}_{\{q^1 \phi(k)  < \bar q \}} \Lambda^{j}\big(d^{k,1}\big) \Bigg).
\end{align*}

The Hamiltonian function $H^1$ of market maker $M_1$ deduced by symmetry.
\end{defi}

For $q \in \bar{\mathcal Q}$, we define 
$$ \mathcal M_2 (\mathcal B_{\infty},q ) = \mathcal B_{\infty}^2(q^0) \times \mathcal B_{\infty}^2(q^1).$$
In order to find a Nash equilibrium between the market makers, we now give a definition of a fixed point of the Hamiltonian vector $\left(H^0(., z^0, q), H^1(.,z^1,q) \right).$

\begin{defi}[{\bf Market makers Hamiltonians fixed points}]
\begin{itemize}
    \item[(i)] For every $(z,q) \in \mathcal R^2 \times \bar{\mathcal Q}^2$, where we denote $z := (z^0, z^1)$, a fixed point of the Hamiltonian functions is given by a matrix $\delta^\star (z,q) \in \mathcal M_2 (\mathcal B_{\infty}, q)$ such that
\begin{align*}    
\delta^{\star 0} (z,q) \in \underset{\delta^0 \in \mathcal B_{\infty}^2(q^0)}{\mathrm{argmax}}\ h^0(\delta^0, \delta^{\star 1}(z,q), z^0, q), \quad
\delta^{\star 1} (z,q) \in \underset{\delta^1 \in \mathcal B_{\infty}^2(q^1)}{\mathrm{argmax}}\ h^1(\delta^{\star 0}(z,q), \delta^{ 1}, z^1, q).
\end{align*}

We denote by $\mathcal E (z,q)$ the set of all such fixed points. The corresponding set of maps is denoted by $\mathcal E.$
 \item[(ii)] We also introduce the two functions $\mathcal H ^0, \mathcal H^1 : \mathcal R^2 \times \bar{\mathcal Q}^2 \rightarrow \mathbb R$ given by
 \begin{align*}
 \mathcal H ^0(z,q) = \underset{\delta^\star(z,q) \in \mathcal E(z,q)}{\sup} H^0 \left(\delta^{\star 1}(z,q), z^0,q \right),
 \quad 
 \mathcal H ^1(z,q) = \underset{\delta^\star(z,q) \in \mathcal E(z,q)}{\sup} H^1 \left(\delta^{\star 0}(z,q), z^1,q \right).
 \end{align*}

 We say that $\delta^\star (z,q)\in \mathcal M_2 (\mathcal B_{\infty}, q)$ is an optimal fixed point of the Hamiltonian functions if 
 \begin{align*}
 \delta^{\star} (z,q) \in \underset{\delta^\star(z,q) \in \mathcal E(z,q)}{\mathrm{argmax}}\ H^0 \left(\delta^{\star 1}(z,q), z^0,q \right),
\quad
\delta^{\star} (z,q) \in \underset{\delta^\star(z,q) \in \mathcal E(z,q)}{\mathrm{argmax}}\ H^1 \left(\delta^{\star 0}(z,q), z^1,q \right).
\end{align*}
We denote by $\bar{\mathcal E}(z,q)$ the set of all such optimal fixed points. The corresponding set of maps is denoted by $\bar{\mathcal E}.$
\end{itemize}
\end{defi}

\hs

For $z = (z^0,z^1) \in \mathcal R^2$, with $$z^l =  \left( \left( z^{l,b,i,j} \right)_{i,j = 0,1}, \left( z^{l,a,i,j} \right)_{i,j = 0,1}, z^{l,S}  \right)$$
for $l,i \in \{0,1\}$, we introduce the notation $z^{l,k,i,:}$ for $l \in \{0,1\}$ and $k\in\{b,a\}$ such that
$$z^{l,k,i,:} = \left(z^{l,k,i,0},z^{l,k,i,1}\right),$$
and the notation $\bar z^{k,:}$ for $k\in\{b,a\}$ such that
$$\bar z^{k,:} = \left(z^{0,k,0,:}, z^{1,k,1,:}\right).$$

Besides, we introduce the notations. For $i\in \{0,1\}$ and $k\in\{b,a\}$, we define the functions $\tilde \Gamma^{k,i} : \mathbb R^2 \rightarrow \mathcal B_{\infty}$ and $\tilde \Gamma^{k,i}_\beta : \mathbb R \times \mathbb R^2 \rightarrow \mathcal B_{\infty}$ such that, for every $(z,d)\in \mathcal R^2 \times \mathbb R$, we have
\begin{align*}
    \tilde \Gamma^{k,i}(z^{i,k,i,:}) &:= (-\delta_\infty) \vee \left\lbrace\frac{1}{\gamma^i} \left( \log \bigg( 1 + \frac{\sigma \gamma^i}{\kappa}  \bigg) + \log \Bigg( \frac{\sum_{j=0}^1 A^j e^{-\frac{\kappa}{\sigma}c^j - \gamma^i z^{i,k,i,j}}}{\sum_{j=0}^1 A^j e^{-\frac{\kappa}{\sigma}c^j}}  \Bigg)   \right) \right\rbrace \wedge \delta_\infty,\\
    \tilde \Gamma^{k,i}_\beta (d,z^{i,k,i,:}) &:=(-\delta_\infty) \vee \left\lbrace - \frac{\beta}{1-\beta} d + \frac{1}{\gamma^i (1-\beta)} \left( \log \bigg( 1 + \frac{(1-\beta)\sigma \gamma^i}{\kappa}  \bigg) + \log \Bigg( \frac{\sum_{j=0}^1 A^j e^{-\frac{\kappa}{\sigma}c^j - \gamma^i z^{i,k,i,j}}}{\sum_{j=0}^1 A^j e^{-\frac{k}{\sigma}c^j}}  \Bigg)   \right)\right\rbrace \wedge \delta_\infty.
\end{align*}
We also define the functions $\Gamma^{k,i} : \mathbb R^2 \times [-\bar q, \bar q] \rightarrow \mathcal B_{\infty}$ and $\Gamma^{k,i}_\beta : \mathbb R \times \mathbb R^2\times [-\bar q, \bar q] \rightarrow \mathcal B_{\infty} $ such that, for every $(z,d, q)\in \mathcal R^2 \times \mathbb R \times [-\bar q, \bar q]$, we have
\begin{align*}
    \Gamma^{k,i}(z^{i,k,i,:},q) &:=
    \begin{cases}
    \tilde \Gamma^{k,i}(z^{i,k,i,:}) \quad \text{if} \quad q\phi(k)< \bar q,\\
    \delta_\infty \quad \text{otherwise,}
    \end{cases} \qquad 
    \Gamma^{k,i}_\beta (d,z^{i,k,i,:},q) &:=
    \begin{cases}
    \tilde \Gamma^{k,i}_\beta (d,z^{i,k,i,:}) \quad \text{if} \quad q\phi(k)< \bar q,\\
    \delta_\infty \qquad \text{otherwise.}
    \end{cases}
\end{align*}
Finally, we introduce the function $\Delta:\mathcal R^2 \times [-\bar q, \bar q]^2 \mapsto \mathcal M_2(\mathcal B_{\infty})$ such that for every $(z,q) \in \mathcal R^2 \times [-\bar q, \bar q]$:
\begin{align*}
  \Delta^{k,i}(\bar z^{k,:},q) :=
  \begin{cases}
  \Gamma^{k,i}(z^{i,k,i,:},q^i) \quad \text{if} \quad \Gamma^{k,i}(z^{i,k,i,:},q^i)\le \Gamma^{k,1-i}(z^{1-i,k,1-i,:},q^{1-i}),\\
  \Gamma^{k,i}_\beta \left(\Gamma^{k,1-i}(z^{1-i,k ,1-i,:},q^{1-i}),z^{i,k,i,:},q^i\right)\\ \qquad \quad \text{if} \quad \Gamma^{k,1-i}(z^{1-i,k,1-i,:},q^{1-i})< \Gamma^{k,i}(z^{i,k,i,:},q^i)\land \Gamma^{k,i}_\beta \left(\Gamma^{k,1-i}(z^{1-i,k,1-i,:},q^{1-i}),z^{i,k,i,:},q^i\right),\\
  \Gamma^{k,1-i}(z^{1-i,k,1-i,:},q^{1-i}) \quad \text{otherwise.}
  \end{cases}
\end{align*}

\hs

\begin{lemme}\label{FPMM}
It holds that $\bar{\mathcal E} = {\mathcal E} = \{\Delta\}$. 
\end{lemme}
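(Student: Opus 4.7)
The proof should factorise cleanly over $k\in\{b,a\}$: inside $h^0$, the variable $d^{b,0}$ only appears in the $k=b$ summand and $d^{a,0}$ only in the $k=a$ summand (and symmetrically for $h^1$), so the Nash fixed--point problem decouples into a bid game and an ask game, each parametrised by $(z,q)$. The boundary cases $q^{i}\phi(k)=\bar q$ collapse via the definition of $\mathcal B_\infty^2(q^{i})$ to the constant quote $\delta_\infty$, which is precisely the ``otherwise'' branch of $\Gamma^{k,i}$ and $\Gamma^{k,i}_\beta$; so without loss of generality we may restrict to the interior regime $q^{i}\phi(k)<\bar q$ for both $i\in\{0,1\}$.

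Fix $k$ and $d:=d^{k,1}\in[-\delta_\infty,\delta_\infty]$. The $d^{k,0}$--dependent part of $h^0(\cdot,d^{1},z^0,q)$ is a continuous, piecewise smooth function
\[
\phi(x)\;=\;F(x)\mathds{1}_{x\le d}\,+\,F_\beta(x;d)\mathds{1}_{x\ge d},
\]
where $F$ corresponds to the regime $x\le d$ (payment $x$) and $F_\beta(\cdot;d)$ to the regime $x\ge d$ (payment $(1-\beta)x+\beta d$). A direct first--order condition using the explicit exponential form of $\Lambda^{j}$ shows that $F$ has a unique critical point, given by the explicit formula $\tilde\Gamma^{k,0}(z^{0,k,0,:})$, and the signs of $F'$ at $\pm\infty$ identify it as the global maximiser; the identical computation for $F_\beta(\cdot;d)$ yields the global maximiser $\tilde\Gamma^{k,0}_\beta(d,z^{0,k,0,:})$. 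This accounts for the two explicit formulas in the statement. After clipping to $[-\delta_\infty,\delta_\infty]$ one obtains $\Gamma^{k,0}$ and $\Gamma^{k,0}_\beta$.

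The heart of the argument will be the comparison identity at the kink,
\[
F'(d)-F_\beta'(d)\;=\;\beta\sum_{j=0}^{1}\Lambda^{j}(d)\,e^{-\gamma^{0}(z^{0,k,0,j}+d)}\;>\;0,
\]
combined with the unimodality just established. Indeed, whenever $\tilde\Gamma^{k,0}(z)\le d$ the function $F$ is non--increasing at $d$, so by the identity $F_\beta'(d)<0$, and by unimodality of $F_\beta$ this means $F_\beta$ is decreasing on $[d,\infty)$; hence the right branch of $\phi$ is dominated by $F(d)$ and $\phi$ is globally maximised at $\tilde\Gamma^{k,0}(z)$. In the complementary regime $\tilde\Gamma^{k,0}(z)>d$ one compares $F_\beta$'s constrained optimum on $[d,\infty)$ with the boundary value: if $\tilde\Gamma^{k,0}_\beta(d,z)>d$ the right--branch maximum strictly improves on $F(d)$ and the best response is $\tilde\Gamma^{k,0}_\beta(d,z)$; otherwise both branches are dominated at the kink and the best response is $d$ itself. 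After clipping this recovers, as best response of $M_0$ to $d^{k,1}$, the three--piece expression appearing in the definition of $\Delta^{k,0}$. The symmetric statement holds for $M_1$.

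Imposing both best--response conditions simultaneously finishes the proof. Order the players so that $\Gamma^{k,i}(z^{i,k,i,:},q^{i})\le\Gamma^{k,1-i}(z^{1-i,k,1-i,:},q^{1-i})$; the comparison identity then forces the aggressive player to quote its unconstrained optimum $\Gamma^{k,i}$, while the less aggressive player's best response to $\Gamma^{k,i}$ is either $\Gamma^{k,1-i}_\beta(\Gamma^{k,i},\cdot)$ (when that value strictly exceeds $\Gamma^{k,i}$) or $\Gamma^{k,i}$ (a matching quote). This is exactly the three--case definition of $\Delta$, so $\mathcal E(z,q)=\{\Delta(z,q)\}$ for every $(z,q)$; since the fixed--point set is a singleton, the argmax conditions entering the definition of $\bar{\mathcal E}$ are automatically satisfied and $\bar{\mathcal E}=\mathcal E=\{\Delta\}$. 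I expect the main obstacle to be the comparison identity at the kink together with the unimodality of $F$ and $F_\beta$: once these are secured, the rest is a careful but essentially mechanical case analysis.
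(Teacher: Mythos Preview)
Your proposal is correct and follows essentially the same route as the paper: decompose $h^0$ and $h^1$ additively over $k\in\{b,a\}$, dispose of the inventory-boundary cases via the forced quote $\delta_\infty$, compute the first-order conditions in each of the two regimes $d^{k,0}\le d^{k,1}$ and $d^{k,0}>d^{k,1}$ to obtain the candidates $\tilde\Gamma^{k,i}$ and $\tilde\Gamma^{k,i}_\beta$, and then run a case analysis on their relative ordering to recover the three branches of $\Delta$. Your explicit comparison identity $F'(d)-F_\beta'(d)=\beta\sum_j\Lambda^j(d)e^{-\gamma^0(z^{0,k,0,j}+d)}>0$ at the kink, combined with the unimodality of $F$ and $F_\beta$, is a clean way to organise the best-response analysis that the paper leaves as ``a simple analysis of the different cases''; but the content is the same, and your final uniqueness assertion (``the comparison identity then forces the aggressive player to quote its unconstrained optimum'') is argued at the same level of detail as the paper's own one-line claim that ``this equilibrium is unique.''
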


\hs

\begin{rem}{\rm 
    The expression of the fixed-point is made complicated by the necessity of dealing with the bound $\bar q$ of the inventories of the market makers. But, if we consider only the case where inventories have not reached their bounds, the bid-ask spreads implemented by the market makers for a given vector of payment rates $z^{\ell,:}$ reduces to the following expressions. We focus here on the ask side and see that it is given by
\begin{align*}
    \delta^{\star,i,a}(z,q) = \frac{1}{\gamma^i}\log\Big( 1 + \frac{\sigma \gamma^i}{\kappa}  \Big) 
    + \frac{1}{\gamma^i} \log\Big( \frac{\tilde A_0}{A} e^{-\gamma^i z^{i,a,i,0}} 
              + \frac{\tilde A_1}{A} e^{-\gamma^i z^{i,a,i,1}}
          \Big)
\end{align*}
where $A := \sum_{j=0}^1 A^j e^{-\frac{\kappa}{\sigma}c^j}$ and $\tilde A_j := A^j e^{-\frac{\kappa}{\sigma}c^j}.$ The first part is independent of the incentives of the exchange; it is the fundamental spread performed by the market maker in the absence of incentives ($z^{\ell}=0)$. The second part allows the exchange either to reduce the ask ($z^{\ell}>0$) or to increase it ($z^{\ell}<0$). Imagine exchange $E_0$ is less attractive to agents than exchange $E_1$ (i.e. $A^0 \ll A^1$) and she wants to reduce the ask. Then, exchange $E_0$ has to provide an incentive payment  $z^{0,a,0,1}$ for what is happening on exchange $E_1$ much larger than the payment rate $z^{0,a,0,0}$ for what is happening on her own exchange $E_0$ with her own market maker.}
\end{rem}

\hs

Now, we turn to the representation of admissible incentives.  For two $\mathcal R-$valued predictable processes $Z^0= (Z^{0,b}, Z^{0,a}, Z^{0,S})$ and $Z^1= (Z^{1,b}, Z^{1,a}, Z^{1,S})$, denoting $Z := (Z^0, Z^1)$, and for arbitrary constants $Y_0 = \left(Y^0_0, Y^1_0 \right) \in \mathbb R^2$, we introduce a $\mathbb R^2-$valued process $\left( Y^{Y_0,Z}_t \right)_{t\in [0,T]} := \left(Y^{0,Y^0_0, Z}_t, Y^{1,Y^1_0, Z}_t\right)_{t\in [0,T]}$, such that for $t\in [0,T]$, for $m=0,1$, 
\begin{align}
Y_t^{m,Y^m_0,Z} := Y^m_0 +\int_0^t \Bigg( & \sum_{i,j=0}^1 \bigg(Z^{m,b,i,j}_s \mathrm dN^{b,i,j}_s + Z^{m,a,i,j}_s \mathrm dN^{a,i,j}_s \bigg) + Z^{m,S}_s \mathrm dS_s\\
& + \Big( \frac{\gamma^m \sigma^2}{2} \big(Z^{m,S}_s + Q^{m}_s \big)^2 - \mathcal H^m \big( Z_s, Q_s\big) \Big)ds \Bigg).\nonumber
\end{align}

\begin{defi}[{\bf Contract representation}]
A $\mathcal R^2-$valued predictable process $Z= (Z^0, Z^1)$ belongs to the set $\mathcal Z$ if $\xi^0 = Y^{0,Y^0_0, Z}_T$ and $\xi^1 = Y^{1,Y^1_0, Z}_T$  satisfy the integrability conditions  \eqref{integcond0}, and
$$\mathbb E^\delta \left[ \underset{t \in [0,T]}{\sup} e^{-\gamma' Y^{i,Y^i_0, Z}_t}\right] < + \infty, \quad \text{for some } \gamma' > \gamma^i.$$
\end{defi}

$\mathcal Z$ is not empty as it contains all bounded predictable processes. We define the following set:
\begin{align*}
\Xi := \Bigg\{ Y_T^{Y_0, Z} = &\left(Y^{0,Y^0_0, Z}_T, Y^{1,Y^1_0, Z}_T\right) \bigg| Y_0\in \mathbb R^2, Z \in \mathcal Z,\\
&\text{ and } V_{M_i}\left( \bar \delta^{1-i}(Y_T^{Y_0, Z}), Y_T^{i,Y_0^i,Z} \right) \ge R_i  \forall i \in \{0,1\}, \forall \bar \delta(Y_T^{Y_0, Z}) \in \overline{\text{NE}}(Y_T^{Y_0, Z})\Bigg\}.
\end{align*}

\begin{thm}\label{contractrep}
Any couple of contracts $\xi=\left(\xi^0, \xi^1\right)$ is of the form $\xi^0 = Y_T^{0, Y^0_0, Z}$ and $\xi^1 = Y_T^{1, Y^1_0, Z}$ for some $(Y_0, Z) \in \mathbb R^2 \times \mathcal Z$. In particular, $\mathcal C = \Xi$.
\end{thm}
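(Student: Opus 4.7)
My plan is to follow the standard BSDE / martingale approach to contract representation for Principal–Agent problems with CARA utilities, adapted to the two-player Nash structure. Fix a couple of contracts $\xi=(\xi^0,\xi^1)\in\mathcal{C}$. Admissibility gives $\overline{\text{NE}}(\xi)\neq\emptyset$, so I pick a Pareto-optimal equilibrium $\bar\delta(\xi)\in\overline{\text{NE}}(\xi)$. For each $i\in\{0,1\}$, I define the continuation certainty equivalent $Y^i$ of market maker $M_i$ through
$$-\exp\bigl(-\gamma^i (Y_t^i+PL_t^{i,\bar\delta(\xi)})\bigr) := \esssup_{\delta^i\in\mathcal{A}^i(\bar\delta^{1-i}(\xi))}\mathbb{E}^{\delta^i\otimes_i\bar\delta^{1-i}(\xi)}\Bigl[-e^{-\gamma^i(PL_T^{i,\delta}+\xi^i)}\Big|\mathcal{F}_t\Bigr],$$
so that $Y_T^i=\xi^i$ and $Y_0^i=V_{M_i}(\bar\delta^{1-i}(\xi),\xi^i)$. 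The integrability conditions \eqref{integcond0} on $\xi^i$ guarantee these objects are well defined.

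Next, I invoke the martingale representation theorem on the augmented filtration $\mathbb{F}$, which is generated by $W$ and the marked point processes $N^{k,i',j}$. Since the above value process is a $\mathbb{P}^{\bar\delta(\xi)}$-martingale by dynamic programming, it decomposes as a sum of stochastic integrals against $dW_s$ and the compensated jumps $dN_s^{k,i',j}-\lambda_s^{k,i',j,\bar\delta(\xi)}\,ds$. Applying Itô's formula to $-\exp(-\gamma^i(Y_t^i+PL_t^{i,\bar\delta(\xi)}))$, I identify the Brownian integrand with $Z^{i,S}$ and the jump integrands with $Z^{i,k,i',j}$, and then impose vanishing of the drift. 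At the Nash play $\bar\delta(\xi)$ the instantaneous Hamiltonian of $M_i$ equals $h^i(\bar\delta^i(\xi),\bar\delta^{1-i}(\xi),Z_s^i,Q_s)$, which by Lemma \ref{FPMM} — where $\bar{\mathcal{E}}=\mathcal{E}=\{\Delta\}$ — is exactly $\mathcal{H}^i(Z_s,Q_s)$. The surviving Itô-correction term $\tfrac{\gamma^i\sigma^2}{2}(Z_s^{i,S}+Q_s^i)^2$ comes from the $Q_s^i\,dS_s$ piece of $PL^i$. Together these identifications give precisely the drift appearing in the definition of $Y^{i,Y_0^i,Z}$, so $\xi^i=Y_T^{i,Y_0^i,Z}$.

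It remains to check $Z\in\mathcal{Z}$: the exponential integrability conditions \eqref{integcond0} on $\xi^i$ translate, via classical a priori estimates on the above exponential-utility BSDE and the equivalence of measures $\mathbb{P}^\delta\sim\mathbb{P}$, into $\mathbb{E}^\delta[\sup_t e^{-\gamma' Y_t^i}]<\infty$ uniformly in $\delta\in\mathcal{A}$ for some $\gamma'>\gamma^i$. This proves $\mathcal{C}\subset\Xi$; the reverse inclusion is immediate from the definition of $\Xi$, which already embeds the reservation-utility and integrability constraints characterizing $\mathcal{C}$. The main obstacle is the driver identification in the previous paragraph: pinning it down as $\mathcal{H}^i(Z_s,Q_s)$ requires a careful argument combining the dynamic-programming principle for $M_i$ against the frozen strategy $\bar\delta^{1-i}(\xi)$ with the uniqueness of the Hamiltonian fixed point from Lemma \ref{FPMM}, so that the best-response structure encoded in $\mathcal{H}^i$ matches the optimal instantaneous play of $M_i$ at the Nash equilibrium.
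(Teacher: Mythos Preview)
Your plan follows the same architecture as the paper's proof: pick a Pareto-optimal equilibrium $\bar\delta$, define the dynamic value process of each market maker, apply martingale representation to extract the integrands $Z^i$, and identify the drift with $\mathcal H^i$ via Lemma~\ref{FPMM}. The overall route is correct.

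There is, however, a real gap at the step you yourself call ``the main obstacle''. You only invoke that the value process is a $\mathbb P^{\bar\delta}$-martingale at the Nash control; this gives drift zero \emph{at} $\bar\delta$ and hence a running term $h^i(\bar\delta^i_s,\bar\delta^{1-i}_s,Z^i_s,Q_s)$. To replace this by $\mathcal H^i(Z_s,Q_s)$ you appeal to Lemma~\ref{FPMM}, but that lemma only applies once you know $\bar\delta_s\in\mathcal E(Z_s,Q_s)$, i.e.\ that $\bar\delta^i_s$ is the pointwise maximiser of $h^i(\cdot,\bar\delta^{1-i}_s,Z^i_s,Q_s)$ for each $i$. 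That is not a consequence of $\bar\delta$ being a Nash equilibrium of the \emph{global} game with terminal payoff $\xi$; it must be deduced from the DPP, which makes the value process a $\mathbb P^{\delta^i\otimes\bar\delta^{1-i}}$-\emph{supermartingale} for every admissible $\delta^i$. The paper uses this supermartingale property for arbitrary controls (together with the martingale property at $\bar\delta$ and a sign argument on $U$) to force $h^i(\bar\delta^i_s,\ldots)=H^i(\bar\delta^{1-i}_s,\ldots)$; only then does Lemma~\ref{FPMM} close the loop. In the same step the paper must also rule out a predictable singular part in the Doob--Meyer decomposition of the value process, a point your sketch does not address at all.

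A minor correction: the reverse inclusion $\Xi\subset\mathcal C$ is not quite ``immediate from the definition'', since membership in $\mathcal C$ requires $\overline{\text{NE}}(\xi)\neq\emptyset$. This is supplied by Corollary~\ref{NashMM}, which itself depends on the forward direction of the representation just established.
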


\begin{cor}\label{NashMM}
For any admissible contracts  $Y^{Y_0,Z}_T := \left(Y^{0,Y^0_0, Z}_T, Y^{1,Y^1_0, Z}_T\right) \in \Xi$ offered by the exchanges, there exists an optimal Nash equilibrium given by $\Big(\Delta (Z_t, Q_t) \Big)_{t \in [0,T]} \in \overline{\text{NE}}\left( Y^{Y_0,Z}_T\right)$, where the function $\Delta$ is defined in  \Cref{FPMM}.
\end{cor}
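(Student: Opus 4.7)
The plan is to run a verification argument in the spirit of \cite{ElEuch21}. Fix an admissible contract $Y^{Y_0,Z}_T \in \Xi$ with representation $(Y_0, Z) \in \mathbb R^2 \times \mathcal Z$ provided by \Cref{contractrep}, and, for each $\delta = (\delta^0, \delta^1) \in \mathcal A$, introduce the exponential utility process $V^{i,\delta}_t := -\exp\bigl(-\gamma^i(PL^{i,\delta}_t + Y^{i,Y^i_0,Z}_t)\bigr)$ for $i \in \{0,1\}$. The key object is the drift of $V^{i,\delta}$ under $\mathbb P^\delta$, which I aim to show is proportional to $\mathcal H^i - h^i$, so that the pointwise optimality of $\Delta$ translates into the global Nash equilibrium property.

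Applying Itô's formula to $V^{i,\delta}$ under $\mathbb P^\delta$, the Brownian part produces a local-martingale term $\sigma(Q^i + Z^{i,S})V^{i,\delta}_{t-}\gamma^i\, dW_t$ together with a quadratic-variation drift $\tfrac12 (\gamma^i)^2 \sigma^2(Q^i+Z^{i,S})^2 V^{i,\delta}_{t-}$ that exactly cancels the term $\tfrac12\gamma^i\sigma^2(Q^i+Z^{i,S})^2$ built into the drift of $Y^{i,Y^i_0,Z}$. Summing the eight $\mathbb P^\delta$-compensated jumps of the $N^{k,i',j}$ and regrouping (with the convention $\delta^{k,\ell} = \delta_\infty$ at inventory boundaries, which zeroes out the corresponding intensities) reproduces, up to sign, the Hamiltonian $h^i$ of market maker $i$. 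One obtains
$$
dV^{i,\delta}_t = V^{i,\delta}_{t-}\gamma^i\bigl(\mathcal H^i(Z_t, Q_{t-}) - h^i(\delta^0_t, \delta^1_t, Z_t, Q_{t-})\bigr)\, dt + dM^{i,\delta}_t,
$$
for some $\mathbb P^\delta$-local martingale $M^{i,\delta}$. The $\delta_\infty$-boundedness of $\delta \in \mathcal A$ and the integrability properties satisfied by $Z \in \mathcal Z$ upgrade this local (super)martingale into a true one by a standard uniform-integrability argument.

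Now set $\delta^*_t := \Delta(Z_t, Q_{t-})$. By \Cref{FPMM} this is the unique pointwise fixed point, so $h^i(\delta^{*,0}_t, \delta^{*,1}_t, Z_t, Q_{t-}) = \mathcal H^i(Z_t, Q_{t-})$; hence $V^{i,\delta^*}$ is a true $\mathbb P^{\delta^*}$-martingale, yielding $J_{M_i}(\delta^{*,i}, \delta^{*,1-i}, Y^{i,Y^i_0,Z}_T) = V^{i,\delta^*}_0 = -e^{-\gamma^i Y^i_0}$. For any unilateral deviation $\delta^i \in \mathcal A^i(\delta^{*,1-i})$, the pointwise maximality of $\delta^{*,i}$ over $h^i(\cdot, \delta^{*,1-i}_t, Z_t, Q_{t-})$ yields $h^i \le \mathcal H^i$ along the trajectory; since $V^{i,\delta} < 0$ the drift is then non-positive, $V^{i,\delta^i,\delta^{*,1-i}}$ is a supermartingale, and $J_{M_i}(\delta^i, \delta^{*,1-i}, Y^{i,Y^i_0,Z}_T) \le -e^{-\gamma^i Y^i_0}$. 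This proves $\delta^* \in \text{NE}(Y^{Y_0,Z}_T)$. Conversely, the same verification shows that any $\bar\delta \in \text{NE}(Y^{Y_0,Z}_T)$ must be a pointwise fixed point almost everywhere, i.e. $(\bar\delta^0_t, \bar\delta^1_t) \in \mathcal E(Z_t, Q_{t-})$; uniqueness in \Cref{FPMM} then forces $\bar\delta = \delta^*$, so that $\overline{\text{NE}}(Y^{Y_0,Z}_T) = \text{NE}(Y^{Y_0,Z}_T) = \{\delta^*\}$, which gives the Pareto-optimality claim for free.

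The main obstacle is the bookkeeping of the Itô expansion: the eight $\mathbb P^\delta$-compensated Poisson integrals must be regrouped to match the definition of $h^i$ precisely, the inter-player coupling through $\beta(\delta^{k,i} - \underline{\delta^k})$ must be tracked consistently across the jumps of both $N^{k,0,\cdot}$ and $N^{k,1,\cdot}$, and the $\delta_\infty$-boundary convention must be handled so that $\Delta$ takes values in $\mathcal M_2(\mathcal B_\infty, q)$. The genuine (not merely local) (super)martingale property also requires combining the $\delta_\infty$-boundedness of $\mathcal A$ with the $\mathbb E^\delta[\sup_t e^{-\gamma' Y^i_t}] < \infty$ condition from the definition of $\mathcal Z$, following the Sannikov--Cvitanić--Possamaï--Touzi verification scheme.
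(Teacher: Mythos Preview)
Your forward verification is correct and is essentially what the paper's joint proof of \Cref{contractrep} and \Cref{NashMM} leaves implicit in its final step: the It\^o expansion of $V^{i,\delta}_t=-\exp\bigl(-\gamma^i(PL^{i,\delta}_t+Y^{i,Y^i_0,Z}_t)\bigr)$ produces the drift $\gamma^i V^{i,\delta}_{t-}\bigl(\mathcal H^i(Z_t,Q_{t-})-h^i(\delta^0_t,\delta^1_t,Z^i_t,Q_{t-})\bigr)$, which vanishes at $\delta^*=\Delta(Z,Q)$ and is nonpositive for unilateral deviations from $\delta^{*,1-i}$, since then $h^i(\cdot,\delta^{*,1-i},Z^i,Q)\le H^i(\delta^{*,1-i},Z^i,Q)=\mathcal H^i(Z,Q)$. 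This establishes $\delta^*\in\text{NE}$ with value $-e^{-\gamma^i Y^i_0}$.

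The gap is in your converse step. You write that ``the same verification shows that any $\bar\delta\in\text{NE}$ must be a pointwise fixed point'', but your drift identity does \emph{not} give this. The term $\mathcal H^i(Z,Q)$ baked into $Y^i$ equals $H^i(\Delta^{1-i}(Z,Q),Z^i,Q)$, not $H^i(\bar\delta^{1-i},Z^i,Q)$; for an arbitrary $\bar\delta^{1-i}$ there is no reason that $h^i(\delta^i,\bar\delta^{1-i},Z^i,Q)\le\mathcal H^i(Z,Q)$, so $V^{i,\delta^i,\bar\delta^{1-i}}$ need not be a supermartingale and the global optimality of $\bar\delta^i$ against $\bar\delta^{1-i}$ does not collapse to pointwise maximisation of $h^i$ via your It\^o computation. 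The paper closes this gap differently: it applies the dynamic programming principle (Lemma~\ref{PPD}) to the \emph{dynamic value process} $V_{M_i}(t,\bar\delta^{1-i},\xi^i)$, takes a Doob--Meyer decomposition and a martingale representation to extract a BSDE, and only then deduces that the Nash control $\bar\delta^i$ is a pointwise Hamiltonian maximiser. Combined with \Cref{FPMM} this forces every Nash equilibrium to coincide with $\Delta(Z,Q)$, whence $\text{NE}=\{\delta^*\}$ and Pareto-optimality is immediate. Since you already invoke \Cref{contractrep}, whose proof contains exactly this machinery, you may legitimately cite that argument for the converse; but the phrase ``the same verification'' is misleading and, taken at face value, leaves Pareto-optimality unproved.
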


From now on, we make the following assumption:

\begin{assume}
For any admissible contracts  $Y^{Y_0,Z}_T := \left(Y^{0,Y^0_0, Z}_T, Y^{1,Y^1_0, Z}_T\right) \in \Xi$ offered by the exchanges, the market makers follow the optimal bid-ask policies $\Big(\Delta (Z_t, Q_t) \Big)_{t \in [0,T]}$ given above.
\end{assume}

We now denote $\hat\delta (\xi) = \Big(\Delta (Z_t, Q_t) \Big)_{t \in [0,T]}\in \overline{\text{NE}}\left( Y^{Y_0,Z}_T\right)$. 

\begin{rem}\label{val_func_mm_0}
    Under the previous assumption, it is easy to check that for $i\in \{0,1\}$,
$$V_{M_i}\left( \hat \delta^i(Y^{Y_0,Z}_T), Y^{i,Y^i_0, Z}_T \right) = -e^{-\gamma Y^i_0}.$$
\end{rem}

\subsection{Absence of incentives}\label{ssec:no-incentive}

In order to get a benchmark for the next cases, we study the situation in which the exchanges do not propose a contract to their market makers, i.e. $\xi = (\xi^0, \xi^1) = (0,0)$.\\

To find the corresponding optimal Nash equilibrium of \Cref{NashMM}, one needs to find the representation of the couple of contracts $(0,0)$ as in \Cref{contractrep}. Hence, we need to solve the following coupled BSDEs for $m=0,1$:
\begin{align}\label{00BSDEsystem}
\begin{cases}
dY_s^{m} = \sum_{i,j=0}^1 \bigg(Z^{m,b,i,j}_s \mathrm dN^{b,i,j}_s + Z^{m,a,i,j}_s \mathrm dN^{a,i,j}_s \bigg)& + Z^{m,S}_s \mathrm dS_s\\
& + \Big( \frac{\gamma^m \sigma^2}{2} \big(Z^{m,S}_s + Q^{m}_s \big)^2 - \mathcal H^m \big( Z_s, Q_s\big) \Big)ds,\\
Y_T^0 = Y_T^1 = 0.
\end{cases}    
\end{align}

Note that they are coupled through the unknown process $Z = (Z^0,Z^1) \in \mathcal Z$, with $$Z^l =  \left( \left( Z^{l,b,i,j} \right)_{i,j = 0,1}, \left( Z^{l,a,i,j} \right)_{i,j = 0,1}, Z^{l,S}  \right)$$
for $l \in \{0,1\}$.\\

Consider the following system of PDEs for $m=0,1$:
\begin{align}\label{00PDEsystem}
\begin{cases}
0 = \partial_t \hat w_m(t,q) - \frac{\gamma^m \sigma^2}{2} (q^m)^2 +\mathcal H^m \Bigg( \hat \zeta \bigg(\Big(\hat w_l(t,q) \Big)_{l = 0,1}, \Big( \hat w_l(t,q + \phi(k)e^i) \Big)_{\substack{k = b,a\\ l, i =0,1 }} \bigg) ,q \Bigg),\\
\hat w_0(T,q) = \hat w_1(T,q) = 0,
\end{cases}
\end{align}
$\forall (t,q) \in [0,T) \times \bar{\mathcal Q}^2$, where $e^0 = (1,0)^\intercal$, $e^1 = (0,1)^\intercal$,
\begin{align*}
\hat \zeta &\bigg(\Big(\hat w_l(t,q) \Big)_{l = 0,1}, \Big( \hat w_l(t,q + \phi(k)e^i) \Big)_{\substack{k = b,a\\ l, i =0,1 }} \bigg)\\
&:= \left(\hat \zeta^0 \bigg(\hat w_0(t,q) , \Big( \hat w_0(t,q + \phi(k)e^i) \Big)_{\substack{k = b,a\\  i =0,1 }} \bigg), \hat \zeta^1 \bigg(\hat w_0(t,q) , \Big( \hat w_0(t,q + \phi(k)e^i) \Big)_{\substack{k = b,a\\ i =0,1 }} \bigg) \right) \in \mathcal R^2,    
\end{align*}
with
\begin{align*}
    & \hat \zeta^{0,b,i,j} \bigg(\hat w_0(t,q) ,  \hat w_0(t,q + e^i) \bigg) = \hat w_0(t,q + e^i) - \hat w_0(t,q ),\\
    & \hat \zeta^{0,a,i,j} \bigg(\hat w_0(t,q), \hat w_0(t,q -e^i) \bigg) = \hat w_0(t,q - e^i) - \hat w_0(t,q ),\\
    & \hat \zeta^{0,S} = 0,
\end{align*}
and similarly for Exchange $1$.
\hs

The following Lemma is a straightforward application of Cauchy-Lipschitz Theorem and Itô's formula:

\begin{lemme}\label{solE00}
There exists a unique solution $\left(\hat w_0, \hat w_1 \right)$ to the system of PDEs \eqref{00PDEsystem}. Moreover, there exists a unique solution to the system of BSDEs \eqref{00BSDEsystem}, which is given by 
\begin{align*}
    Z^{l,S}_t = 0, \quad Z^{l,b,i,j}_t = \hat w_l\big(t, Q_{t-} + e^i\big) - \hat w_l\big(t,Q_{t-}\big),\quad Z^{l,a,i,j}_t = \hat w_l\big(t, Q_{t-} - e^i\big) - \hat w_l\big(t,Q_{t-}\big),
\end{align*}
$\forall t \in [0,T]$, $\forall (l,i,j) \in \{0,1\}^3$.\\

In particular, $\forall (l,i) \in \{0,1\}^2$, $\forall k \in \{b,a\}$, $Z^{l,k,i,0} = Z^{l,k,i,1}.$
\end{lemme}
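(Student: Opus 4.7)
The plan is to first solve the PDE system and then construct the BSDE solution from it by Itô's formula.

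\textbf{Step 1: PDE as a finite-dimensional ODE.} Since the inventories $Q^0, Q^1$ take integer values in $\bar{\mathcal Q} = [-\bar q, \bar q]$, the state space $\bar{\mathcal Q}^2$ is finite. Writing $w(t) := \bigl(\hat w_m(t,q)\bigr)_{m\in\{0,1\},\, q \in \bar{\mathcal Q}^2}$, system \eqref{00PDEsystem} becomes a backward ODE
\[
\dot w(t) = F(w(t)), \qquad w(T) = 0,
\]
where the vector field $F$ depends on $w$ only through the first-order differences $\hat w_l(t,q+\phi(k)e^i) - \hat w_l(t,q)$ that enter into $\hat\zeta$.

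\textbf{Step 2: Cauchy–Lipschitz.} I need $F$ to be (locally) Lipschitz. The driver $\mathcal H^m(z,q)$ is built from $h^m$, which is a smooth function of $(d^0, d^1, z^m)$ on the compact set $\mathcal B_\infty^2 \times \mathcal B_\infty^2$. By Lemma \ref{FPMM}, the optimal fixed point is given explicitly by $\Delta(z,q)$, which is a continuous (in fact Lipschitz) function of $z$ built from the logs and the max/min-truncations in the definitions of $\tilde\Gamma, \tilde\Gamma_\beta, \Gamma, \Gamma_\beta$. By the envelope-type formula and the fact that the admissible controls live in the bounded set $\mathcal B_\infty^2$, the map $z\mapsto \mathcal H^m(z,q)$ is Lipschitz. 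Composing with the (linear) differences that define $\hat\zeta$ gives a globally Lipschitz $F$, and Cauchy–Lipschitz yields a unique global solution $(\hat w_0, \hat w_1)$ running backwards from $T$.

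\textbf{Step 3: From PDE to BSDE via Itô.} Set
\[
Y^m_t := \hat w_m(t, Q_t), \quad Z^{m,S}_t := 0, \quad Z^{m,k,i,j}_t := \hat w_m\bigl(t, Q_{t-}+\phi(k)e^i\bigr) - \hat w_m(t, Q_{t-}),
\]
for $k\in\{b,a\}$, $(l,i,j)\in\{0,1\}^3$. Applying the Itô formula for pure-jump processes to $\hat w_m(t,Q_t)$ under $\mathbb P^\delta$ yields
\[
d Y^m_t = \partial_t \hat w_m(t, Q_{t-})\, dt + \sum_{k,i,j} Z^{m,k,i,j}_t\, dN^{k,i,j}_t.
\]
Using the PDE \eqref{00PDEsystem} to substitute $\partial_t \hat w_m(t,Q_{t-}) = \tfrac{\gamma^m\sigma^2}{2}(Q^m_{t-})^2 - \mathcal H^m(\hat\zeta(\cdot),Q_{t-})$, and observing that by construction $\hat\zeta$ evaluated at the neighboring values of $\hat w_l$ coincides with the process $Z_t$ I just defined, I recover exactly the dynamics of \eqref{00BSDEsystem} with $Z^{m,S}=0$ (so that $(Z^{m,S}_t + Q^m_t)^2 = (Q^m_t)^2$). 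The terminal condition $Y^m_T = 0$ follows from $\hat w_m(T,\cdot)\equiv 0$. The fact that $Z^{l,k,i,j}$ does not depend on $j$ is built into the definition, giving the last assertion.

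\textbf{Step 4: BSDE uniqueness and admissibility.} Since $\hat w_m$ is continuous on $[0,T]\times\bar{\mathcal Q}^2$ (which is compact, with finite spatial part), the processes $Y^m$ and $Z$ are bounded, so $Z \in \mathcal Z$ and the integrability requirements are trivially satisfied. For uniqueness of the BSDE solution, I would subtract two candidate solutions and use the Lipschitz property of the driver together with the martingale property of the compensated Poisson integrals under $\mathbb P^\delta$ (standard BSDE comparison argument for Lipschitz drivers with bounded data), propagating the equality backwards from the common terminal condition $Y_T = 0$. The main technical point, and the only one that is not entirely routine, is the verification of the Lipschitz property of $\mathcal H^m$ in Step 2, which in turn relies crucially on the explicit representation of the fixed point $\Delta$ obtained in Lemma \ref{FPMM}.
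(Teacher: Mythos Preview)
Your proof is correct and follows exactly the route the paper indicates: the authors simply state that the lemma ``is a straightforward application of Cauchy--Lipschitz Theorem and It\^o's formula,'' and your Steps~1--3 spell out precisely that argument (finite state space $\Rightarrow$ ODE, Lipschitz driver via the explicit fixed point $\Delta$ from Lemma~\ref{FPMM}, then It\^o to pass from the PDE solution to the BSDE solution). One small overclaim: the vector field $F$ is only \emph{locally} Lipschitz in general (the exponentials in $h^m$ are not globally Lipschitz in $z$), so strictly speaking Cauchy--Lipschitz gives local existence and one needs a non-explosion argument to reach $t=0$; the paper does not address this either, and it is indeed routine here since the optimized Hamiltonian admits an a priori linear bound in the differences of $\hat w_m$.
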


This allows to compute the optimal Nash equilibrium of the market makers given in \Cref{NashMM}. 

\begin{rem}
    Following the preceeding Lemma, when there are no incentive mechanism, a market market implements a spread given by
\begin{align*}
    \delta^{\star,i,a}(z,q) = \frac{1}{\gamma^i}\log\Big( 1 + \frac{\sigma \gamma^i}{\kappa}  \Big)  -   z^{i,a,i},
\end{align*}
where $z^{i,a,i}:= z^{i,a,i, 0} = z^{i,a,i, 1}.$
\end{rem}

\subsection{Incentive for only one market maker}\label{ssec:1-incentive}

Let us now turn to the asymmetric case where exchange $E_0$ proposes a contract to its market maker, whereas exchange $E_1$ stays passive, i.e. $\xi^1 = 0.$  The case of competition between exchanges is presented in \Cref{ssec:2-incentive}.

\hs

Inspired by \cite{Baldacci21}, and in accordance with the results of the previous section, to find the optimal contract of exchange $E_0$, we will now restrict ourselves to a subset of admissible contracts under which we can derive the optimal strategy of the exchange in a semi-explicit form. We therefore introduce the following set:
\begin{align*}
\Xi' := \Bigg\{ Y_T^{Y_0, Z} = &\left(Y^{0,Y^0_0, Z}_T, Y^{1,Y^1_0, Z}_T\right) \bigg| Y_0\in \mathbb R^2, Z \in \mathcal Z, Z^{i,k,i,0} = Z^{i,k,i,1} =: Z^{i,k,i}\ \forall i \in \{0,1\}, \forall k \in \{b,a\},\\
&\qquad \qquad \text{ and } V_{M_i}\left( \bar \delta^{1-i}(Y_T^{Y_0, Z}), Y_T^{i,Y_0^i,Z} \right) \ge R_i \,\,\, \forall i \in \{0,1\}, \forall \bar \delta(Y_T^{Y_0, Z}) \in \overline{\text{NE}}(Y_T^{Y_0, Z})\Bigg\}.
\end{align*}

For each market order, the exchange $E_i$ receives a fixed fee $c^i>0.$ Given that $\xi^1 = 0$, the optimization problem of exchange $E_0$ is given by
\begin{align}\label{pb10}
    V_{E_0}(0) := \!\!\!\!\underset{\xi^0 \in  {\Xi'}^0(0)}{\sup} \mathbb E^{\hat \delta\big((\xi^0, 0)\big) } \left[-e^{-\eta^0 \left(c^0\left( N^{b,0,0}_T + N^{b,1,0}_T + N^{a,0,0}_T + N^{a,1,0}_T \right) - \xi^0 \right)} \right],
\end{align}
where ${\Xi'}^i \left(\xi^{1-i}\right) := \left\{ \xi^i: \xi^i \otimes_i \xi^{1-i} \in {\Xi'} \right\},$ $i\in\{0,1\}.$ In view of Remark \ref{val_func_mm_0}, $E_0$ has to choose an initial value $Y^0_0 = -1/\gamma \log(-R^0)$ for his contract's representation, and this term can be factored out of the equation so that we only optimize over the admissible processes $Z^0$.

\hs

As before, we need to get a representation of the contract $\xi^1 = 0$ under the form
\begin{align*}
0 = \xi^1 = Y_T^{1,Y^1_0,Z} := Y^1_0 +\int_0^T \Bigg( & \sum_{i,j=0}^1 \bigg(Z^{1,b,i,j}_s \mathrm dN^{b,i,j}_s + Z^{1,a,i,j}_s \mathrm dN^{a,i,j}_s \bigg) + Z^{1,S}_s \mathrm dS_s\\
& + \Big( \frac{\gamma^1 \sigma^2}{2} \big(Z^{1,S}_s + Q^{1}_s \big)^2 - \mathcal H^1 \big(Z_s, Q_s\big) \Big)ds \Bigg).
\end{align*}
In other words, given the process $Z^0$ representing the contract of exchange $E_0$, with $$Z^0 =  \left( \left( Z^{0,b,i,j} \right)_{i,j = 0,1}, \left( Z^{0,a,i,j} \right)_{i,j = 0,1}, Z^{0,S}  \right),$$
one needs to solve the BSDE
\begin{align}\label{0BSDE}
\begin{cases}
dY_s = \sum_{i,j=0}^1 \bigg(Z^{1,b,i,j}_s \mathrm dN^{b,i,j}_s + Z^{1,a,i,j}_s \mathrm dN^{a,i,j}_s \bigg) + Z^{1,S}_s \mathrm dS_s + \Big( \frac{\gamma^1 \sigma^2}{2} \big(Z^{1,S}_s + Q^{1}_s \big)^2 - \mathcal H^1 \big(Z_s, Q_s\big) \Big)ds,\\
Y_T = 0,
\end{cases}    
\end{align}
where $Z=(Z^0,Z^1)$, with
$$Z^1 =  \left( \left( Z^{1,b,i,j} \right)_{i,j = 0,1}, \left( Z^{1,a,i,j} \right)_{i,j = 0,1}, Z^{1,S}  \right).$$

\hs

Let us define the function $g^{0,S}: \mathbb R \times \bar{\mathcal Q}  \rightarrow \mathbb R$ such that
$$g^{0,S}\left(z^{0,S}, q^0 \right) = \frac{\eta^0 \sigma^2}{2} \left( \gamma^0 \left(z^{0,S} + q^0\right)^2 + \eta^0 \left(z^{0,S}\right)^2 \right) \quad \forall \left(z^{0,S}, q^0 \right) \in \mathbb R \times \bar{\mathcal Q},$$

and the function $G^{0,S}:\bar{\mathcal Q}  \rightarrow \mathbb R$ given by 
$$G^{0,S}(q^0) = \underset{z^{0,S}\in \mathbb R}{\sup} g^{0,S}\left(z^{0,S}, q^0 \right).$$

Let us also define for $k\in\{b,a\}$ the function $g^{0,k}:\mathbb R \times \mathbb R^2 \times \mathbb R \times \bar{\mathcal Q}^2  \times \mathbb R \times \mathbb R \times \mathbb R\rightarrow \mathbb R$ such that
\begin{align*}
& g^{0,k} \left(z^{0,k,0}, z^{0,k,1,:}, z^{1,k,1},q,  y,y', y'' \right)\\
& =  \mathds 1_{\{q^0 \phi(k) < \bar q\}} \left(\Lambda^1 \left( \Delta^{k,0}\left( \bar z^k, q\right) \right) \left( e^{\eta^0 z^{0,k,0} }y' - y \left(1 + \eta^0 \frac{1 - e^{-\gamma^0 \left(z^{0,k,0} + \Delta^{k,0}\left( \bar z^k, q\right) - \beta \left(\Delta^{k,0}\left( \bar z^k, q\right) - \underline{\Delta^{k}}\left( \bar z^k, q\right) \right) \right)}}{\gamma^0} \right) \right) \right.\\
&\  \left. + \Lambda^0 \left( \Delta^{k,0}\left( \bar z^k, q\right) \right) \left( e^{\eta^0 \left(z^{0,k,0} - c^0 \right)}y' - y \left(1 + \eta^0 \frac{1 - e^{-\gamma^0 \left(z^{0,k,0} + \Delta^{k,0}\left( \bar z^k, q\right) - \beta \left(\Delta^{k,0}\left( \bar z^k, q\right) - \underline{\Delta^{k}}\left( \bar z^k, q\right) \right) \right)}}{\gamma^0} \right) \right) \right)\\
& \ + \mathds 1_{\{q^1 \phi(k) < \bar q\}} \left(\Lambda^1 \left( \Delta^{k,1}\left( \bar z^k, q\right) \right) \left( e^{\eta^0 z^{0,k,1,1} }y'' -y\left(1 +\eta^0 \frac{1 - e^{-\gamma^0 \left(z^{0,k,1,1} \right)}}{\gamma^0}\right)  \right) \right.\\
&\  \left. + \Lambda^0 \left( \Delta^{k,1}\left( \bar z^k, q\right) \right) \left( e^{\eta^0 \left(z^{0,k,1,0} - c^0 \right)}y'' - y\left(1 + \eta^0 \frac{1 - e^{-\gamma^0 \left(z^{0,k,1,0}  \right)}}{\gamma^0}  \right) \right) \right),
\end{align*}
with $\bar z^k = \left(z^{0,k,0}, z^{1,k,1} \right)$, and the function $G^{0,k}:\mathbb R \times \bar{\mathcal Q}^2  \times \mathbb R \times \mathbb R\rightarrow \mathbb R$ such that
$$G^{0,k} \left(z^{1,k,1},q,  y,y' \right) = \underset{z^{0,k,0}, z^{0,k,1,:} \in \mathbb R \times \mathbb R^2}{\sup} g^{0,k} \left(z^{0,k,0}, z^{0,k,1,:}, z^{1,k,1},q,  y,y' \right).$$

We consider the following system of PDEs:
\begin{align}\label{10PDEsystem}
\begin{cases}
&0 = \partial_t \hat v_0(t,q) + G^{0,S}(q^0)\\
& \qquad \qquad+ \underset{k\in \{b,a\}}{\sum} G^{0,k} \Bigg( \check \zeta^{1,k,1}\Big(\hat v_1(t,q) , \hat v_1(t,q + \phi(k)e^1) \Big), q, \hat v_0(t,q), \hat v_0(t,q + \phi(k)e^0), \hat v_0(t,q + \phi(k)e^1)\Bigg),\\
&0 = \partial_t \hat v_1(t,q) - \frac{\gamma^1 \sigma^2}{2} (q^1)^2 +\mathcal H^1 \left( \check \zeta \bigg(q, \Big(\hat v_l(t,q) \Big)_{l = 0,1}, \Big( \hat v_l(t,q + \phi(k)e^i) \Big)_{\substack{k = b,a\\ l, i =0,1 }} \bigg) ,q \right),\\
&\hat v_0(T,q) = -1, \quad \hat v_1(T,q) = 0, 
\end{cases}
\end{align}
$\forall (t,q) \in [0,T) \times \bar{\mathcal Q}^2$,
where
\begin{align*}
\check \zeta &\bigg(q, \Big(\hat v_l(t,q) \Big)_{l = 0,1}, \Big( \hat v_l(t,q + \phi(k)e^i) \Big)_{\substack{k = b,a\\ l, i =0,1 }} \bigg)\\
&= \left(\check \zeta^0 \bigg(q,\hat v_0(t,q) , \Big( \hat v_0(t,q + \phi(k)e^i) \Big)_{\substack{k = b,a\\  i =0,1 }} \bigg), \check \zeta^1 \bigg(\hat v_1(t,q) , \Big( \hat v_1(t,q + \phi(k)e^i) \Big)_{\substack{k = b,a\\ i =0,1 }} \bigg) \right) \in \mathcal R^2,    
\end{align*}
with $\check \zeta^{0,k,0,0} = \check \zeta^{0,k,0,1} =: \check \zeta^{0,k,0}$ for $k\in \{b,a\}$, and
\begin{align*}
    & \check \zeta^{0,k,0} \bigg(q, \Big(\hat v_l(t,q) , \hat v_l(t,q + \phi(k)e^l) \Big)_{l = 0,1}  \bigg)\\
    & = \begin{cases}
    \tilde \zeta^{0,k,0} \left( \hat v_0(t,q) ,  \hat v_0(t,q + \phi(k)e^0)\right) \\
    \quad \text{if} \  \Delta^{k,0}\left(\tilde \zeta^{k} \left(\Big(\hat v_l(t,q) ,\hat v_l(t,q + \phi(k)e^l) \Big)_{l = 0,1} \right), q \right) \le \Delta^{k,1} \left(\tilde \zeta^{k} \left( \Big(\hat v_l(t,q) , \hat v_l(t,q + \phi(k)e^l) \Big)_{l = 0,1}\right), q\right),\\
    \tilde \zeta^{0,k,0}_\beta \left( \hat v_0(t,q) ,  \hat v_0(t,q + \phi(k)e^0)\right) \  \text{otherwise},
    \end{cases}\\
    & \check \zeta^{0,k,1,0} \bigg(\hat v_0(t,q), \hat v_0(t,q + \phi(k)e^1) \bigg) = \frac{\eta^0}{\eta^0 + \gamma^0} c^0 + \frac{1}{\eta^0 + \gamma^0} \log \left(\frac{\hat v_0(t,q)}{\hat v_0(t,q + \phi(k)e^1)} \right),\\
    & \check \zeta^{0,k,1,1} \bigg(\hat v_0(t,q), \hat v_0(t,q + \phi(k)e^1) \bigg) = \frac{1}{\eta^0 + \gamma^0} \log \left(\frac{\hat v_0(t,q)}{\hat v_0(t,q + \phi(k)e^1)} \right),\\
    & \check \zeta^{0,S}(q^0) = -\frac{\gamma^0}{\gamma^0 + \eta^0}q^0,
\end{align*}
and for $j\in\{0,1\}$
\begin{align*}
     & \check \zeta^{1,k,1,j} \bigg(\hat v_1(t,q) ,  \hat v_1(t,q + \phi(k)e^1) \bigg) = \hat v_1(t,q +\phi(k) e^1) - \hat v_1(t,q ),\\
    & \check \zeta^{1,k,0,j} \bigg(\hat v_1(t,q) ,  \hat v_1(t,q + \phi(k)e^0) \bigg) = \hat v_1(t,q +\phi(k) e^0) - \hat v_1(t,q ),\\
    & \check \zeta^{1,S} = 0.
\end{align*}

In the above, we have
\begin{align*}
&\tilde \zeta^{0,k,0} \left( \hat v_0(t,q) ,  \hat v_0(t,q + \phi(k)e^0)\right)\\
&\ = \frac{1}{\eta^0} \left[ \log \left( \frac{\hat v_0(t,q)}{\hat v_0(t,q + \phi(k)e^0)}\right) + \log \left(1 - \frac{\sigma^2 \gamma^0 \eta^0}{\left(  \kappa + \sigma \gamma^0\right)\left(  \kappa + \sigma \eta^0\right)} \right) + \log \left( \frac{\sum_{j = 0}^1 A^j e^{- \frac{ \kappa}{\sigma}c^j}}{A^1 e^{- \frac{ \kappa}{\sigma}c^1} + A^0 e^{- \left(\frac{ \kappa}{\sigma} + \eta^0\right)c^0}} \right) \right],\\
&\tilde \zeta^{0,k,0}_\beta \left( \hat v_0(t,q) ,  \hat v_0(t,q + \phi(k)e^0)\right)\\
&\ = \frac{1}{\eta^0} \Bigg[ \log \left( \frac{\hat v_0(t,q)}{\hat v_0(t,q + \phi(k)e^0)}\right)\\
&\qquad \qquad \qquad  + \log \left(1 - \frac{(1-\beta)^2\sigma^2 \gamma^0 \eta^0}{\left(  \kappa + (1-\beta)\sigma \gamma^0\right)\left(  \kappa + (1-\beta)\sigma \eta^0\right)} \right) + \log \Bigg( \frac{\sum_{j = 0}^1 A^j e^{- \frac{ \kappa}{\sigma}c^j}}{A^1 e^{- \frac{ \kappa}{\sigma}c^1} + A^0 e^{- \left(\frac{ \kappa}{\sigma} + \eta^0\right)c^0}} \Bigg) \Bigg],    
\end{align*}
and
\begin{align*}
\tilde \zeta^{k} \left(\Big(\hat v_l(t,q) ,\hat v_l(t,q + \phi(k)e^l) \Big)_{l = 0,1} \right) = \left(\tilde \zeta^{0,k,0} \left( \hat v_0(t,q) ,  \hat v_0(t,q + \phi(k)e^0)\right), \check \zeta^{1,k,1} \big(\hat v_1(t,q) ,  \hat v_1(t,q + \phi(k)e^1) \big) \right).    
\end{align*}

The following lemma is a straightforward application of Cauchy-Lipschitz Theorem:

\begin{lemme}\label{solE10}
There exists a unique solution $\left(\hat v_0, \hat v_1 \right)$ to the system of PDEs \eqref{10PDEsystem}. 
\end{lemme}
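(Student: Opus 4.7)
The plan is to exploit the fact that the spatial variable $q = (q^0, q^1)$ lies in the finite lattice $\bar{\mathcal Q}^2 \cap \mathbb Z^2$, since the inventories $Q^i = N^{b,i} - N^{a,i}$ are integer-valued and confined to $[-\bar q, \bar q]$. System \eqref{10PDEsystem} therefore reduces to a finite coupled backward system of $2(2\bar q + 1)^2$ scalar ordinary differential equations in $t$, indexed by $(m, q)$ with $m \in \{0,1\}$, the coupling occurring only through the values at the neighbouring lattice points $q \pm e^i$. Cauchy--Lipschitz will then deliver a unique maximal solution as soon as one checks that the driver is locally Lipschitz in the unknown values $\big(\hat v_0(t, \cdot), \hat v_1(t, \cdot)\big)$.

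To verify this, the first step is to make the suprema $G^{0,S}$ and $G^{0,k}$ explicit. The first-order condition in $z^{0,S}$ produces the unique critical point $\check\zeta^{0,S}(q^0) = -\gamma^0 q^0 / (\gamma^0 + \eta^0)$, yielding a smooth function of $q^0$ alone. For $G^{0,k}$, the first-order conditions in the three variables $z^{0,k,0}, z^{0,k,1,0}, z^{0,k,1,1}$ decouple and yield the explicit maximizers $\check\zeta^{0,k,0}$ and $\check\zeta^{0,k,1,j}$ displayed in the statement; the case split defining $\check\zeta^{0,k,0}$ mirrors the two regimes in the fixed-point map $\Delta^{k,0}$ from \Cref{FPMM}, namely whether market maker $M_0$'s unconstrained optimal quote undercuts $\Delta^{k,1}$ or not. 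Substitution gives closed-form expressions for $G^{0,S}$, $G^{0,k}$, and, via $\Delta$, for $\mathcal H^1$, which are $C^1$ in $\hat v_l(t,q)$ and $\hat v_l(t,q+\phi(k)e^i)$ on the open set where the logarithmic ratios $\hat v_0(t,q)/\hat v_0(t,q+\phi(k)e^i)$ are well defined.

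Backward Cauchy--Lipschitz, started from the terminal data $\hat v_0(T,\cdot) = -1$ and $\hat v_1(T,\cdot) = 0$, then yields a unique maximal solution on some interval $(T^\star, T]$. The main obstacle is the uniform a priori bound needed to push $T^\star$ down to $0$: one must ensure that $\hat v_0$ remains strictly negative and finite while $\hat v_1$ stays bounded, so that the logarithmic nonlinearities do not blow up and the sign conditions underlying the explicit maximizers are preserved. Such a bound can be obtained either by a discrete maximum-principle argument on the ODE system---exploiting that the exponential form of the driver keeps the right-hand side under control as long as the $\hat v_l$ are bounded and $\hat v_0$ keeps its sign---or, equivalently, by interpreting $\hat v_0$ as the value function of a well-posed exponential-utility control problem for $E_0$ whose controls $z^0$ can be restricted without loss to a compact set by coercivity of $g^{0,k}$ in its first arguments. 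Once this estimate is secured, the local solution extends to all of $[0,T]$ and uniqueness is immediate from Cauchy--Lipschitz; the discreteness of the inventory space, which reduces the PDE to an ODE system, is precisely what justifies the description of the result as a straightforward application of Cauchy--Lipschitz.
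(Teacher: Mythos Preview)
Your proposal is correct and follows exactly the route the paper indicates: the paper does not give a proof beyond the one-line remark that \Cref{solE10} is ``a straightforward application of Cauchy--Lipschitz Theorem'', and your argument---reducing \eqref{10PDEsystem} to a finite backward ODE system on the lattice $\bar{\mathcal Q}^2 \cap \mathbb Z^2$, computing the explicit maximisers $\check\zeta$ so that the driver is $C^1$ on the domain where $\hat v_0$ keeps its sign, and invoking an a~priori bound (via the value-function interpretation or a comparison argument) to rule out blow-up---is precisely the content behind that remark. You supply considerably more detail than the paper does, in particular the identification of the sign condition on $\hat v_0$ as the only obstruction to global existence, but the approach is the same.
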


This allows to conclude with the classical verification theorem below.

\begin{thm}\label{verif10}
Let us consider the solution $\left(\hat v_0, \hat v_1 \right)$ to the system of PDEs \eqref{10PDEsystem} given by \Cref{solE10}. Then the contract of exchange $E_0$ given by
\begin{align*}
    &Z^{0,S}_t = \check \zeta^{0,S}(Q^0_t),\\
    &Z^{0,k,0}_t = \check \zeta^{0,k,0} \bigg(Q_t, \Big(\hat v_l(t,Q_t) , \hat v_l(t,Q_t + \phi(k)e^l) \Big)_{l = 0,1}  \bigg),\\
    &Z^{0,k, 1,j}_t = \check \zeta^{0,k,1,j} \bigg(\hat v_0(t,Q), \hat v_0(t,Q + \phi(k)e^1) \bigg)\quad \forall j \in\{0,1\}, \quad \forall k\in\{b,a\},
\end{align*}
solves the Problem \eqref{pb10}. Moreover, given the corresponding process $Z^0$, there  exists a unique solution to BSDE \eqref{0BSDE}, which is given by
\begin{align*}
    Z^{1,S}_t = 0,\quad Z^{1,k,1,j}_t = \check \zeta^{1,k,1} \bigg(\hat v_1(t,Q_t) ,  \hat v_1(t,Q_t + \phi(k)e^1) \bigg),\quad Z^{1,k,0,j}_t = \check \zeta^{1,k,0,j} \bigg(\hat v_1(t,Q_t) ,  \hat v_1(t,Q_t + \phi(k)e^0) \bigg),
\end{align*}
$\forall j \in\{0,1\}$, $\forall k\in\{b,a\}$, $\forall t \in [0,T]$.\\

In particular, the representation of the null contract of exchange $E_1$ is given by the corresponding process $Z^1$.
\end{thm}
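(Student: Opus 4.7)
The strategy is the standard dynamic-programming verification argument adapted to the CARA structure, followed by a direct BSDE identification for the representation of $\xi^1=0$.

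First, I would recast Problem \eqref{pb10} as a Markovian stochastic control problem for $E_0$ over the processes $Z^0 \in \mathcal Z$. Denote by $A_t := c^0\bigl(N^{b,0,0}_t+N^{b,1,0}_t+N^{a,0,0}_t+N^{a,1,0}_t\bigr) - Y^{0,Y^0_0,Z}_t$ the running net profit of $E_0$ up to time $t$. Because utilities are exponential and the dynamics of $A$ are linear in its running value, the dynamic value function factorises as $V(t,a,q) = e^{-\eta^0 a}\,w(t,q)$. The associated HJB equation is precisely the first PDE of system \eqref{10PDEsystem} once the supremum in $z^{0,S}$ is written as $G^{0,S}$ (completing the square) and the suprema in $(z^{0,k,0},z^{0,k,1,:})$ are written as $G^{0,k}$. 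One must check that the pointwise maximisers of $g^{0,S}$ and $g^{0,k}$ are exactly the formulae $\check\zeta^{0,S}, \check\zeta^{0,k,0}, \check\zeta^{0,k,1,j}$: this is an explicit calculation (first-order conditions in the quadratic for $z^{0,S}$, and in the exponentials for the jump sensitivities), using \Cref{FPMM} to express $\Delta^{k,0}$ in terms of $\bar z^k$ and using that $\hat v_0<0$ (which follows from $\hat v_0(T,\cdot)=-1$ and comparison on the ODE system).

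Second, I would run the verification on the candidate $V_t := e^{-\eta^0 A_t}\,\hat v_0(t,Q_t)$. Applying Itô's formula under $\mathbb P^{\hat\delta}$ for an arbitrary admissible $Z^0$ and using the market-makers' response $\hat\delta = \Delta(Z,Q)$ (\Cref{NashMM}), the drift of $V$ equals $e^{-\eta^0 A_t}$ times the left-hand side of the first PDE of \eqref{10PDEsystem} evaluated with $(z^{0,S},z^{0,k,0},z^{0,k,1,:})$ replaced by the controls chosen by $E_0$. By the PDE, this drift is $\le 0$, so $V$ is a $\mathbb P^{\hat\delta}$-supermartingale; under the candidate controls $\check\zeta$ it is a true martingale. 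The integrability granted by $Z\in\mathcal Z$ and by the bounds $|\check\zeta^{0,S}|\le C|q^0|$, $|\check\zeta^{0,k,i,j}|\le C(1+\|\log\hat v_0\|_\infty)$ allows one to apply optional stopping and conclude $V_{E_0}(0) = \hat v_0(0,Q_0)\,e^{-\eta^0 A_0}$, with the supremum attained by the announced contract.

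Third, for the BSDE part I fix $Z^0$ at the optimum and seek $(Y,Z^1)$ solving \eqref{0BSDE} with $Y_T=0$. The ansatz $Y_t=\hat v_1(t,Q_t)$ (with $\hat v_1(T,\cdot)=0$) and Itô's formula show that the martingale part of $\hat v_1(t,Q_t)$ is generated exactly by the jumps at $Q^1\mapsto Q^1\pm e^1$ and $Q^0\mapsto Q^0\pm e^0$, which identifies $Z^{1,S}=0$ and $Z^{1,k,i,j}_t$ with the stated finite-differences of $\hat v_1$; matching the Lebesgue part then reduces to checking that the second PDE in \eqref{10PDEsystem} coincides with the driver of \eqref{0BSDE}, which holds by construction. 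Uniqueness follows from \Cref{solE10} combined with standard Lipschitz-in-$Z^1$ BSDE uniqueness (once $Z^0$ and hence the couplings through $\mathcal H^1$ are fixed, the driver is Lipschitz on bounded sets because $\mathcal Z$ provides the required exponential integrability).

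The main obstacle is the administrative bookkeeping rather than a deep difficulty: verifying that the candidate $Z^0$ belongs to $\mathcal Z$ (i.e.\ the exponential integrability \eqref{integcond0} of the resulting $\xi^0=Y^{0,Y^0_0,Z}_T$ together with the supremum bound used in the definition of $\mathcal Z$) and ensuring that the computations remain valid on the inventory boundary $\{|Q^i|=\bar q\}$, where $\Lambda^j(\delta_\infty)$ indicator factors kick in and some components of $\check\zeta$ become irrelevant. Both points are handled by the boundary conventions already built into $\mathcal B_\infty^2(q^i)$ and into the indicator structure of $g^{0,k}$, so the remaining work is careful case analysis rather than a new argument.
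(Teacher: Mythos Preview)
Your proposal is correct and follows exactly the route the paper has in mind: the authors do not write out a proof but simply label the result a ``classical verification theorem,'' and your sketch---CARA factorisation $V(t,a,q)=e^{-\eta^0 a}\hat v_0(t,q)$, It\^o plus sub/super-martingale comparison using the PDE, then BSDE identification via the ansatz $Y_t=\hat v_1(t,Q_t)$---is precisely that classical argument. The only point to watch is a likely sign/orientation slip in the paper's definition of $G^{0,S}$ (a $\sup$ of a positive convex quadratic is infinite, so it should be an $\inf$ or carry a minus sign), but this is a typo in the statement rather than a flaw in your reasoning.
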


\subsection{Incentives for both market makers}\label{ssec:2-incentive} 

Finally, we study the situation in which both exchanges propose a contract to their respective market maker.

\hs

The optimization problem of exchange $E_i$ is given by
\begin{align}
    V_{E_i}(\xi^j) := \!\!\!\!\underset{\xi^i \in {\Xi'}^i(\xi^j)}{\sup} \mathbb E^{\hat \delta(\xi) } \left[-e^{-\eta^i \left(c^i\left( N^{b,i,i}_T + N^{b,j,i}_T + N^{a,i,i}_T + N^{a,j,i}_T \right) - \xi^i \right)} \right].
\end{align}

\hs

As before, in view of Remark \ref{val_func_mm_0} the optimal $Y^0_0$ and $Y^1_0$ are known and can be factored out of the equation, so that we only optimize over the processes $Z$.\\

As for exchange $E_0$, let us define for exchange $E_1$ the function $g^{1,S}: \mathbb R \times \bar{\mathcal Q}  \rightarrow \mathbb R$ such that
$$g^{1,S}\left(z^{1,S}, q^1 \right) = \frac{\eta^1 \sigma^2}{2} \left( \gamma^1 \left(z^{1,S} + q^1\right)^2 + \eta^1 \left(z^{1,S}\right)^2 \right) \quad \forall \left(z^{1,S}, q^1 \right) \in \mathbb R \times \bar{\mathcal Q},$$

and the function $G^{1,S}:\bar{\mathcal Q}  \rightarrow \mathbb R$ given by 
$$G^{1,S}(q^1) = \underset{z^{1,S}\in \mathbb R}{\sup} g^{1,S}\left(z^{1,S}, q^1 \right).$$

Let us also define for $k\in\{b,a\}$ the function $g^{1,k}:\mathbb R \times \mathbb R^2 \times \mathbb R \times \bar{\mathcal Q}^2  \times \mathbb R \times \mathbb R \times \mathbb R\rightarrow \mathbb R$ such that
\begin{align*}
& g^{1,k} \left(z^{1,k,1}, z^{1,k,0,:}, z^{0,k,0},q,  y,y', y'' \right)\\
& =  \mathds 1_{\{q^1 \phi(k) < \bar q\}} \left(\Lambda^0 \left( \Delta^{k,1}\left( \bar z^k, q\right) \right) \left( e^{\eta^1 z^{1,k,1} }y' - y \left(1 + \eta^1 \frac{1 - e^{-\gamma^1 \left(z^{1,k,1} + \Delta^{k,1}\left( \bar z^k, q\right) - \beta \left(\Delta^{k,1}\left( \bar z^k, q\right) - \underline{\Delta^{k}}\left( \bar z^k, q\right) \right) \right)}}{\gamma^1} \right) \right) \right.\\
&\  \left. + \Lambda^1 \left( \Delta^{k,1}\left( \bar z^k, q\right) \right) \left( e^{\eta^1 \left(z^{1,k,1} - c^1 \right)}y' - y \left(1 + \eta^1 \frac{1 - e^{-\gamma^1 \left(z^{1,k,1} + \Delta^{k,1}\left( \bar z^k, q\right) - \beta \left(\Delta^{k,1}\left( \bar z^k, q\right) - \underline{\Delta^{k}}\left( \bar z^k, q\right) \right) \right)}}{\gamma^1} \right) \right) \right)\\
& \ + \mathds 1_{\{q^0 \phi(k) < \bar q\}} \left(\Lambda^0 \left( \Delta^{k,0}\left( \bar z^k, q\right) \right) \left( e^{\eta^1 z^{1,k,0,0} }y'' -y\left(1 +\eta^1 \frac{1 - e^{-\gamma^1 \left(z^{1,k,0,0} \right)}}{\gamma^1}\right)  \right) \right.\\
&\  \left. + \Lambda^1 \left( \Delta^{k,0}\left( \bar z^k, q\right) \right) \left( e^{\eta^1 \left(z^{1,k,0,1} - c^1 \right)}y'' - y\left(1 + \eta^1 \frac{1 - e^{-\gamma^1 \left(z^{1,k,0,1}  \right)}}{\gamma^1}  \right) \right) \right),
\end{align*}
with $\bar z^k = \left(z^{0,k,0}, z^{1,k,1} \right)$, and the function $G^{1,k}:\mathbb R \times \bar{\mathcal Q}^2  \times \mathbb R \times \mathbb R\rightarrow \mathbb R$ such that
$$G^{1,k} \left(z^{0,k,0},q,  y,y' \right) = \underset{z^{1,k,1}, z^{1,k,0,:} \in \mathbb R \times \mathbb R^2}{\sup} g^{1,k} \left(z^{1,k,1}, z^{1,k,0,:}, z^{0,k,0},q,  y,y' \right).$$

We consider the following system of PDEs for $m=0,1$:
\begin{align}\label{11PDEsystem}
\begin{cases}
&0 = \partial_t v_m(t,q) + G^{m,S}(q^m)\\
& \quad+\!\! \underset{k\in \{b,a\}}{\sum} \!\!G^{m,k} \Bigg(   \zeta^{1-m,k,1-m}\Big(  v_{1-m}(t,q) ,   v_{1-m}(t,q + \phi(k)e^{1-m}) \Big), q,   v_m(t,q),   v_m(t,q + \phi(k)e^m),   v_m(t,q + \phi(k)e^{1-m})\Bigg),\\
&  v_0(T,q) = -1, \quad   v_1(T,q) = -1, 
\end{cases}
\end{align}
$\forall (t,q) \in [0,T) \times \bar{\mathcal Q}^2$,
where
\begin{align*}
 \zeta &\bigg(q, \Big( v_l(t,q) \Big)_{l = 0,1}, \Big(  v_l(t,q + \phi(k)e^i) \Big)_{\substack{k = b,a\\ l, i =0,1 }} \bigg)\\
&= \left( \zeta^0 \bigg(q,v_0(t,q) , \Big(  v_0(t,q + \phi(k)e^i) \Big)_{\substack{k = b,a\\  i =0,1 }} \bigg), \zeta^1 \bigg( v_1(t,q) , \Big(  v_1(t,q + \phi(k)e^i) \Big)_{\substack{k = b,a\\ i =0,1 }} \bigg) \right) \in \mathcal R^2,    
\end{align*}
with $ \zeta^{0,k,0,0} =  \zeta^{0,k,0,1} =: \zeta^{0,k,0}$ for $k\in \{b,a\}$, and
\begin{align*}
    &  \zeta^{0,k,0} \bigg(q, \Big( v_l(t,q) ,  v_l(t,q + \phi(k)e^l) \Big)_{l = 0,1}  \bigg)\\
    & = \begin{cases}
    \tilde \zeta^{0,k,0} \left(  v_0(t,q) ,   v_0(t,q + \phi(k)e^0)\right) \\
    \quad \text{if} \  \Delta^{k,0}\left(\bar \zeta^{k} \left(\Big( v_l(t,q) , v_l(t,q + \phi(k)e^l) \Big)_{l = 0,1} \right), q \right) \le \Delta^{k,1} \left(\bar\zeta^{k} \left( \Big( v_l(t,q) ,  v_l(t,q + \phi(k)e^l) \Big)_{l = 0,1}\right), q\right),\\
    \tilde \zeta^{0,k,0}_\beta \left(  v_0(t,q) ,  v_0(t,q + \phi(k)e^0)\right) \  \text{otherwise},
    \end{cases}\\
    & \zeta^{0,k,1,0} \bigg( v_0(t,q),  v_0(t,q + \phi(k)e^1) \bigg) = \frac{\eta^0}{\eta^0 + \gamma^0} c^0 + \frac{1}{\eta^0 + \gamma^0} \log \left(\frac{ v_0(t,q)}{v_0(t,q + \phi(k)e^1)} \right),\\
    &  \zeta^{0,k,1,1} \bigg( v_0(t,q),  v_0(t,q + \phi(k)e^1) \bigg) = \frac{1}{\eta^0 + \gamma^0} \log \left(\frac{ v_0(t,q)}{ v_0(t,q + \phi(k)e^1)} \right),\\
    & \zeta^{0,S}(q^0) = -\frac{\gamma^0}{\gamma^0 + \eta^0}q^0,
\end{align*}
and $ \zeta^{1,k,1,0} =  \zeta^{1,k,1,1} =: \zeta^{1,k,1}$ for $k\in \{b,a\}$, and
\begin{align*}
    &  \zeta^{1,k,1} \bigg(q, \Big( v_l(t,q) ,  v_l(t,q + \phi(k)e^l) \Big)_{l = 0,1}  \bigg)\\
    & = \begin{cases}
    \tilde \zeta^{1,k,1} \left(  v_1(t,q) ,   v_1(t,q + \phi(k)e^1)\right) \\
    \quad \text{if} \  \Delta^{k,1}\left(\bar \zeta^{k} \left(\Big( v_l(t,q) , v_l(t,q + \phi(k)e^l) \Big)_{l = 0,1} \right), q \right) \le \Delta^{k,0} \left(\bar\zeta^{k} \left( \Big( v_l(t,q) ,  v_l(t,q + \phi(k)e^l) \Big)_{l = 0,1}\right), q\right),\\
    \tilde \zeta^{1,k,1}_\beta \left(  v_1(t,q) ,  v_1(t,q + \phi(k)e^1)\right) \  \text{otherwise},
    \end{cases}\\
    & \zeta^{1,k,0,1} \bigg( v_1(t,q),  v_1(t,q + \phi(k)e^0) \bigg) = \frac{\eta^1}{\eta^1 + \gamma^1} c^1 + \frac{1}{\eta^1 + \gamma^1} \log \left(\frac{ v_1(t,q)}{v_1(t,q + \phi(k)e^0)} \right),\\
    &  \zeta^{1,k,0,0} \bigg( v_1(t,q),  v_1(t,q + \phi(k)e^0) \bigg) = \frac{1}{\eta^1 + \gamma^1} \log \left(\frac{ v_1(t,q)}{v_1(t,q + \phi(k)e^0)} \right),\\
    & \zeta^{1,S}(q^1) = -\frac{\gamma^1}{\gamma^1 + \eta^1}q^1.
\end{align*}

In the above, we have
\begin{align*}
&\tilde \zeta^{0,k,0} \left(   v_0(t,q) ,    v_0(t,q + \phi(k)e^0)\right)\\
&\ = \frac{1}{\eta^0} \left[ \log \left( \frac{  v_0(t,q)}{  v_0(t,q + \phi(k)e^0)}\right) + \log \left(1 - \frac{\sigma^2 \gamma^0 \eta^0}{\left(  \kappa + \sigma \gamma^0\right)\left(  \kappa + \sigma \eta^0\right)} \right) + \log \left( \frac{\sum_{j = 0}^1 A^j e^{- \frac{ \kappa}{\sigma}c^j}}{A^1 e^{- \frac{ \kappa}{\sigma}c^1} + A^0 e^{- \left(\frac{ \kappa}{\sigma} + \eta^0\right)c^0}} \right) \right],
\end{align*}
\begin{align*}
&\tilde \zeta^{0,k,0}_\beta \left(   v_0(t,q) ,    v_0(t,q + \phi(k)e^0)\right)\\
&\ = \frac{1}{\eta^0} \Bigg[ \log \left( \frac{  v_0(t,q)}{  v_0(t,q + \phi(k)e^0)}\right)\\
&\qquad \qquad \qquad  + \log \left(1 - \frac{(1-\beta)^2\sigma^2 \gamma^0 \eta^0}{\left(  \kappa + (1-\beta)\sigma \gamma^0\right)\left(  \kappa + (1-\beta)\sigma \eta^0\right)} \right) + \log \Bigg( \frac{\sum_{j = 0}^1 A^j e^{- \frac{ \kappa}{\sigma}c^j}}{A^1 e^{- \frac{ \kappa}{\sigma}c^1} + A^0 e^{- \left(\frac{ \kappa}{\sigma} + \eta^0\right)c^0}} \Bigg) \Bigg],    
\end{align*}

\begin{align*}
&\tilde \zeta^{1,k,1} \left(   v_1(t,q) ,    v_1(t,q + \phi(k)e^1)\right)\\
&\ = \frac{1}{\eta^1} \left[ \log \left( \frac{  v_1(t,q)}{  v_1(t,q + \phi(k)e^1)}\right) + \log \left(1 - \frac{\sigma^2 \gamma^1 \eta^1}{\left(  \kappa + \sigma \gamma^1\right)\left(  \kappa + \sigma \eta^1\right)} \right) + \log \left( \frac{\sum_{j = 0}^1 A^j e^{- \frac{ \kappa}{\sigma}c^j}}{A^0 e^{- \frac{ \kappa}{\sigma}c^0} + A^1 e^{- \left(\frac{ \kappa}{\sigma} + \eta^1\right)c^1}} \right) \right],\\
&\tilde \zeta^{1,k,1}_\beta \left(   v_1(t,q) ,    v_1(t,q + \phi(k)e^1)\right)\\
&\ = \frac{1}{\eta^1} \Bigg[ \log \left( \frac{  v_1(t,q)}{  v_1(t,q + \phi(k)e^1)}\right)\\
&\qquad \qquad \qquad  + \log \left(1 - \frac{(1-\beta)^2\sigma^2 \gamma^1 \eta^1}{\left(  \kappa + (1-\beta)\sigma \gamma^1\right)\left(  \kappa + (1-\beta)\sigma \eta^1\right)} \right) + \log \Bigg( \frac{\sum_{j = 0}^1 A^j e^{- \frac{ \kappa}{\sigma}c^j}}{A^0 e^{- \frac{ \kappa}{\sigma}c^0} + A^1 e^{- \left(\frac{ \kappa}{\sigma} + \eta^1\right)c^1}} \Bigg) \Bigg],    
\end{align*}
and
\begin{align*}
\tilde \zeta^{k} \left(\Big( v_l(t,q) , v_l(t,q + \phi(k)e^l) \Big)_{l = 0,1} \right) = \left(\tilde \zeta^{0,k,0} \left(  v_0(t,q) ,   v_0(t,q + \phi(k)e^0)\right), \tilde \zeta^{1,k,1} \big( v_1(t,q) ,  v_1(t,q + \phi(k)e^1) \big) \right).    
\end{align*}

We have once again the following results:

\begin{lemme}\label{solE11}
There exists a unique solution $\left(v_0, v_1 \right)$ to the system of PDEs \eqref{11PDEsystem}. 
\end{lemme}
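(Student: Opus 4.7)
The strategy mirrors that of Lemma~\ref{solE10}. Since market maker inventories are integer-valued with jumps of $\pm 1$ and are capped in absolute value by $\bar q$, the state space $\bar{\mathcal Q}^2$ is finite. Hence \eqref{11PDEsystem} is, despite the notation, a finite-dimensional system of coupled first-order ODEs in $t$, indexed by $(m,q) \in \{0,1\}\times\bar{\mathcal Q}^2$, with terminal data $v_m(T,q)=-1$. My goal is to rewrite it as $\partial_t v = F(t,v)$ on the open domain
\[
\mathcal D := \{v \in \mathbb R^{2(2\bar q+1)^2} : v_m(t,q)<0 \text{ for all } m,q\},
\]
and apply Picard-Lindelöf in backward time, starting from the terminal value $v(T)\in\mathcal D$.

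The first step is to produce an explicit expression for $F$ on $\mathcal D$. For $G^{m,S}$, the unique maximiser in the quadratic concave problem is the affine function $z^{m,S}=-\frac{\gamma^m}{\gamma^m+\eta^m}q^m$, and $G^{m,S}(q^m)$ is in turn a polynomial in $q^m$. For $G^{m,k}$, the map $z\mapsto g^{m,k}(z,\cdot)$ is, on $\mathcal D$, a strictly concave function in $(z^{m,k,m}, z^{m,k,1-m,:})$ (it is a sum of terms of the form $\alpha e^{\eta^m z}-\beta(1-e^{-\gamma^m z})$ with $\alpha,\beta>0$), whose first-order conditions give exactly the closed-form maximisers $\zeta^{m,k,m}$ and $\zeta^{m,k,1-m,j}$ announced in the statement. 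Substituting these back into $g^{m,k}$ produces a function of the neighbouring values $v_m(t,q),\,v_m(t,q+\phi(k)e^m),\,v_m(t,q+\phi(k)e^{1-m}),\,v_{1-m}(t,q),\,v_{1-m}(t,q+\phi(k)e^{1-m})$ that involves only exponentials and logarithms of ratios of strictly negative quantities, hence is $C^\infty$ on $\mathcal D$. Combining this with the Lipschitz continuity of the fixed-point map $\Delta$ (Lemma~\ref{FPMM}) shows that $F$ is locally Lipschitz on $\mathcal D$, and Picard-Lindelöf therefore yields a unique backward-maximal solution $(v_0,v_1)$ on some sub-interval $(T_\star,T]$.

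The main obstacle is to prove $T_\star=0$: one must rule out both blow-up in finite time and the approach of the singular boundary $\{v_m(t,q)=0\}$, where the log-ratios in the $\zeta^{\cdot,\cdot,\cdot}$ functions degenerate. The plan is to derive a two-sided a priori estimate on $\log(-v_m(t,q))$ on $(T_\star,T]$. Differentiating this quantity along the flow and using the explicit form of $F$, one observes that the $\zeta$-maximisers are controlled linearly in the log-ratios $\log(v_m(t,q)/v_m(t,q+\phi(k)e^l))$ while the $\Delta^{k,i}$ remain uniformly bounded in $\mathcal B_\infty$; this yields a differential inequality of the form $|\partial_t \log(-v_m(t,q))|\le C(1+\sup_{m',q'}|\log(-v_{m'}(t,q'))|)$ for a constant $C$ depending only on the model parameters. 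Gronwall's inequality then gives $|\log(-v_m(t,q))|\le C'(T-t)$ uniformly, confining $(v_0,v_1)$ to a compact sub-domain of $\mathcal D$ for all $t\in[0,T]$. A standard continuation argument concludes existence and uniqueness on $[0,T]$.
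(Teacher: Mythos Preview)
Your approach---reduction to a finite-dimensional ODE on the open cone $\mathcal D=\{v:v_m(q)<0\}$, local existence via Picard--Lindel\"of, then a~priori confinement to a compact subset of $\mathcal D$---is precisely what the paper's one-line appeal to ``Cauchy--Lipschitz'' (see the sentence preceding Lemma~\ref{solE10}, echoed verbatim before Lemma~\ref{solE11}) leaves implicit; you in fact supply considerably more argument than the paper itself.

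Two steps in your sketch need tightening. First, $g^{m,k}$ is not globally concave in $z^{m,k,m}$, because $\Delta^{k,m}(\bar z^k,q)$ already depends on $z^{m,k,m}$ through the piecewise-affine fixed point of Lemma~\ref{FPMM}; the maximiser is indeed the $\zeta^{m,k,m}$ recorded in the text, but the verification proceeds by cases (as in the proof of Lemma~\ref{FPMM}) rather than via a single first-order condition, and Lemma~\ref{FPMM} itself does not assert the Lipschitz property of $\Delta$ that you invoke. Second, substituting the optimal $\zeta$'s into $g^{m,k}$ and dividing by $v_m(t,q)$ produces terms of the form $C\big[(v_m(t,q')/v_m(t,q))^{\alpha}-1\big]$ with $\alpha=\gamma^m/(\gamma^m+\eta^m)\in(0,1)$, which in the variable $\ell=\log(-v)$ are \emph{exponential} in the differences $\ell_{q'}-\ell_q$, not linear in $\sup_{m',q'}|\ell_{m',q'}|$; hence your Gronwall inequality as written does not close. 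A standard repair is to bound first the oscillation $\max_{m,q}\ell_{m,q}(t)-\min_{m,q}\ell_{m,q}(t)$ by comparing the ODEs at the extremal states, and then bootstrap to control $\sup|\ell|$.
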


Finally, we state below the verification theorem.

\begin{thm}\label{verif11}
Let us consider the solution $\left( v_0,v_1 \right)$ to the system of PDEs \eqref{11PDEsystem} given by \Cref{solE11}. Then the contract of exchange $E_0$ given by
\begin{align*}
    &Z^{0,S}_t =  \zeta^{0,S}(Q^0_t),\\
    &Z^{0,k,0}_t = \zeta^{0,k,0} \bigg(Q_t, \Big(  v_l(t,Q_t) ,   v_l(t,Q_t + \phi(k)e^l) \Big)_{l = 0,1}  \bigg),\\
    &Z^{0,k, 1,j}_t = \zeta^{0,k,1,j} \bigg(  v_0(t,Q),   v_0(t,Q + \phi(k)e^1) \bigg)\quad \forall j \in\{0,1\},
\end{align*}
$\forall k\in\{b,a\}$, together with the contract of exchange $E_1$ given by
\begin{align*}
    &Z^{1,S}_t =  \zeta^{1,S}(Q^1_t),\\
    &Z^{1,k,1}_t = \zeta^{1,k,1} \bigg(Q_t, \Big(  v_l(t,Q_t) ,   v_l(t,Q_t + \phi(k)e^l) \Big)_{l = 0,1}  \bigg),\\
    &Z^{1,k, 0,j}_t = \zeta^{1,k,0,j} \bigg(  v_1(t,Q),   v_1(t,Q + \phi(k)e^0) \bigg)\quad \forall j \in\{0,1\},
\end{align*}
$\forall j \in\{0,1\}$, $\forall k\in\{b,a\}$, $\forall t \in [0,T]$, represent a Nash equilibrium of the Problem.\\
\end{thm}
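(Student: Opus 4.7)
The statement asserts that the pair $(Z^0, Z^1)$ is a Nash equilibrium between the exchanges, meaning that $Z^0$ solves $V_{E_0}(\xi^1)$ with $\xi^1 = Y_T^{1,Y_0^1,Z}$ held fixed, and symmetrically for $E_1$. By the symmetry of the model and of \eqref{11PDEsystem}, it suffices to prove $E_0$'s best-response property; the plan is to reproduce the verification scheme of \Cref{verif10} with $\hat v_0, \hat v_1$ replaced by $v_0, v_1$, the essential difference being the new terminal condition $v_1(T,q)=-1$ and the appearance of $g^{1,\cdot}$ inside $\zeta^{1,k,1}$.

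Fix $E_1$'s representation as in the statement and let $\tilde Z^0$ be any alternative such that $\tilde Z := (\tilde Z^0, Z^1) \in \mathcal Z$ (so that, by \Cref{NashMM} and Assumption~1, the market makers still play $\Delta(\tilde Z_t, Q_t)$). I would introduce the candidate process
\begin{align*}
U^0_t := v_0\big(t, Q_t\big) \exp\!\Big(-\eta^0 \big(c^0\, \mathcal N^{E_0}_t - Y^{0,Y_0^0,\tilde Z}_t\big)\Big),\quad \mathcal N^{E_0}_t := N^{b,0,0}_t + N^{b,1,0}_t + N^{a,0,0}_t + N^{a,1,0}_t.
\end{align*}
Since $v_0(T,q) = -1$, the terminal value $U^0_T$ is exactly the integrand defining $V_{E_0}(\xi^1)$. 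Applying Itô's formula, using the dynamics of $Y^{0,Y_0^0,\tilde Z}$ and the intensities $\Lambda^j(\Delta^{k,i}(\bar{\tilde z}_t^k, Q_t))$ of the jump processes, the drift of $U^0$ factors as $-U^0_t / v_0(t,Q_t)$ times
\begin{align*}
\partial_t v_0(t,Q_t) + g^{0,S}\big(\tilde Z^{0,S}_t, Q^0_t\big) + \sum_{k\in \{b,a\}} g^{0,k}\Big(\tilde Z^{0,k,0}_t,\, \tilde Z^{0,k,1,:}_t,\, Z^{1,k,1}_t,\, Q_t,\, v_0(t,Q_t),\, v_0(t,Q_t + \phi(k)e^0),\, v_0(t,Q_t + \phi(k)e^1)\Big).
\end{align*}
By definition of $G^{0,S}$ and $G^{0,k}$, and because $v_0$ satisfies \eqref{11PDEsystem}, this bracket is non-positive, vanishing precisely at the maximizers $\zeta^{0,S}$, $\zeta^{0,k,0}$, $\zeta^{0,k,1,j}$. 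Thus $U^0$ is a local supermartingale for every admissible $\tilde Z^0$, and a local martingale at the candidate $Z^0$.

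Promoting these statements to true (super)martingales is standard: $v_0$ is smooth on the compact finite set $[0,T]\times \bar{\mathcal Q}^2$, the intensities $\Lambda^j\circ \Delta^{k,i}$ are bounded because $\Delta^{k,i}$ takes values in $\mathcal B_\infty$, and the exponential integrability built into $\mathcal Z$ (for some $\gamma'>\gamma^0$ and the analogous bound on $\eta^0$) gives uniform integrability of $(U^0_{t\wedge\tau_n})$ along a localising sequence. Taking expectations under $\mathbb P^{\hat\delta}$ yields $\mathbb E^{\hat\delta}[U^0_T] \le U^0_0$, with equality at $Z^0$. Combined with the participation-constraint choice $Y_0^0 = -\tfrac{1}{\gamma^0}\log(-R^0)$, this gives $V_{E_0}(\xi^1) = -v_0(0,Q_0) e^{\eta^0 Y_0^0}$, attained by the proposed $Z^0$. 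The symmetric argument, based on $v_1$, $g^{1,\cdot}$, $G^{1,\cdot}$ and the associated $\zeta^{1,\cdot}$, settles $E_1$'s best response and hence the Nash property.

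The main obstacle is not the Itô/verification mechanics but the three-branch piecewise structure of $\zeta^{0,k,0}$ and $\zeta^{1,k,1}$, inherited from the piecewise structure of $\Delta^{k,i}$. One must check, by explicit first-order condition computation, that $\tilde\zeta^{0,k,0}$ and $\tilde\zeta^{0,k,0}_\beta$ each realise $\sup g^{0,k}$ on their respective regimes (\emph{i.e.}\ when $\Gamma^{k,0}\le \Gamma^{k,1}$ and vice versa), that the log-type formulas for $\zeta^{0,k,1,j}$ realise the sup over $z^{0,k,1,:}$, and that the branch-selection rule written in $\zeta^{0,k,0}$ picks the globally optimal regime; a continuity check at the switching boundary is also needed to ensure $\zeta^{0,k,0}$ is well-defined as an attaining maximiser. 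This piecewise FOC/branch-comparison analysis, and the analogous one for $\zeta^{1,k,1}$, is the genuinely novel step relative to \Cref{verif10}; the rest follows the classical CARA Principal-Agent verification template.
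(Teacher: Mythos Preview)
Your proposal is correct and follows exactly the classical verification route the paper has in mind: the paper does not spell out a proof of this theorem (nor of \Cref{verif10}), simply presenting it as a standard verification result once the PDE system \eqref{11PDEsystem} is shown to admit a unique solution. Your Itô-based supermartingale/martingale argument on the process $U^0_t = v_0(t,Q_t)\exp\big(-\eta^0(c^0\mathcal N^{E_0}_t - Y^{0,Y_0^0,\tilde Z}_t)\big)$, together with the symmetric argument for $E_1$, is precisely the intended template, and your identification of the piecewise first-order-condition analysis for $\zeta^{m,k,m}$ as the only substantive computational point is accurate.
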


\section{Economic results}

We analyse the effects of the interactions between the two platforms in different settings of incentive provision to market makers. The Table~\ref{tab:params} gives the value of the different parameters of the model. We considered a single trading day ($T=1$ day). All the equations are solved numerically using explicit Euler schemes.

\hs

\begin{table}[h]
\centering
\begin{tabular}{c c c c c c c c c} 
$\sigma$ & $\kappa$ & $A^0, A^1$ & $c^0, c_1$ & $\eta^0, \eta^1$ & $\gamma^0, \gamma^1$ & $\beta$ & $\bar q$ & $\delta_\infty$      \\ \hline
& &  & &  & &  &  &   \\ 
$\$ \cdot\text{day}^{-\frac 12}$ & $\$^{-1}\cdot \text{day}^{-\frac 12}$ & day$^{-1}$ & $\$$ &  $\$^{-1}$   & $\$^{-1}$  &  --     & nb share & $\$$   \\ \hline
& &  &  & &  &  &&   \\ 
1.2 & 8 & 100 &  $10^{-5}$ & 0.1 &  0.01 & 0.6 & 5  &  10   \\ \hline
\end{tabular}
\caption{Parameter values along with their corresponding units.}
\label{tab:params}
\end{table}

\subsection{Main result}

Figures~\ref{fig:ExchangeV} presents the value functions of the platforms as a function of inventories in the different possible configuration of competition between them: absence of any contract, presence of a single incentive contract (from Exchange~0) and both platforms providing incentives to their market maker. The vignette~(a) clearly shows the large gap in value for Exchange~0  that results from moving unilaterally from an absence of contract to the implementation of an incentive mechanism. Besides, we observe only a marginal variation when Exchange~1 also implements an optimal incentive contract. By comparison, vignette (b) shows that Exchange~1 largely benefits from the implementation of an incentive contract from her competitor, even if Exchange~1 does not provide incentives herself to her own market maker. Finally, vignette~(c) shows the sum of the value functions of the two exchanges as a function of the inventory of exchange~0. It can be observed that the market would largely benefit from moving from an absence of market-making incentive mechanisms to the presence of at least one contract. But the second contract brings a much smaller incremental benefit. This phenomenon is also illustrated in vignette (d) which provides the values of the two exchanges as a function of the risk-aversion parameters of the market makers when they are equal. It can be observed that for any level of risk-aversion of the market makers the unilateral provision of incentive by the Exchange~0 largely increase the utility of Exchange~1 (she moves from approx. $-30$ to $-13$). But, Exchange~1 only marginally benefits from implementing herself in her turn a mechanism (she then moves from $-13$ to $-12$).

\begin{figure}[!htb]
    \begin{subfigure}{0.45\textwidth}
        \caption{}
        \includegraphics[width=\textwidth]{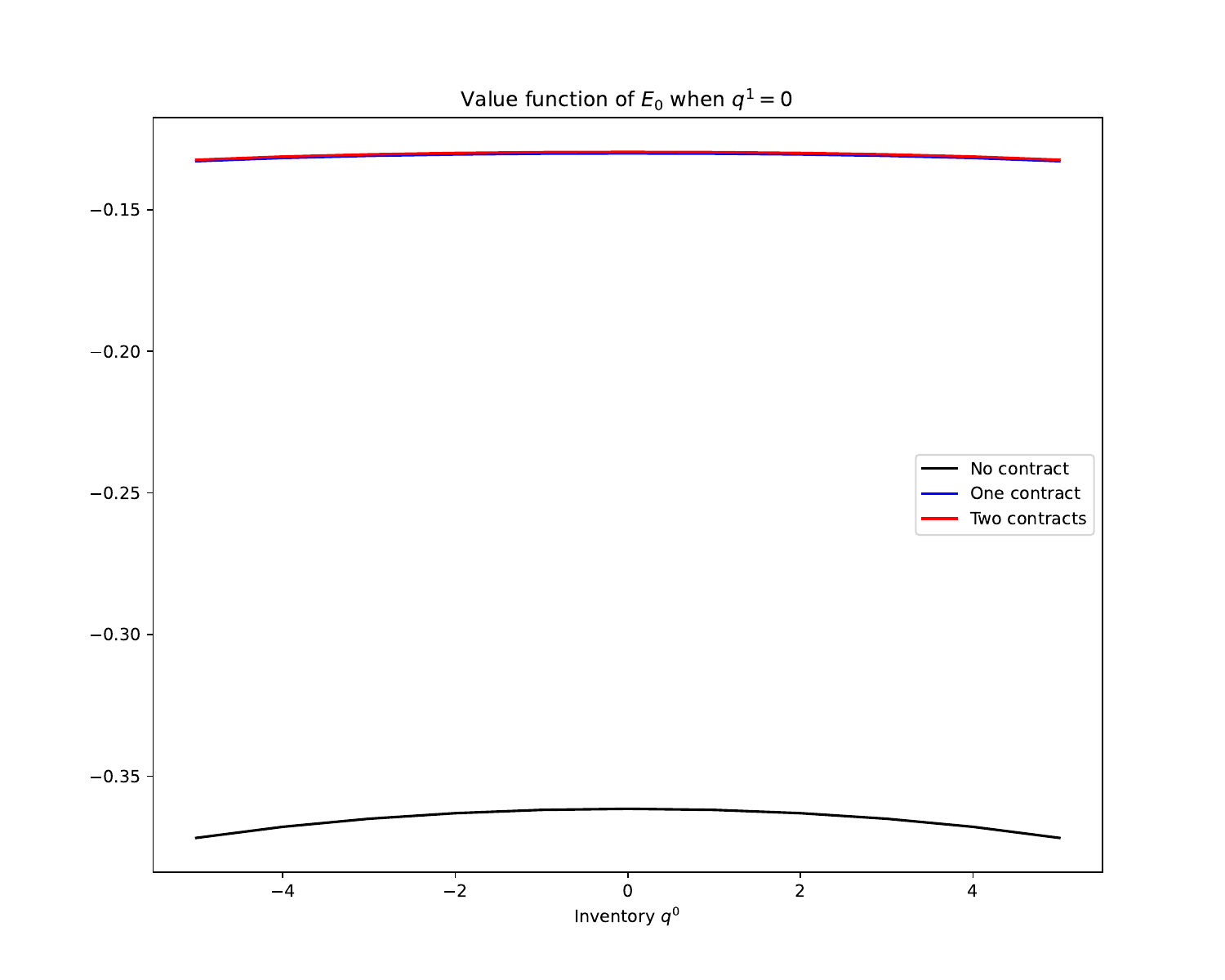}
    \end{subfigure}
    \begin{subfigure}{0.45\textwidth}
        \caption{}
        \includegraphics[width=\textwidth]{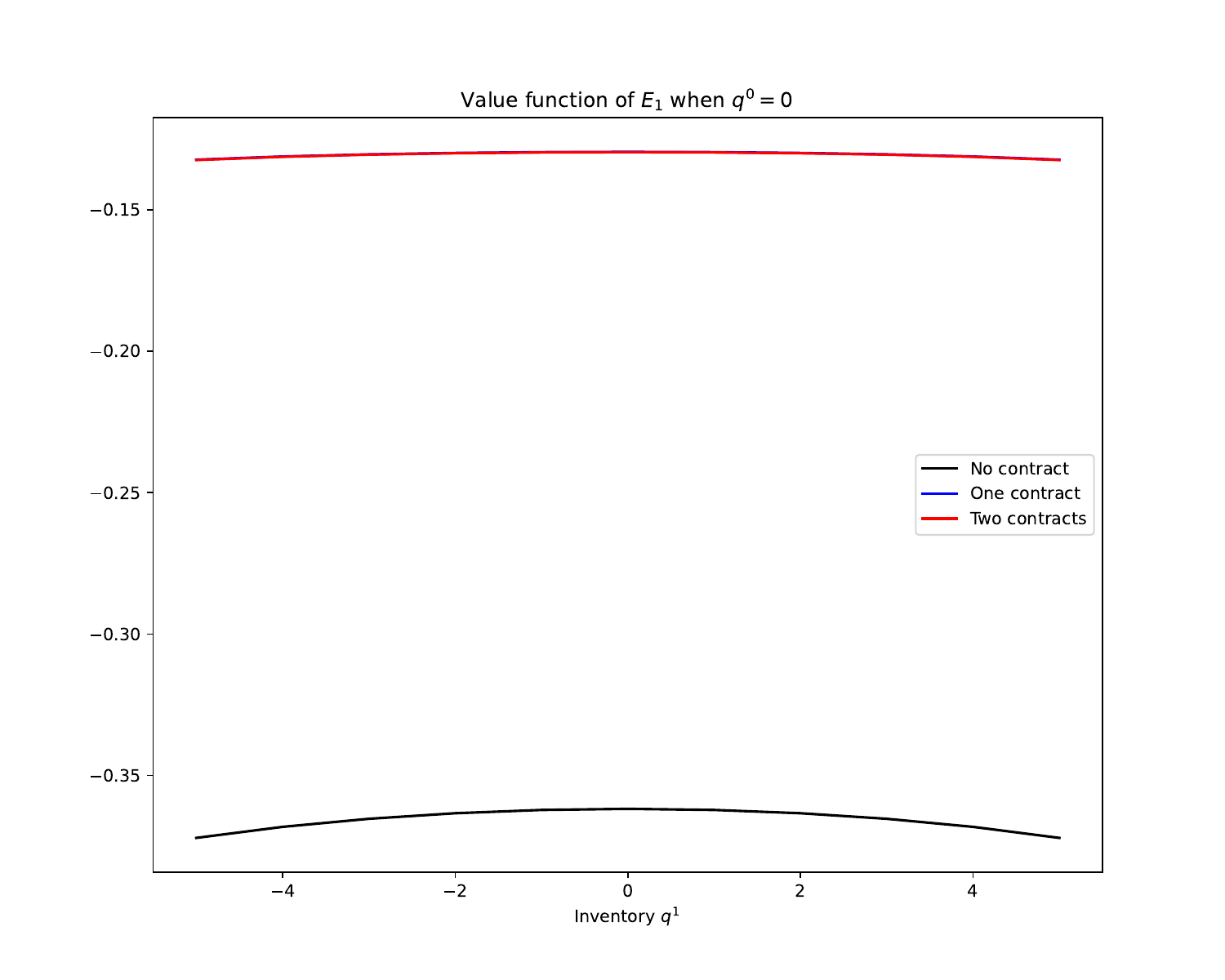}
    \end{subfigure}\\
    \begin{subfigure}{0.45\textwidth}
        \caption{}
        \includegraphics[width=\textwidth]{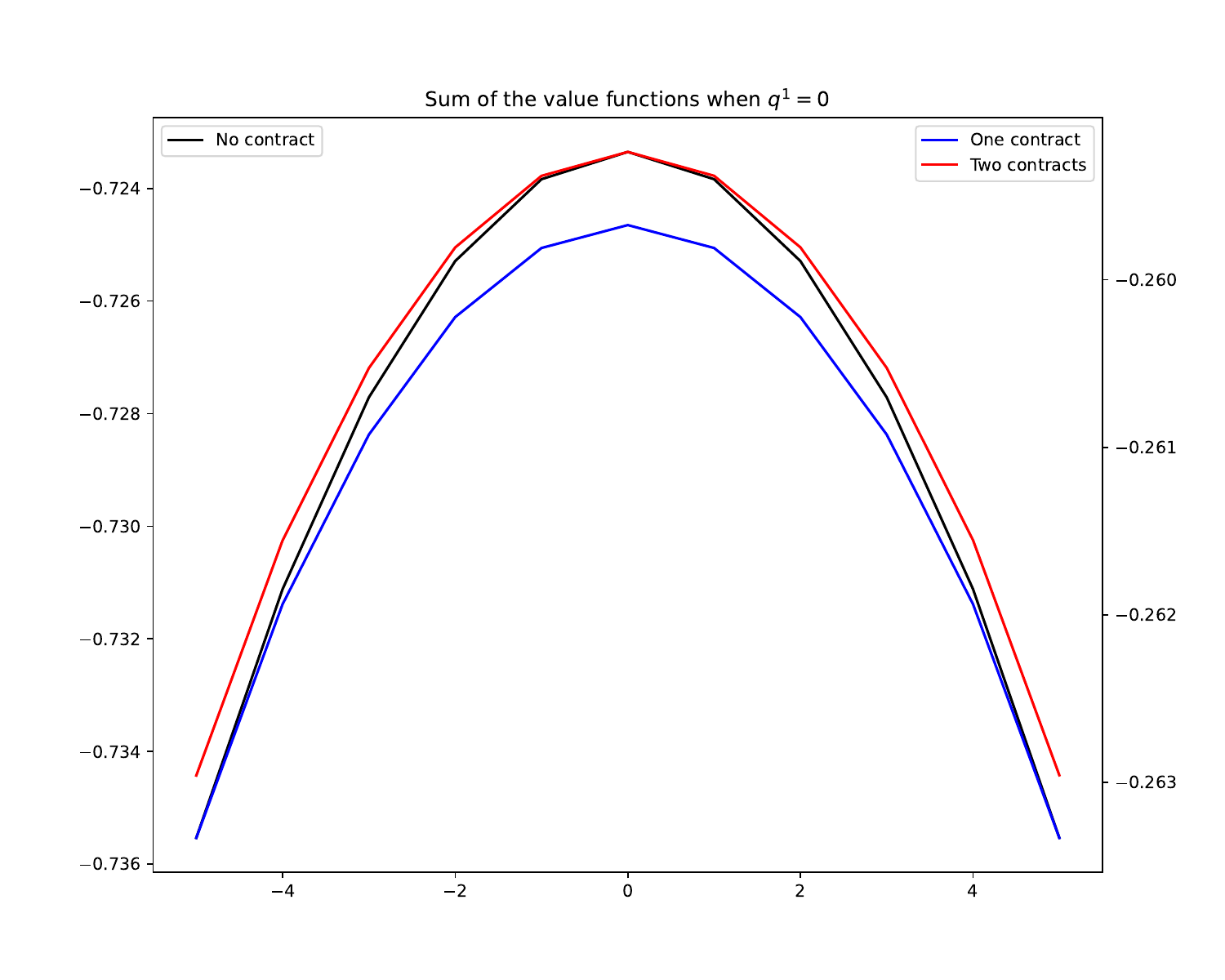}
    \end{subfigure}
    \begin{subfigure}{0.45\textwidth}
        \caption{}
        \includegraphics[width=0.9\linewidth]{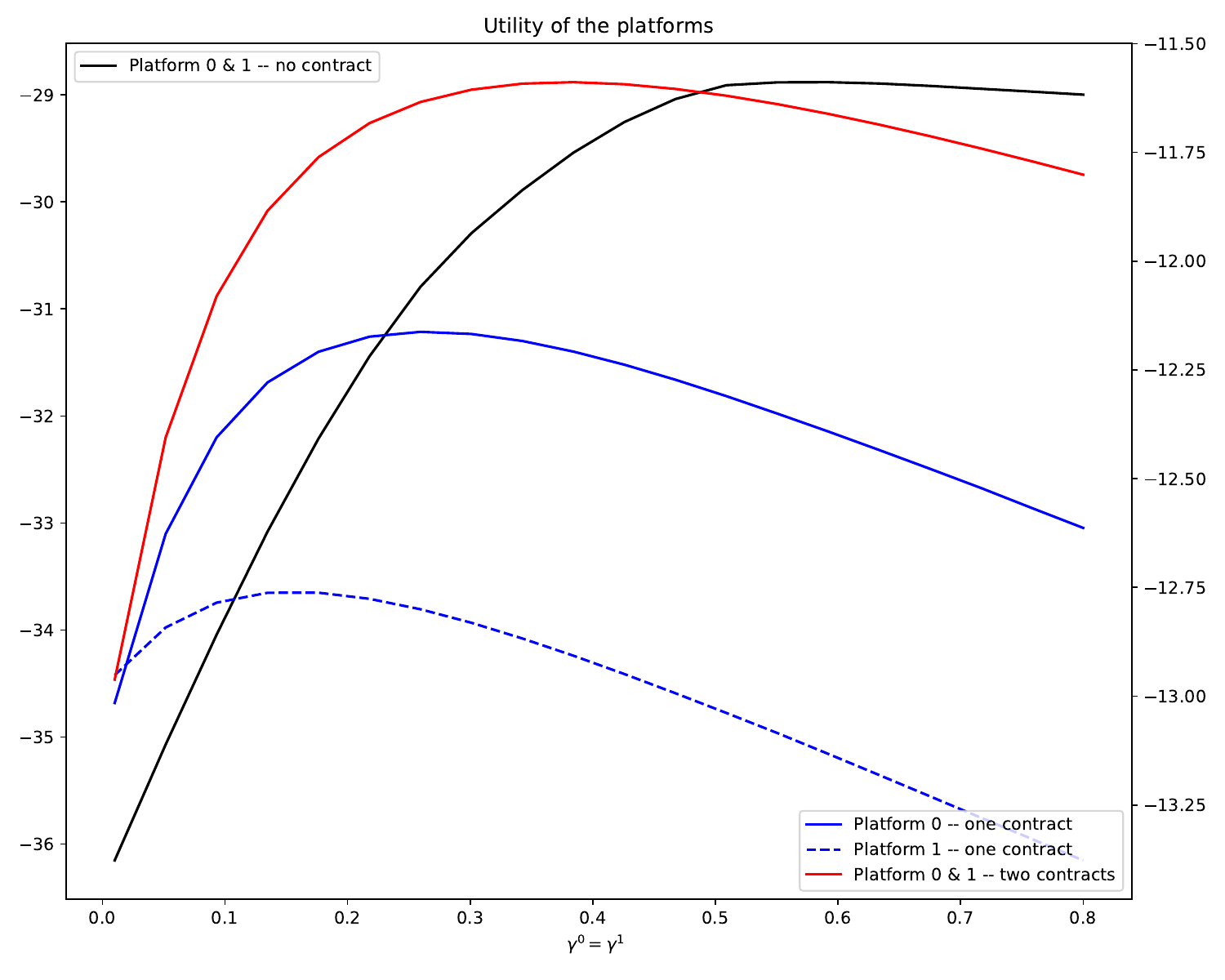}
    \end{subfigure}
    \caption{ {\footnotesize Value functions of the exchanges. (a) Value function of Exchange 0 as a function of her inventory $q^0$ when the inventory $q^1$  of Exhange~1 is zero (b) Value function of Exchange 1 as a function of her inventory $q^1$ when the inventory $q^0$  of Exhange~0 is zero (c) Sum of the value functions of exchanges as a function of $q^0$ when $q^1=0$; (d) Value function of the exchanges as a function of the risk-aversion parameters of the market makers when their are equal. Right axis: absence of contract, Left axis: presence of at least one incentive contract.}}
\label{fig:ExchangeV}
\end{figure}


\hs

This result can be understood in the following way. Incentives provided by Exchange~0 to her market-maker $M_0$ induces more aggressive quotes, which leads to an increase in market orders to $M_0$, which increases the competitive pressure to $M_1$ who is compelled to adjust its own quotes to maintain its level of market orders. This variation of bid quotes is illustrated in Figure~\ref{fig:quotes} in two situations: vignette (a) when only the risk-aversion parameter of the market maker~0 increases (exchanges faces heterogeneous market makers) and vignette (b) when both risk-aversion parameters increases (exchanges face similar market makers). In both situations bid quotes are given for an empty inventory, and are thus equal between market makers. One can observe that in both situation and for any level of risk-aversion, a large reduction in the quotes when moving from an absence of contract to the implementation of an incentive mechanism, and the small decrease when moving from one to two incentives mechanisms. All the liquidity gain is generated from the first exchange implementing an incentive contract. The other exchange benefits from the action of her competitor without doing herself anything. Indeed, when the first exchange incentivizes her maker maker, he provides better quotes and thus becomes more competitive. The other maker maker has no alternative than reducing his own quotes to remain competitive. This result is an effect of {\em competitiveness spill over}.

\hs

The presence of this competitiveness spill-over effect makes incentive provision a public good. Both platforms would personally benefit from the incentives provided by the other exchange to her market maker. But, because the provision of incentive is costly (the platform has to pay the reservation utility of the market maker), an exchange may benefit from not incentivizing her market maker and letting the other platform pay for incentive provision. It is a free-rider problem. As often in the provision of public good, the consequence may be dramatic at equilibrium, as the Nash equilibrium results in no incentive provision by any exchange.

\subsection{Quotes}\label{ssec:quotes}

This section illustrates the effect of competition on the behavior of the market makers in the absence of contracts and shows how our model produces results in line with the economic intuition. In what follows, we only study the optimal bid quotes of the market makers, as bid and ask are symmetric.

\hs

\begin{figure}[!htb]
\begin{subfigure}{.45\linewidth}
        \caption{}
  \includegraphics[width=\linewidth]{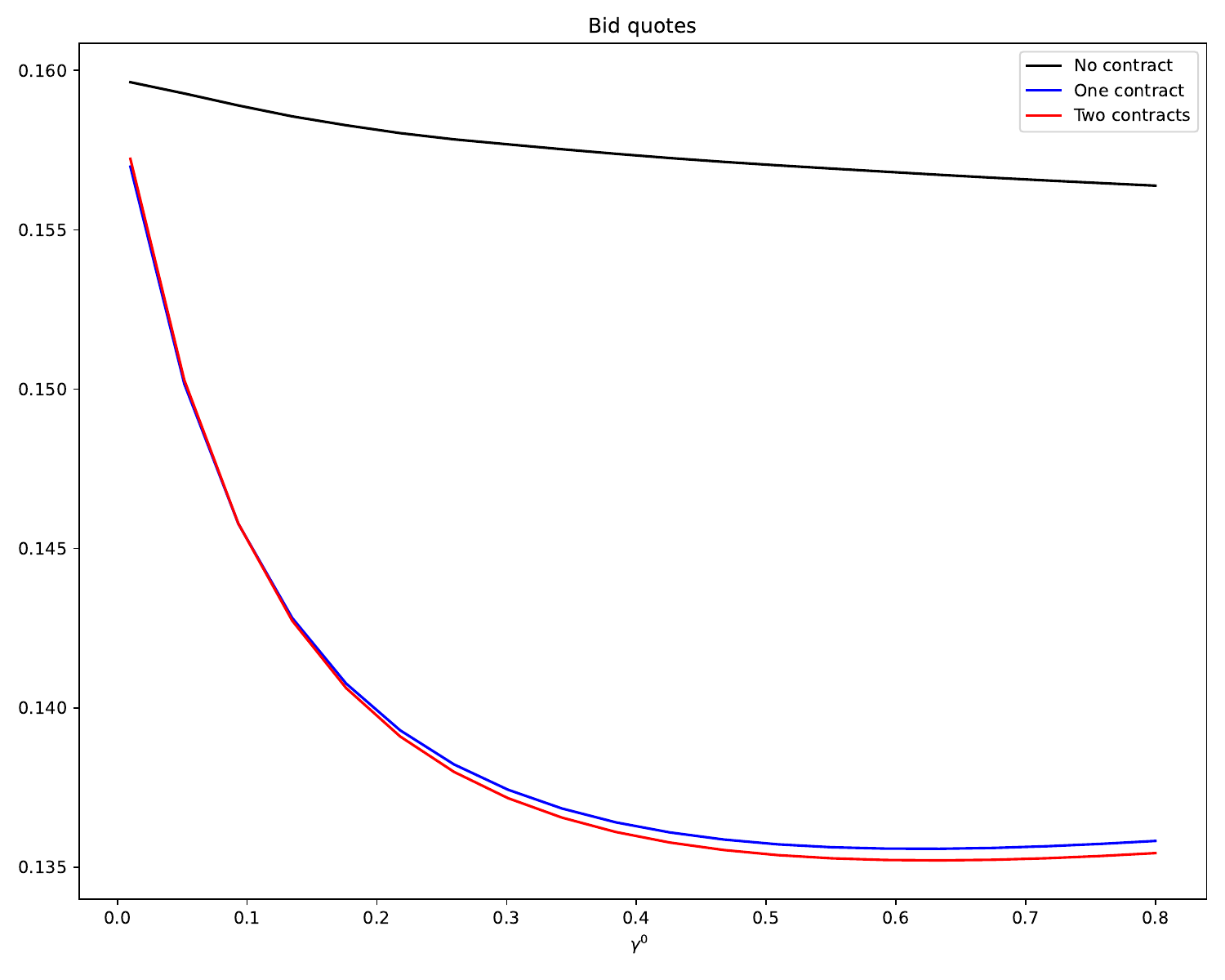}
\end{subfigure}\hfill 
\begin{subfigure}{.45\linewidth}
        \caption{}
  \includegraphics[width=\linewidth]{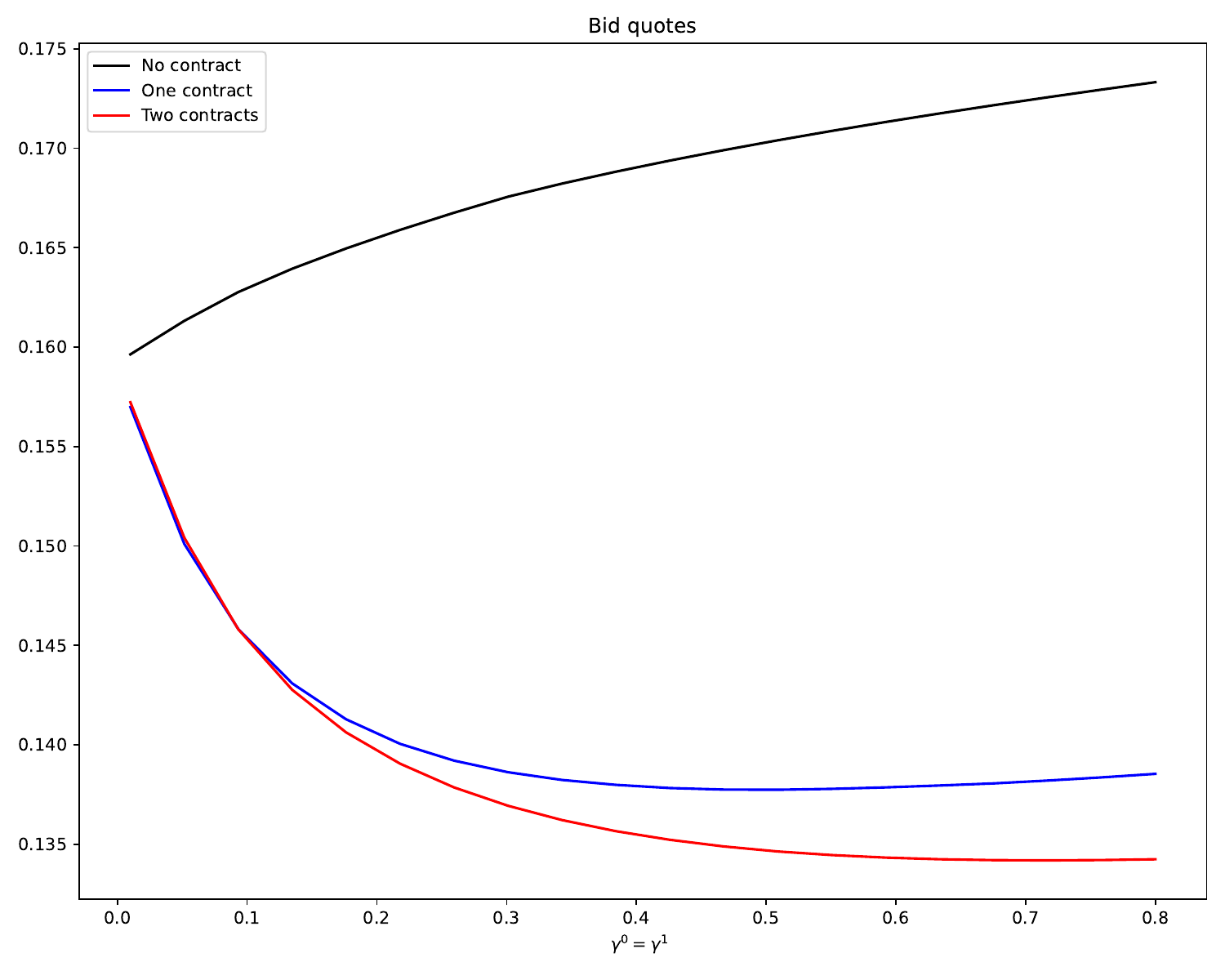}
\end{subfigure}
\caption{Bid quote of the market makers as a function of $\gamma^0$ when both inventories are zero.}
\label{fig:quotes}
\end{figure}

On \Cref{fig:deltas}, the left side shows the optimal bid quote of market maker $M_0$ when market maker $M_1$ has an empty inventory. In the absence of contracts, we observe that as long as its inventory is negative, the optimal bid quotes evolves in line with the intuition, as it increases with inventory. This is because, with a negative inventory, market maker $M_0$ wants to buy and therefore consistently proposes a smaller mid-to-bid quote than market maker $M_1$, which has an empty inventory. As soon as its inventory goes above 0, market maker $M_1$ becomes the leader, in the sense that she has more incentives to buy than market maker $M_0$ who is already long. What we observe here is that, with a reasonably long inventory, market maker $M_0$ will try to keep up and align its quotes with market maker $M_1$ in order to remain competitive, while, on the other side, market maker $M_1$ who is conscious of its advantage over market maker $M_0$ will become more aggressive. This is confirmed by the right side of \Cref{fig:deltas}. However, we see that, at some point, when the inventory of market maker $M_0$ becomes to long, she stops trying to compete because it represents too much risk for her.

\hs

We now turn to the cases where exchanges offer contracts to their respective market makers. When only exchange $E_0$ provides a contract, the quotes of both market makers decrease around inventory zero -- the region where she spends most of her time -- which significantly increases effective liquidity. In the two-contract case, we observe that both market makers reduce their spreads substantially. The quotes become much more stable with respect to inventory and notably lower than the benchmark case without contracts around inventory zero. Both market makers accept more risk by reducing their bid quote with a positive inventory, as they are now compensated by their exchange for providing tighter quotes.

\begin{figure}[!h]
\begin{subfigure}{.5\linewidth}
  \includegraphics[width=\linewidth]{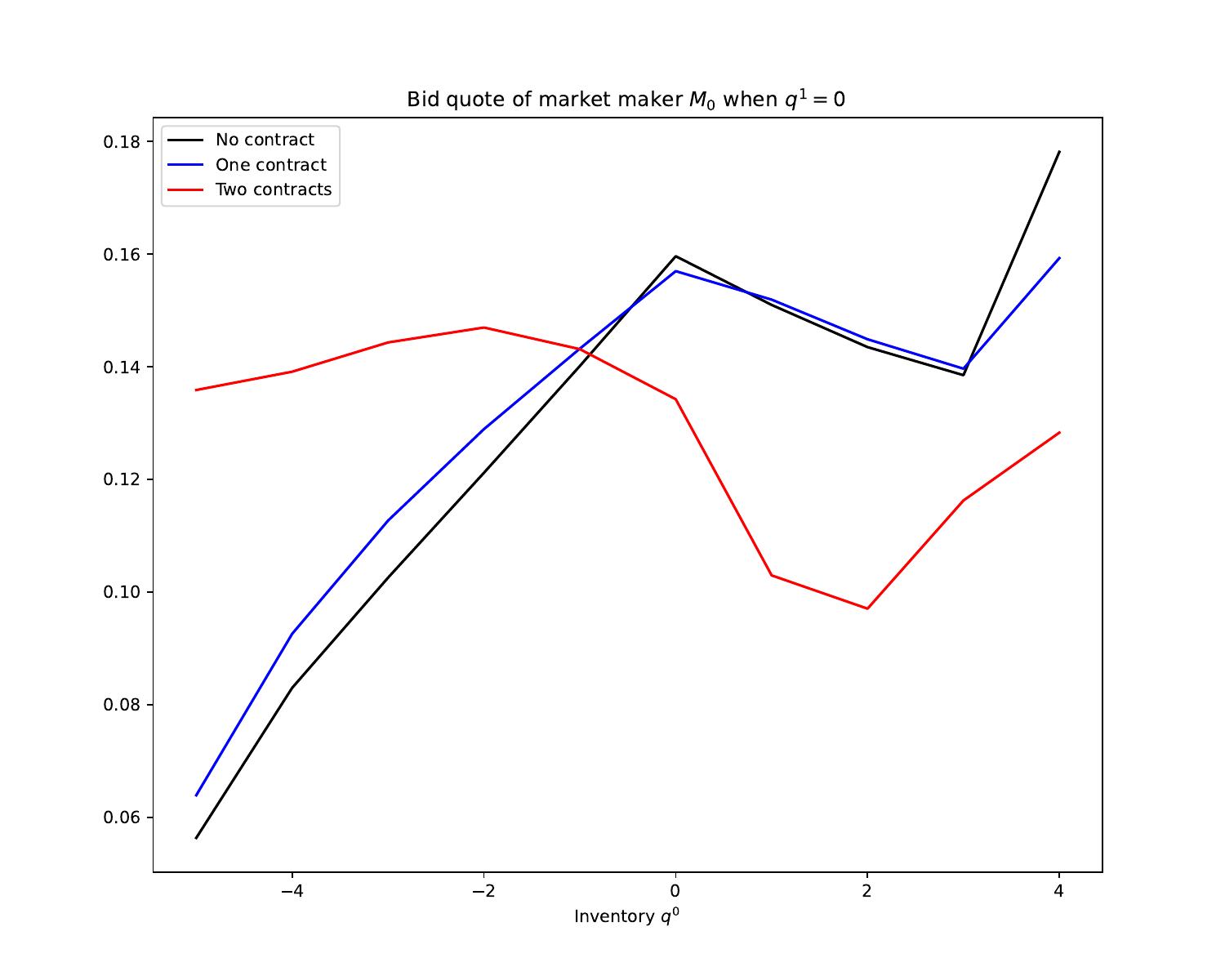}
\end{subfigure}\hfill 
\begin{subfigure}{.5\linewidth}
  \includegraphics[width=\linewidth]{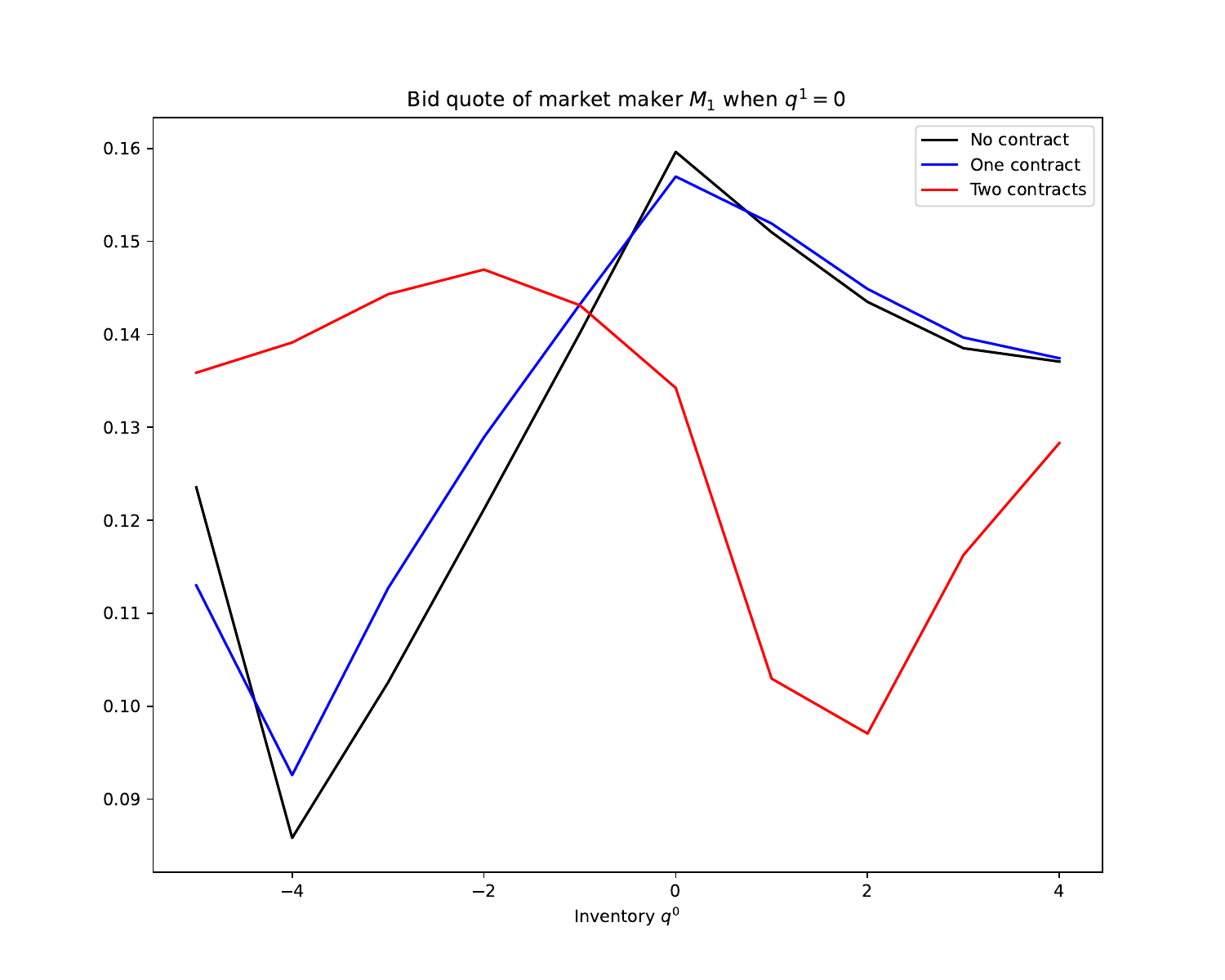}
\end{subfigure}
\caption{Optimal bid quote of the market makers as a function of the inventory of market maker $M_0$ on the left and $M_1$ on the right.}
\label{fig:deltas}
\end{figure}

\section{Conclusion}

This paper provides a model of interaction between exchanges sharing a limit order book. It shows that a passive exchange, i.e. an exchange who is not engaged in incentive provision to her own market market, does benefit from the incentive mechanisms set by her competitor. As a consequence, we clearly identify a spill over effect of competitiveness between platforms which reduces their own incentives to get engaged in costly optimal contracts with their market markets. This paper shows that incentive provision by platforms to their market makers is a public good. As a consequence, regulation should take this fact in consideration when enforcing shared order books. A consequence of such regulation is to deter exchanges from developing innovative solutions to increase market liquidity because her competitors will benefit from these innovations without having to pay to cost of their implementation. 

\appendix

\section{Appendix}

\subsection{Proof of \Cref{FPMM}}

Notice that the function $h^0$ can be written as
$$h^0(d^0, d^1, z^0, q) = h^0_b(d^{b,0}, d^{b,1}, z^{0,b}, q) + h^0_a(d^{a,0}, d^{a,1}, z^{0,a}, q),$$
where for $k\in \{b,a\}$,
\begin{align*}
h^0_k(d^{k,0}, d^{k,1}, z^{0,k}, q) = &\sum_{j=0}^1 \frac{1 - e^{-\gamma^0 \Big(z^{0,k,0,j} + {d^{k,0}} - \beta \big( d^{k,0} - \underline{d^k} \big) \Big)}}{\gamma^0}  \mathds{1}_{\{q^0 \phi(k)  < \bar q \}} \Lambda^{j}\big(d^{k,0}\big)\\
    & + \sum_{j=0}^1 \frac{1 - e^{-\gamma^0 z^{0,k,1,j} }}{\gamma^0}  \mathds{1}_{\{q^1 \phi(k)  < \bar q \}} \Lambda^{j}\big(d^{k,1}\big).    
\end{align*}
Similarly, we can write 
$$h^1(d^0, d^1, z^1, q) = h^1_b(d^{b,0}, d^{b,1}, z^{1,b}, q) + h^1_a(d^{a,0}, d^{a,1}, z^{1,a}, q).$$
We prove the result for the bid side (the proof for the ask side is of course identical).\\

\begin{enumerate}[wide, labelindent=0pt]
    \item First, let us notice that if $q^0 = \bar q$, then for a fixed $(d^{b,1}, z^{0,b})$, the function $h^0_b(., d^{b,1}, z^{0,b}, \bar q)$ is actually constant and therefore the maximum is reached at any point. In particular, it is reached at $\delta_{\infty}$. In what follows, we will only consider fixed points $\delta^\star$ such that $\delta^{\star, k, i} (z,q) = \delta_\infty$ if $q^i\phi(k) = \bar q$.\\
    
    \item Let us now study the function $h^0_b$ when $q^0 < \bar q$.\\
    
    Assume first that $(d^{b,0}, d^{b,1})$ are such that $d^{b,0}<d^{b,1}$. Then we have
\begin{align*}
\frac{\partial h^0_b}{\partial d^{b,0}}   (d^{b,0}, d^{b,1}, z^{0,b}, q) &= \sum_{j=0}^1 \left(-\frac{A^j \kappa}{\sigma \gamma^0} e^{-\frac{\kappa}{\sigma}(d^{b,0}+c^j)} \left(1-e^{-\gamma^0 \left( z^{0,b,0,j} + d^{b,0}\right)} \right) + A^j e^{-\frac{\kappa}{\sigma}(d^{b,0}+c^j)} e^{-\gamma^0 \left( z^{0,b,0,j} + d^{b,0}\right)}\right)\\
&= e^{-\frac{\kappa}{\sigma}d^{b,0}} \left( e^{-\gamma^0 d^{b,0}} \left( \left( 1 + \frac{\kappa}{\sigma \gamma^0}\right) \sum_{j=0}^1 A^j e^{-\frac{\kappa}{\sigma} c^j - \gamma^0 z^{0,b,0,j}} \right) - \frac{\kappa}{\sigma \gamma^0} \sum_{j=0}^1 A^j e^{-\frac{\kappa}{\sigma} c^j} \right).
\end{align*}
The above derivative is equal to 0 when
$$d^{b,0} = \frac{1}{\gamma^0} \left( \log \bigg( 1 + \frac{\sigma \gamma^0}{\kappa}  \bigg) + \log \Bigg( \frac{\sum_{j=0}^1 A^j e^{-\frac{\kappa}{\sigma}c^j - \gamma^0 z^{0,k,0,j}}}{\sum_{j=0}^1 A^j e^{-\frac{\kappa}{\sigma}c^j}}  \Bigg)   \right).$$

From what precedes, for a fixed $(d^{b,1}, z^{0,b}, q)$, if the maximum of $h^0_b$ is reached on $[-\delta_{\infty}, d^{b,1}]$, then it should be equal to  $\Gamma^{b,0}(z^{0,b,0,:},q^0)$.\\

Assume now that $(d^{b,0}, d^{b,1})$ are such that $d^{b,0}>d^{b,1}$. Then we have
\begin{align*}
\frac{\partial h^0_b}{\partial d^{b,0}}   (d^{b,0},& d^{b,1}, z^{0,b}, q) = \sum_{j=0}^1 \Bigg(-\frac{A^j \kappa}{\sigma \gamma^0} e^{-\frac{\kappa}{\sigma}(d^{b,0}+c^j)} \left(1-e^{-\gamma^0 \left( z^{0,b,0,j} + (1-\beta)d^{b,0} + \beta d^{b,1}\right)} \right)\\
&\qquad \qquad \qquad \qquad + A^j(1-\beta) e^{-\frac{\kappa}{\sigma}(d^{b,0}+c^j)} e^{-\gamma^0 \left( z^{0,b,0,j} + (1-\beta)d^{b,0} + \beta d^{b,1}\right)}\Bigg)\\
&= e^{-\frac{\kappa}{\sigma}d^{b,0}} \left( e^{-\gamma^0 \left((1-\beta)d^{b,0} + \beta d^{b,1}\right)} \left( \left( 1 - \beta + \frac{\kappa}{\sigma \gamma^0}\right) \sum_{j=0}^1 A^j e^{-\frac{\kappa}{\sigma} c^j - \gamma^0 z^{0,b,0,j}} \right) - \frac{\kappa}{\sigma \gamma^0} \sum_{j=0}^1 A^j e^{-\frac{\kappa}{\sigma} c^j} \right).
\end{align*}
The above derivative is equal to 0 when
$$d^{b,0} = - \frac{\beta}{1-\beta} d^{b,1} + \frac{1}{\gamma^0 (1-\beta)} \left( \log \bigg( 1 + \frac{(1-\beta)\sigma \gamma^0}{\kappa}  \bigg) + \log \Bigg( \frac{\sum_{j=0}^1 A^j e^{-\frac{\kappa}{\sigma}c^j - \gamma^0 z^{0,k,0,j}}}{\sum_{j=0}^1 A^j e^{-\frac{k}{\sigma}c^j}}  \Bigg)   \right).$$

From what precedes, for a fixed $(d^{b,1}, z^{0,b}, q)$, if the maximum of $h^0_b$ is reached on $[d^{b,1}, +\delta_{\infty}]$, then it should be equal to $\Gamma^{b,0}_\beta (d^{b,1},z^{0,k,0,:},q^0)$.\\

Of course, the same results can be obtained concerning $h^1_b$.

\item A simple analysis of the different cases then allows to derive the associated equilibria.\\

If $\Gamma^{b,0}(z^{0,b,0,:},q^0)\le \Gamma^{b,1}(z^{1,b,1,:},q^{1}),\Gamma^{b,1}_\beta \left(\Gamma^{b,0}(z^{0,b,0,:},q^{0}),z^{1,b,1,:},q^1\right),$ then clearly the values
\begin{align*}
    d^{b,0} &= \Gamma^{b,0}(z^{0,b,0,:},q^0),\\
    d^{b,1} &= \Gamma^{b,1}_\beta \left(\Gamma^{b,0}(z^{0,b,0,:},q^{0}),z^{1,b,1,:},q^1\right)
\end{align*}
correspond to a fixed point.\\

Conversely, if $\Gamma^{b,1}(z^{1,b,1,:},q^1)\le \Gamma^{b,0}(z^{0,b,0,:},q^{0}),\Gamma^{b,0}_\beta \left(\Gamma^{b,1}(z^{1,b,1,:},q^1),z^{0,b,0,:},q^0\right),$ then the values
\begin{align*}
    d^{b,0} &= \Gamma^{b,0}_\beta \left(\Gamma^{b,1}(z^{1,b,1,:},q^1),z^{0,b,0,:},q^0\right),\\
    d^{b,1} &= \Gamma^{b,1}(z^{1,b,1,:},q^1)
\end{align*}
correspond to a fixed point.\\

However, if $\Gamma^{b,0}_\beta \left(\Gamma^{b,1}(z^{1,b,1,:},q^1),z^{0,b,0,:},q^0\right) \le \Gamma^{b,1}(z^{1,b,1,:},q^1) \le \Gamma^{b,0}(z^{0,b,0,:},q^{0}),$ then the fixed point is given by 
\begin{align*}
    d^{b,0} = d^{b,1} = \Gamma^{b,1}(z^{1,b,1,:},q^1),
\end{align*}
and conversely if $\Gamma^{b,1}_\beta \left(\Gamma^{b,0}(z^{0,b,0,:},q^{0}),z^{1,b,1,:},q^1\right) \le \Gamma^{b,0}(z^{0,b,0,:},q^0) \le \Gamma^{b,1}(z^{1,b,1,:},q^{1}),$ then the fixed point is given by 
\begin{align*}
    d^{b,0} = d^{b,1} = \Gamma^{b,0}(z^{0,b,0,:},q^0).
\end{align*}

Considering the same result on the ask side, this defines exactly the function $\Delta:\mathcal R^2 \times [-\bar q, \bar q]^2 \mapsto \mathcal M_2(\mathcal B_{\infty})$ such that for every $(z,q) \in \mathcal R^2 \times [-\bar q, \bar q]$ of \Cref{FPMM}, which is therefore in $\mathcal E$.

\item Notice that this equilibrium is unique up to the choice of $\delta^{\star, k, i} (z,q) = \delta_\infty$ if $q^i\phi(k) = \bar q$, which we imposed in our framework.\\

Hence, we finally have $\bar{\mathcal E}={\mathcal E} =\{\Delta\}$.

\end{enumerate}

\subsection{Dynamic programming principle}

For $i\in \{0,1\}$, a $\mathbb F-$predictable, $[0,T]-$valued stopping time $\tau$, an admissible couple of contracts $\xi \in \mathcal C$, and an admissible strategy $\tilde \delta = \left(\tilde \delta^0, \tilde \delta^1\right)$, we define
\begin{align*}
    J_{M_i} \left(\tau, \tilde \delta^0, \tilde \delta^1, \xi^i \right) = \mathbb E^{\tilde \delta}_\tau \left[ -e^{-\gamma^i \bigg(\int_\tau^T \Big(\tilde \delta^{a,i}_t - \beta \big(\tilde \delta^{a,i}_t - \underline{\tilde \delta^{a}_t}  \big) \Big) \mathrm dN^{a,i}_t + \int_\tau^T \Big(\tilde \delta^{b,i}_t - \beta \big( \tilde \delta^{b,i}_t - \underline{\tilde \delta^{b}_t} \big) \Big) \mathrm dN^{b,i}_t + \int_\tau^T Q^{i}_t \mathrm dS_t + \xi^i \bigg)} \right].
\end{align*}
We also define
$$\mathcal J_{\tau}^0 = \Bigg(J_{M_0} \bigg(\tau, \delta^0, \tilde \delta^1, \xi^0 \bigg) \Bigg)_{\delta^0 \in \mathcal A^0(\tilde \delta^1)}, \qquad V_{M_0}\left(\tau, \tilde \delta^1, \xi^0\right) = \underset{\delta^0 \in \mathcal A^0(\tilde \delta^1)}{\esssup} J_{M_0} \bigg(\tau, \delta^0, \tilde \delta^1, \xi^0 \bigg),$$
and
$$\mathcal J_{\tau}^1 = \Bigg(J_{M_1} \bigg(\tau, \tilde \delta^0, \delta^1, \xi^1 \bigg) \Bigg)_{\delta^1 \in \mathcal A^1(\tilde \delta^0)}, \qquad V_{M_1}\left(\tau, \tilde \delta^0, \xi^1\right) = \underset{\delta^1 \in \mathcal A^1(\tilde \delta^0)}{\esssup} J_{M_1} \bigg(\tau, \tilde \delta^0,  \delta^1, \xi^1 \bigg).$$

\begin{lemme}\label{seqdel}
    Let $t\in [0,T]$ and $\tau$ be a $\mathbb F-$predictable stopping time with value in $[t,T].$ Then there exists a non-decreasing sequence $(\delta^{0,n})_{n\in \mathbb N}$ in $\mathcal A^0(\tilde \delta^1)$ such that
    $$V_{M_0}\left(\tau, \tilde \delta^1, \xi^0\right) = \underset{n\rightarrow +\infty}{\lim} J_{M_0} \bigg(\tau, \delta^{0,n}, \tilde \delta^1, \xi^0 \bigg),$$
    and similarly there exists a non-decreasing sequence $(\delta^{1,n})_{n\in \mathbb N}$ in $\mathcal A^1(\tilde \delta^0)$ such that
    $$V_{M_1}\left(\tau, \tilde \delta^0, \xi^1\right) = \underset{n\rightarrow +\infty}{\lim} J_{M_1} \bigg(\tau, \tilde \delta^{0}, \delta^{1,n}, \xi^1 \bigg).$$
\end{lemme}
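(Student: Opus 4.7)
My plan is to establish this by the standard ``approximation of essential supremum by a non-decreasing sequence'' argument, which reduces to showing that the family $\mathcal J_\tau^0$ (and symmetrically $\mathcal J_\tau^1$) is upward directed, i.e.\ closed under pairwise $\vee$. Once upward directedness is obtained, the classical result of Neveu (Proposition VI-1-1 in \emph{Discrete Parameter Martingales}) immediately yields a non-decreasing sequence $(\delta^{0,n})_n \subset \mathcal A^0(\tilde \delta^1)$ with $J_{M_0}(\tau, \delta^{0,n}, \tilde \delta^1, \xi^0) \uparrow V_{M_0}(\tau, \tilde \delta^1, \xi^0)$ almost surely. I would write out the proof only for $M_0$; the case of $M_1$ is identical by symmetry.

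To get the lattice property, I would fix $\delta^{0,1}, \delta^{0,2} \in \mathcal A^0(\tilde \delta^1)$, set
$$A := \left\{ J_{M_0}(\tau, \delta^{0,1}, \tilde \delta^1, \xi^0) \geq J_{M_0}(\tau, \delta^{0,2}, \tilde \delta^1, \xi^0) \right\} \in \mathcal F_\tau,$$
and define the pasted strategy $\delta^{0,3}_t := \delta^{0,1}_t \mathds 1_A \mathds 1_{\{t>\tau\}} + \delta^{0,2}_t \mathds 1_{A^c} \mathds 1_{\{t>\tau\}} + \delta^{0,2}_t \mathds 1_{\{t\le \tau\}}$. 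Admissibility of $\delta^{0,3}$ is routine: predictability follows from the $\mathcal F_\tau$-measurability of $A$ together with the predictability of $\delta^{0,1}, \delta^{0,2}$; boundedness by $\delta_\infty$ is immediate; and the constraint that $\delta^{0,3,k,0}_t = \delta_\infty$ whenever $\phi(k) Q^0_t = \bar q$ is inherited from the two components. The target identity is
$$J_{M_0}(\tau, \delta^{0,3}, \tilde \delta^1, \xi^0) = J_{M_0}(\tau, \delta^{0,1}, \tilde \delta^1, \xi^0) \mathds 1_A + J_{M_0}(\tau, \delta^{0,2}, \tilde \delta^1, \xi^0) \mathds 1_{A^c},$$
whose right-hand side is by definition $J_{M_0}(\tau, \delta^{0,1}, \tilde \delta^1, \xi^0) \vee J_{M_0}(\tau, \delta^{0,2}, \tilde \delta^1, \xi^0)$, establishing upward directedness.

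The main obstacle is the verification of this last identity, because the conditional expectation $\mathbb E^{\delta}_\tau$ in the definition of $J_{M_0}$ is taken under a probability measure $\mathbb P^\delta$ that itself depends on the control via the Doléans-Dade exponential $L^\delta$. The trick is to use Bayes' rule to rewrite $J_{M_0}(\tau, \delta, \tilde \delta^1, \xi^0)$ as $\mathbb E_\tau[(L^\delta_T/L^\delta_\tau) \cdot (-e^{-\gamma^0(\cdots)})]$ under the reference measure $\mathbb P$, and then exploit the multiplicative structure of $L^\delta$: since $\delta^{0,3}$ agrees with $\delta^{0,1}$ on $(\tau,T] \cap A$ and with $\delta^{0,2}$ on $(\tau,T] \cap A^c$ pathwise, the tail ratio $L^{\delta^{0,3}\otimes_0 \tilde \delta^1}_T / L^{\delta^{0,3}\otimes_0\tilde \delta^1}_\tau$ coincides with $L^{\delta^{0,1}\otimes_0 \tilde \delta^1}_T/L^{\delta^{0,1}\otimes_0 \tilde \delta^1}_\tau$ on $A$ and analogously on $A^c$; combining this with the $\mathcal F_\tau$-measurability of $\mathds 1_A$ allows $\mathds 1_A$ to be pulled out of $\mathbb E_\tau[\cdot]$, yielding the identity. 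Once this is in hand, Neveu's lemma applied to the upward-directed family $\mathcal J_\tau^0$ delivers the claimed monotone approximating sequence, and we are done.
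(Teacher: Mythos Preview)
Your proposal is correct and follows essentially the same route as the paper: prove that $\mathcal J_\tau^0$ is upward directed by pasting two controls along the $\mathcal F_\tau$-measurable set $A=\{J_{M_0}(\tau,\delta^{0,1},\tilde\delta^1,\xi^0)\ge J_{M_0}(\tau,\delta^{0,2},\tilde\delta^1,\xi^0)\}$, then invoke Neveu's Proposition VI-1-1. The paper simply defines $\hat\delta^0=\delta^0\mathds 1_A+\delta^{0'}\mathds 1_{A^c}$ without the time-splitting and without spelling out the Bayes-rule computation for the $\mathbb P^\delta$-conditional expectation; your version, which pastes only on $(\tau,T]$ and makes the density ratio $L_T^\delta/L_\tau^\delta$ explicit, is a bit more careful about predictability of the pasted control and about why the identity $J_{M_0}(\tau,\delta^{0,3},\tilde\delta^1,\xi^0)=\mathds 1_A J_{M_0}(\tau,\delta^{0,1},\tilde\delta^1,\xi^0)+\mathds 1_{A^c}J_{M_0}(\tau,\delta^{0,2},\tilde\delta^1,\xi^0)$ actually holds, but the substance is the same.
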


\begin{proof}
    We only prove the result for $V_{M_0}$, as the proof for $V_{M_1}$ is similar.\\

    For $\delta^0$ and $\delta^{0'}$ in $\mathcal A^0(\tilde \delta^1)$ define
    $$\hat \delta^0 = \delta^0 \mathds 1_{\{J_{M_0} (\tau, \delta^0, \tilde \delta^1, \xi^0 ) \ge J_{M_0} (\tau, \delta^{0'}, \tilde \delta^1, \xi^0 )\}} + \delta^{0'} \mathds 1_{\{J_{M_0} (\tau, \delta^0, \tilde \delta^1, \xi^0 ) < J_{M_0} (\tau, \delta^{0'}, \tilde \delta^1, \xi^0 )\}}.$$
    Then $\hat \delta^0 \in \mathcal A^0(\tilde \delta^1)$, and by definition of $\hat \delta^0$:
    $$J_{M_0} (\tau, \hat \delta^0, \tilde \delta^1, \xi^0 ) \ge \max \left(J_{M_0} (\tau, \delta^0, \tilde \delta^1, \xi^0 ), J_{M_0} (\tau, \delta^{0'}, \tilde \delta^1, \xi^0 ) \right).$$
    Hence $\mathcal J_{\tau}^0 $ is directed upwards, and the result follows from \cite[Proposition VI.I.I]{neveu1975discrete}.
\end{proof}

\begin{lemme}\label{PPD}
    Let $t\in [0,T]$ and $\tau$ be a $\mathbb F-$predictable stopping time with value in $[t,T].$ Then
    \begin{align*}
       &V_{M_0}\left(t, \tilde \delta^1, \xi^0\right)\\
       &= \underset{\delta^0 \in \mathcal A^0(\tilde \delta^1)}{\esssup} \mathbb E^{\delta^0 \otimes \tilde \delta^1}_t \Bigg[ e^{-\gamma^0 \bigg(\int_t^\tau \Big( \delta^{a,0}_s - \beta \big( \delta^{a,0}_s - \underline{\delta^{a,0}_s \otimes \tilde \delta^{a,1}_s}  \big) \Big) \mathrm dN^{a,0}_s + \int_t^\tau \Big( \delta^{b,0}_s - \beta \big(  \delta^{b,0}_s - \underline{\delta^{b,0}_s \otimes \tilde \delta^{b,1}_s} \big) \Big) \mathrm dN^{b,0}_s + \int_t^\tau Q^{0}_s \mathrm dS_s\bigg)}\\
       & \qquad \qquad \qquad \qquad \qquad \qquad \qquad \qquad \times V_{M_0}\left(\tau, \tilde \delta^1, \xi^0\right) \Bigg], 
    \end{align*}
    and similarly
    \begin{align*}
       &V_{M_1}\left(t, \tilde \delta^0, \xi^1\right)\\
       &= \underset{\delta^1 \in \mathcal A^1(\delta^0)}{\esssup} \mathbb E^{\tilde \delta^0 \otimes \delta^1}_t \Bigg[ e^{-\gamma^1 \bigg(\int_t^\tau \Big( \delta^{a,1}_s - \beta \big(\delta^{a,1}_s - \underline{\tilde \delta^{a,0}_s \otimes  \delta^{a,1}_s}  \big) \Big) \mathrm dN^{a,1}_s + \int_t^\tau \Big(\delta^{b,1}_s - \beta \big(  \delta^{b,1}_s - \underline{\tilde \delta^{b,0}_s \otimes  \delta^{b,1}_s} \big) \Big) \mathrm dN^{b,1}_s + \int_t^\tau Q^{1}_s \mathrm dS_s\bigg)}\\
       & \qquad \qquad \qquad \qquad \qquad \qquad \qquad \qquad \times  V_{M_1}\left(\tau, \tilde \delta^0, \xi^1\right) \Bigg].
    \end{align*}
\end{lemme}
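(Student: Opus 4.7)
The plan is to prove the two inequalities of the dynamic programming principle separately by combining the multiplicative structure of the exponential utility, the tower property of conditional expectations, and Lemma \ref{seqdel}. By symmetry it suffices to treat $V_{M_0}$. The cornerstone of the argument is that the Doléans--Dade exponential $L^\delta$ factorizes as $L^\delta_T = L^\delta_\tau \cdot (L^\delta_T/L^\delta_\tau)$, where the second factor only involves $\delta$ restricted to $(\tau,T]$. As a consequence, the conditional expectation $\mathbb E^\delta_\tau$ applied to any random variable measurable with respect to the increments of $\chi$ on $[\tau,T]$ depends only on the restriction of $\delta$ to $(\tau,T]$.

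Introduce, for $\delta = \delta^0 \otimes \tilde\delta^1 \in \mathcal A$, the shorthand
\[
\Xi^{t,\tau}(\delta) := \int_t^\tau \!\Big(\delta^{a,0}_s - \beta (\delta^{a,0}_s - \underline{\delta^{a}_s})\Big)\mathrm dN^{a,0}_s + \int_t^\tau \!\Big(\delta^{b,0}_s - \beta (\delta^{b,0}_s - \underline{\delta^{b}_s})\Big)\mathrm dN^{b,0}_s + \int_t^\tau Q^0_s \mathrm dS_s.
\]
Additivity of the integrals at $\tau$ combined with the tower property applied at $\mathcal F_\tau$ yields the identity
\[
J_{M_0}(t, \delta^0, \tilde\delta^1, \xi^0) = \mathbb E^{\delta}_t\Big[\,e^{-\gamma^0 \Xi^{t,\tau}(\delta)}\, J_{M_0}(\tau, \delta^0, \tilde\delta^1, \xi^0)\,\Big].
\]
For the $\le$ direction, since $J_{M_0}(\tau, \delta^0, \tilde\delta^1, \xi^0) \le V_{M_0}(\tau, \tilde\delta^1, \xi^0)$ a.s.\ by definition of the essential supremum and the factor $e^{-\gamma^0 \Xi^{t,\tau}(\delta)}$ is positive, substituting into the identity above and taking the essential supremum over $\delta^0 \in \mathcal A^0(\tilde\delta^1)$ gives the upper bound.

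For the $\ge$ direction, fix an arbitrary $\delta^0 \in \mathcal A^0(\tilde\delta^1)$. Apply Lemma \ref{seqdel} to extract a nondecreasing sequence $(\delta^{0,n})_n \subset \mathcal A^0(\tilde\delta^1)$ with $J_{M_0}(\tau, \delta^{0,n}, \tilde\delta^1, \xi^0) \uparrow V_{M_0}(\tau, \tilde\delta^1, \xi^0)$. Form the concatenation $\hat\delta^{0,n} := \delta^0 \mathds 1_{[0,\tau]} + \delta^{0,n} \mathds 1_{(\tau,T]}$, which lies in $\mathcal A^0(\tilde\delta^1)$ since predictability, the uniform bound $\delta_\infty$, and the inventory boundary condition are preserved pathwise. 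By the measure-change factorization, $J_{M_0}(\tau, \hat\delta^{0,n}, \tilde\delta^1, \xi^0) = J_{M_0}(\tau, \delta^{0,n}, \tilde\delta^1, \xi^0)$ a.s., while $\Xi^{t,\tau}$ is unchanged by the modification on $(\tau,T]$. Plugging $\hat\delta^{0,n}$ into the key identity, using $V_{M_0}(t,\tilde\delta^1,\xi^0) \ge J_{M_0}(t,\hat\delta^{0,n},\tilde\delta^1,\xi^0)$, and letting $n\to\infty$ by monotone convergence applied to the nondecreasing sequence of nonpositive integrands give
\[
V_{M_0}(t, \tilde\delta^1, \xi^0) \ge \mathbb E^{\delta^0\otimes\tilde\delta^1}_t\Big[\,e^{-\gamma^0 \Xi^{t,\tau}}\, V_{M_0}(\tau, \tilde\delta^1, \xi^0)\,\Big].
\]
Taking the essential supremum over $\delta^0$ yields the matching lower bound.

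The main technical obstacle is the careful bookkeeping of the measure change: one must rigorously verify that $\mathbb P^\delta$ and $\mathbb P^{\delta'}$ coincide on $\mathcal F_\tau$ whenever $\delta, \delta'$ agree on $[0,\tau]$, so that $J_{M_0}(\tau,\cdot,\tilde\delta^1,\xi^0)$ indeed only depends on the restriction of the control to $(\tau,T]$. Integrability required for the tower property and for the monotone convergence step is provided by the admissibility condition \eqref{integcond0} on $\xi^0$, the uniform boundedness of the controls by $\delta_\infty$, and the resulting boundedness of the order-arrival intensities. The upward-directed structure of the family, on which Lemma \ref{seqdel} rests, is what legitimizes interchanging limits with the essential supremum.
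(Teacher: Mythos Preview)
Your proof is correct and follows essentially the same approach as the paper's: both directions rely on the tower property together with the multiplicative structure of the exponential criterion, and the lower bound is obtained by concatenating a fixed control on $[0,\tau]$ with an approximating sequence on $(\tau,T]$ drawn from Lemma~\ref{seqdel}, then passing to the limit by monotone convergence. The paper carries out the change of outer conditional expectation from $\mathbb E^{\hat\delta^{0,n}\otimes\tilde\delta^1}_t$ to $\mathbb E^{\delta^0\otimes\tilde\delta^1}_t$ by writing out the Radon--Nikodym ratios $L_\tau/L_t$ and $L_T/L_\tau$ explicitly, whereas you defer this to the remark that $\mathbb P^\delta$ and $\mathbb P^{\delta'}$ agree on $\mathcal F_\tau$ when the controls agree on $[0,\tau]$; this step is essential (otherwise the outer measure would vary with $n$ and monotone convergence would not apply directly), so you may want to move it before the limit argument rather than listing it as a residual technical obstacle.
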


\begin{proof}
    We only prove the result for $V_{M_0}$, as the proof for $V_{M_1}$ is similar.\\
    
    Let $t\in [0,T]$ and $\tau$ be a $\mathbb F-$predictable stopping time with value in $[t,T].$ First, we have
    \begin{align*}
        &V_{M_0}\left(t, \tilde \delta^1, \xi^0\right)\\
        & = \underset{\delta^0 \in \mathcal A^0(\tilde \delta^1)}{\esssup} \mathbb E^{\delta^0 \otimes \tilde \delta^1}_t \left[ -e^{-\gamma^0 \bigg(\int_t^\tau \Big( \delta^{a,0}_s - \beta \big( \delta^{a,0}_s - \underline{\delta^{a,0}_s \otimes \tilde \delta^{a,1}_s}  \big) \Big) \mathrm dN^{a,0}_s + \int_t^\tau \Big( \delta^{b,0}_s - \beta \big(  \delta^{b,0}_s - \underline{\delta^{b,0}_s \otimes \tilde \delta^{b,1}_s} \big) \Big) \mathrm dN^{b,0}_s + \int_t^\tau Q^{0}_s \mathrm dS_s + \xi^0 \bigg)} \right]\\
        & = \underset{\delta^0 \in \mathcal A^0(\tilde \delta^1)}{\esssup} \mathbb E^{\delta^0 \otimes \tilde \delta^1}_t \left[ e^{-\gamma^0 \bigg(\int_t^\tau \Big( \delta^{a,0}_s - \beta \big( \delta^{a,0}_s - \underline{\delta^{a,0}_s \otimes \tilde \delta^{a,1}_s}  \big) \Big) \mathrm dN^{a,0}_s + \int_t^\tau \Big(\delta^{b,0}_s - \beta \big(  \delta^{b,0}_s - \underline{\delta^{b,0}_s \otimes \tilde \delta^{b,1}_s} \big) \Big) \mathrm dN^{b,0}_s + \int_t^\tau Q^{0}_s \mathrm dS_s\bigg)} \right.\\
        &\qquad  \left .\times \mathbb E^{\delta^0 \otimes \tilde \delta^1}_\tau \left[-e^{-\gamma^0 \bigg(\int_\tau^T \Big( \delta^{a,0}_s - \beta \big( \delta^{a,0}_s - \underline{\delta^{a,0}_s \otimes \tilde \delta^{a,1}_s}  \big) \Big) \mathrm dN^{a,0}_s + \int_\tau^T \Big( \delta^{b,0}_s - \beta \big( \delta^{b,0}_s - \underline{\delta^{b,0}_s \otimes \tilde \delta^{b,1}_s} \big) \Big) \mathrm dN^{b,0}_s + \int_\tau^T Q^{0}_s \mathrm dS_s + \xi^0 \bigg)}\right] \right]\\
        & \le\!\!\! \underset{\delta^0 \in \mathcal A^0(\tilde \delta^1)}{\esssup}\!\! \mathbb E^{\delta^0 \otimes \tilde \delta^1}_t\!\!\! \left[ e^{-\gamma^0\! \bigg(\!\!\int_t^\tau\! \Big( \delta^{a,0}_s - \beta \big( \delta^{a,0}_s - \underline{\delta^{a,0}_s \otimes \tilde \delta^{a,1}_s}  \big) \Big) \mathrm dN^{a,0}_s + \int_t^\tau\! \Big( \delta^{b,0}_s - \beta \big(\delta^{b,0}_s - \underline{\delta^{b,0}_s \otimes \tilde \delta^{b,1}_s} \big) \Big) \mathrm dN^{b,0}_s + \int_t^\tau\! Q^{0}_s \mathrm dS_s\!\!\bigg)} V_{M_0}\left(\tau, \tilde \delta^1, \xi^0\right)  \right]\!\!.
    \end{align*}

    We now prove the other inequality. Let $\delta^0, {\delta^0}' \in \mathcal A^0(\tilde \delta^1)$ and define
    $$\delta^{0,\tau}_s = \delta^0_s \mathds{1}_{\{0\le s <\tau\}} +  {\delta^0}'_s\mathds{1}_{\{ \tau \le s \le T\}}.$$ Then $\delta^{0,\tau} \in \mathcal A^0(\tilde \delta^1)$ and
    \begin{align*}
        &V_{M_0}\left(t, \tilde \delta^1, \xi^0\right)\\
        &\ge  \mathbb E^{\delta^{0,\tau} \otimes \tilde \delta^1}_t \left[ -e^{-\gamma^0 \bigg(\int_t^\tau \Big(\delta^{a,0, \tau}_s - \beta \big( \delta^{a,0, \tau}_s - \underline{\delta^{a,0,\tau}_s \otimes \tilde \delta^{a,1}_s}  \big) \Big) \mathrm dN^{a,0}_s + \int_t^\tau \Big(\delta^{b,0, \tau}_s - \beta \big( \delta^{b,0, \tau}_s - \underline{\delta^{b,0,\tau}_s \otimes \tilde \delta^{b,1}_s} \big) \Big) \mathrm dN^{b,0}_s + \int_t^\tau Q^{0}_s \mathrm dS_s + \xi^0 \bigg)} \right]\\
        &= \mathbb E^{\delta^{0,\tau} \otimes \tilde \delta^1}_t \left[ e^{-\gamma^0 \bigg(\int_t^\tau \Big(\delta^{a,0, \tau}_s - \beta \big(\delta^{a,0, \tau}_s - \underline{\delta^{a,0,\tau}_s \otimes \tilde \delta^{a,1}_s}  \big) \Big) \mathrm dN^{a,0}_s + \int_t^\tau \Big(\delta^{b,0, \tau}_s - \beta \big( \delta^{b,0, \tau}_s - \underline{\delta^{b,0,\tau}_s \otimes \tilde \delta^{b,1}_s} \big) \Big) \mathrm dN^{b,0}_s + \int_t^\tau Q^{0}_s \mathrm dS_s\bigg)} \right.\\
        &\qquad  \left .\times \mathbb E^{\delta^{0,\tau} \otimes \tilde \delta^1}_\tau \!\!\! \left[-e^{-\gamma^0 \bigg(\!\int_\tau^T \!\Big(\delta^{a,0, \tau}_s - \beta \big(\delta^{a,0, \tau}_s - \underline{\delta^{a,0,\tau}_s \otimes \tilde \delta^{a,1}_s}  \big) \Big) \mathrm dN^{a,0}_s + \int_\tau^T\! \Big(\delta^{b,0, \tau}_s - \beta \big( \delta^{b,0, \tau}_s - \underline{\delta^{b,0,\tau}_s \otimes \tilde \delta^{b,1}_s} \big) \Big) \mathrm dN^{b,0}_s + \int_\tau^T\! Q^{0}_s \mathrm dS_s + \xi^0 \!\bigg)}\right] \right]\!\!.\\
    \end{align*}

    Notice now that $\frac{L_T^{\delta^{0,\tau}\otimes \tilde \delta^1}}{L_\tau^{\delta^{0,\tau}\otimes \tilde \delta^1}} = \frac{L_T^{{\delta^{0}}'\otimes \tilde \delta^1}}{L_\tau^{{\delta^{0}}'\otimes \tilde \delta^1}}$, and therefore:
    \begin{align*}
      & \mathbb E^{\delta^{0,\tau} \otimes \tilde \delta^1}_\tau \left[-e^{-\gamma^0 \bigg(\int_\tau^T \Big(\delta^{a,0, \tau}_s - \beta \big(\delta^{a,0, \tau}_s - \underline{\delta^{a,0,\tau}_s \otimes \tilde \delta^{a,1}_s}  \big) \Big) \mathrm dN^{a,0}_s + \int_\tau^T \Big(\delta^{b,0, \tau}_s - \beta \big( \delta^{b,0, \tau}_s - \underline{\delta^{b,0,\tau}_s \otimes \tilde \delta^{b,1}_s} \big) \Big) \mathrm dN^{b,0}_s + \int_\tau^T Q^{0}_s \mathrm dS_s + \xi^0 \bigg)}\right] \\
      = & \mathbb E^{\mathbb P}_\tau \left[-e^{-\gamma^0 \bigg(\int_\tau^T \Big(\delta^{a,0, \tau}_s - \beta \big(\delta^{a,0, \tau}_s - \underline{\delta^{a,0,\tau}_s \otimes \tilde \delta^{a,1}_s}  \big) \Big) \mathrm dN^{a,0}_s + \int_\tau^T \Big(\delta^{b,0, \tau}_s - \beta \big( \delta^{b,0, \tau}_s - \underline{\delta^{b,0,\tau}_s \otimes \tilde \delta^{b,1}_s} \big) \Big) \mathrm dN^{b,0}_s + \int_\tau^T Q^{0}_s \mathrm dS_s + \xi^0 \bigg)} \frac{L_T^{\delta^{0,\tau}\otimes \tilde \delta^1}}{L_\tau^{\delta^{0,\tau}\otimes \tilde \delta^1}}\right]\\
      = & J_{M_0} \left(\tau, {\delta^{0}}', \tilde \delta^1, \xi^0 \right).
    \end{align*}

    Hence we obtain
    \begin{align*}
        &V_{M_0}\left(t, \tilde \delta^1, \xi^0\right)\\
        \ge &\mathbb E^{\delta^{0,\tau} \otimes \tilde \delta^1}_t \Bigg[ e^{-\gamma^0 \bigg(\int_t^\tau \Big(\delta^{a,0, \tau}_s - \beta \big(\delta^{a,0, \tau}_s - \underline{\delta^{a,0,\tau}_s \otimes \tilde \delta^{a,1}_s}  \big) \Big) \mathrm dN^{a,0}_s + \int_t^\tau \Big(\delta^{b,0, \tau}_s - \beta \big( \delta^{b,0, \tau}_s - \underline{\delta^{b,0,\tau}_s \otimes \tilde \delta^{b,1}_s} \big) \Big) \mathrm dN^{b,0}_s + \int_t^\tau Q^{0}_s \mathrm dS_s\bigg)}\\
        & \qquad \qquad \times J_{M_0} \left(\tau, {\delta^{0}}', \tilde \delta^1, \xi^0 \right)\Bigg].
    \end{align*}

    Using now the fact that $\frac{L_\tau^{\delta^{0,\tau}\otimes \tilde \delta^1}}{L_t^{\delta^{0,\tau}\otimes \tilde \delta^1}} = \frac{L_\tau^{{\delta^{0}}\otimes \tilde \delta^1}}{L_t^{{\delta^{0}}\otimes \tilde \delta^1}}$, we get:
    \begin{align*}
        &V_{M_0}\left(t, \tilde \delta^1, \xi^0\right)\\
        \ge &\mathbb E^{\mathbb P}_t \Bigg[ \frac{L_T^{\delta^{0,\tau}\otimes \tilde \delta^1}}{L_t^{\delta^{0,\tau}\otimes \tilde \delta^1}} e^{-\gamma^0 \bigg(\int_t^\tau \Big(\delta^{a,0, \tau}_s - \beta \big(\delta^{a,0, \tau}_s - \underline{\delta^{a,0,\tau}_s \otimes \tilde \delta^{a,1}_s} \big) \Big) \mathrm dN^{a,0}_s + \int_t^\tau \Big(\delta^{b,0, \tau}_s - \beta \big( \delta^{b,0, \tau}_s - \underline{\delta^{b,0,\tau}_s \otimes \tilde \delta^{b,1}_s} \big) \Big) \mathrm dN^{b,0}_s + \int_t^\tau Q^{0}_s \mathrm dS_s\bigg)}\\
        &\qquad \qquad \times J_{M_0} \left(\tau, {\delta^{0}}', \tilde \delta^1, \xi^0 \right)\Bigg]\\
        =& \mathbb E^{\mathbb P}_t\!\! \Bigg[\mathbb E^{\mathbb P}_\tau \!\!\Bigg[ \frac{L_T^{\delta^{0,\tau}\otimes \tilde \delta^1}}{L_t^{\delta^{0,\tau}\otimes \tilde \delta^1}} e^{-\gamma^0 \bigg(\int_t^\tau \Big(\delta^{a,0, \tau}_s - \beta \big(\delta^{a,0, \tau}_s - \underline{\delta^{a,0,\tau}_s \otimes \tilde \delta^{a,1}_s}  \big) \Big) \mathrm dN^{a,0}_s + \int_t^\tau \Big(\delta^{b,0, \tau}_s - \beta \big( \delta^{b,0, \tau}_s - \underline{\delta^{b,0,\tau}_s \otimes \tilde \delta^{b,1}_s} \big) \Big) \mathrm dN^{b,0}_s + \int_t^\tau Q^{0}_s \mathrm dS_s\bigg)}\\
        & \qquad \qquad \times J_{M_0} \left(\tau, {\delta^{0}}', \tilde \delta^1, \xi^0 \right)\Bigg] \Bigg]\\
         =& \mathbb E^{\mathbb P}_t\!\! \left[\!\mathbb E^{\mathbb P}_\tau\!\! \left[ \frac{L_T^{\delta^{0,\tau}\otimes \tilde \delta^1}}{L_\tau^{\delta^{0,\tau}\otimes \tilde \delta^1}}\! \right]\!\! \frac{L_\tau^{\delta^{0,\tau}\otimes \tilde \delta^1}}{L_t^{\delta^{0,\tau}\otimes \tilde \delta^1}} e^{-\gamma^0\! \bigg(\!\!\int_t^\tau\! \Big(\delta^{a,0, \tau}_s\! - \beta \big(\delta^{a,0, \tau}_s - \underline{\delta^{a,0,\tau}_s \otimes \tilde \delta^{a,1}_s}  \big) \Big) \mathrm dN^{a,0}_s + \int_t^\tau\! \Big(\delta^{b,0, \tau}_s\! - \beta \big( \delta^{b,0, \tau}_s - \underline{\delta^{b,0,\tau}_s \otimes \tilde \delta^{b,1}_s} \big) \Big) \mathrm dN^{b,0}_s + \int_t^\tau\! Q^{0}_s \mathrm dS_s\!\!\bigg)}\right. \\
         &  \qquad \qquad \times \left. \vphantom{\mathbb E^{\mathbb P}_\tau \left[ \frac{L_T^{\delta^{0,\tau}\otimes \tilde \delta^1}}{L_\tau^{\delta^{0,\tau}\otimes \tilde \delta^1}} \right] \frac{L_\tau^{\delta^{0,\tau}\otimes \tilde \delta^1}}{L_t^{\delta^{0,\tau}\otimes \tilde \delta^1}} e^{-\gamma^0 \bigg(\int_t^\tau \Big(\delta^{a,0, \tau}_s - \beta \big(\delta^{a,0, \tau}_s - \underline{\tilde \delta^{a}_s}  \big) \Big) \mathrm dN^{a,0}_s + \int_t^\tau \Big(\delta^{b,0, \tau}_s - \beta \big( \delta^{b,0, \tau}_s - \underline{\tilde \delta^{b}_s} \big) \Big) \mathrm dN^{b,0}_s + \int_t^\tau Q^{0}_s \mathrm dS_s\bigg)}} J_{M_0} \left(\tau, {\delta^{0}}', \tilde \delta^1, \xi^0 \right)\right] \\
         =& \mathbb E^{\mathbb P}_t \Bigg[ \frac{L_\tau^{\delta^{0,\tau}\otimes \tilde \delta^1}}{L_t^{\delta^{0,\tau}\otimes \tilde \delta^1}} e^{-\gamma^0 \bigg(\int_t^\tau \Big(\delta^{a,0, \tau}_s - \beta \big(\delta^{a,0, \tau}_s - \underline{\delta^{a,0,\tau}_s \otimes \tilde \delta^{a,1}_s}  \big) \Big) \mathrm dN^{a,0}_s + \int_t^\tau \Big(\delta^{b,0, \tau}_s - \beta \big( \delta^{b,0, \tau}_s - \underline{\delta^{b,0,\tau}_s \otimes \tilde \delta^{b,1}_s} \big) \Big) \mathrm dN^{b,0}_s + \int_t^\tau Q^{0}_s \mathrm dS_s\bigg)}\\
         &\qquad \qquad \times J_{M_0} \left(\tau, {\delta^{0}}', \tilde \delta^1, \xi^0 \right)\Bigg]\\
         =& \mathbb E^{\delta^{0} \otimes \tilde \delta^1}_t \Bigg[e^{-\gamma^0 \bigg(\int_t^\tau \Big(\delta^{a,0, \tau}_s - \beta \big(\delta^{a,0, \tau}_s - \underline{\delta^{a,0,\tau}_s \otimes \tilde \delta^{a,1}_s}  \big) \Big) \mathrm dN^{a,0}_s + \int_t^\tau \Big(\delta^{b,0, \tau}_s - \beta \big( \delta^{b,0, \tau}_s - \underline{\delta^{b,0,\tau}_s \otimes \tilde \delta^{b,1}_s} \big) \Big) \mathrm dN^{b,0}_s + \int_t^\tau Q^{0}_s \mathrm dS_s\bigg)}\\
         &\qquad \qquad \times J_{M_0} \left(\tau, {\delta^{0}}', \tilde \delta^1, \xi^0 \right)\Bigg].
    \end{align*}

    This inequality holds for all ${\delta^0}' \in \mathcal A^0(\tilde \delta^1)$, hence using \Cref{seqdel} and monotone convergence, we get:
    \begin{align*}
        &V_{M_0}\left(t, \tilde \delta^1, \xi^0\right)\\
        \ge & \mathbb E^{\delta^{0} \otimes \tilde \delta^1}_t \Bigg[e^{-\gamma^0 \bigg(\int_t^\tau \Big(\delta^{a,0, \tau}_s - \beta \big(\delta^{a,0, \tau}_s - \underline{\delta^{a,0,\tau}_s \otimes \tilde \delta^{a,1}_s}  \big) \Big) \mathrm dN^{a,0}_s + \int_t^\tau \Big(\delta^{b,0, \tau}_s - \beta \big( \delta^{b,0, \tau}_s - \underline{\delta^{b,0,\tau}_s \otimes \tilde \delta^{b,1}_s} \big) \Big) \mathrm dN^{b,0}_s + \int_t^\tau Q^{0}_s \mathrm dS_s\bigg)}\\
        &\qquad \qquad \times V_{M_0} \left(\tau, \tilde \delta^1, \xi^0 \right)\Bigg],
    \end{align*}
    which proves the result.

\end{proof}

\subsection{Proof of \Cref{contractrep} and \Cref{NashMM}}

Let us start with the following lemma.

\begin{lemme}\label{leminteg}
    For all $\delta \in \mathcal A$, the processes $V_{M_0}\left(., \delta^1, \xi^0\right)$ and $V_{M_1}\left(., \delta^0, \xi^1\right)$ are negative, and there exists $p>0$ such that
    $$\mathbb E^{\delta} \left[ \underset{t \in [0,T]}{\sup} \left| V_{M_0}\left(t, \delta^1, \xi^0\right)\right|^{1+p}\right]<+\infty,$$ 
    $$\mathbb E^{\delta} \left[ \underset{(u,t) \in [0,T]^2}{\sup} \left( e^{-\gamma^0 \bigg(\int_u^t \Big( \delta^{a,0}_s - \beta \big( \delta^{a,0}_s - \underline{ \delta^{a}_s}  \big) \Big) \mathrm dN^{a,0}_s + \int_u^t \Big( \delta^{b,0}_s - \beta \big(  \delta^{b,0}_s - \underline{ \delta^{b}_s} \big) \Big) \mathrm dN^{b,0}_s + \int_u^t Q^{0}_s \mathrm dS_s\bigg)}\right)^{1+p}\right]<+\infty$$
    and
    $$\mathbb E^{\delta} \left[ \underset{t \in [0,T]}{\sup} \left| V_{M_1}\left(t, \delta^0, \xi^1\right)\right|^{1+p}\right]<+\infty,$$ 
    $$\mathbb E^{\delta} \left[ \underset{(u,t) \in [0,T]^2}{\sup} \left( e^{-\gamma^1 \bigg(\int_u^t \Big( \delta^{a,1}_s - \beta \big( \delta^{a,1}_s - \underline{ \delta^{a}_s}  \big) \Big) \mathrm dN^{a,1}_s + \int_u^t \Big( \delta^{b,1}_s - \beta \big(  \delta^{b,1}_s - \underline{ \delta^{b}_s} \big) \Big) \mathrm dN^{b,1}_s + \int_u^t Q^{1}_s \mathrm dS_s\bigg)}\right)^{1+p}\right]<+\infty.$$
\end{lemme}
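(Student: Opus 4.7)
My plan is to treat $M_0$ and $M_1$ symmetrically, so below I only sketch the bounds involving $V_{M_0}$ and its associated exponential. Negativity of $V_{M_0}(\cdot,\delta^1,\xi^0)$ and $V_{M_1}(\cdot,\delta^0,\xi^1)$ is immediate, since each is an essential supremum of random variables of the form $\mathbb E^{\cdot}_\tau\bigl[-e^{(\cdot)}\bigr] < 0$.

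For the second (exponential) estimate, I would first exploit the fact that $\delta \in \mathcal A \Rightarrow |\delta^{k,i}_s| \leq \delta_\infty$, together with $|Q^i_s| \leq \bar q$ and $\mathrm dS_s = \sigma \mathrm dW_s$. The elementary bound $|f(t)-f(u)| \leq 2 \sup_s |f(s)|$ then dominates the double supremum by
$$\exp\Bigl( 2\gamma^0 \delta_\infty\bigl(N^{a,0}_T + N^{b,0}_T\bigr) + 2\gamma^0 \sigma \sup_{s \in [0,T]}\Bigl|\int_0^s Q^0_r \mathrm dW_r\Bigr| \Bigr).$$
Under $\mathbb P^\delta$ the intensities of $N^{k,i,j}$ are uniformly bounded above by $\Lambda^j(-\delta_\infty)$, so the Poisson counts admit exponential moments of every order; the Brownian integral has integrand bounded by $\bar q$, hence its running maximum has Gaussian tails by Doob's maximal inequality applied to the continuous martingale $\int_0^{\cdot} Q^0_r \mathrm dW_r$. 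A Cauchy--Schwarz split then yields finite moments of every order $1+p$.

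To control $V_{M_0}$, I would pick $\bar\gamma^0 > \gamma^0$ from \eqref{integcond0} and $p > 0$ such that $(1+p)\gamma^0 < \bar\gamma^0$. Denoting by $\Pi_{t,T}^{\tilde\delta^0}$ the PL increment of $M_0$ over $[t,T]$ under $\tilde\delta^0 \otimes \delta^1$, combining the definition of the essential supremum with Hölder's inequality applied with conjugate exponents $(a,b)$ (where $b$ is chosen slightly larger than $1$, in particular $b < \bar\gamma^0/\gamma^0$ and $b < 1+p$) gives
$$|V_{M_0}(t,\delta^1,\xi^0)| \leq \esssup_{\tilde\delta^0 \in \mathcal A^0(\delta^1)} \Bigl(\mathbb E^{\tilde\delta^0\otimes\delta^1}_t\bigl[e^{-a\gamma^0 \Pi_{t,T}^{\tilde\delta^0}}\bigr]\Bigr)^{1/a} \Bigl(\mathbb E^{\tilde\delta^0\otimes\delta^1}_t\bigl[e^{-b\gamma^0 \xi^0}\bigr]\Bigr)^{1/b}.$$
By the preceding paragraph, the first factor is dominated by a deterministic constant uniformly in $\tilde\delta^0$. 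The second factor is a positive $\mathbb P^{\tilde\delta^0\otimes\delta^1}$-martingale; applying Doob's $L^{(1+p)/b}$-maximal inequality (legitimate since $(1+p)/b > 1$) and then passing to the reference measure $\mathbb P^\delta$ via the Radon--Nikodym density yields the desired bound $\mathbb E^\delta[\sup_t |V_{M_0}(t,\delta^1,\xi^0)|^{1+p}] < +\infty$, via the integrability condition \eqref{integcond0}.

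The hard part will be the uniformity in $\tilde\delta^0 \in \mathcal A^0(\delta^1)$ of the above estimates, which ultimately reduces to controlling the Doléans--Dade density $L^{\tilde\delta^0\otimes\delta^1}_T / L^\delta_T$ in $L^r$ uniformly in $\tilde\delta^0$. I expect to resolve this using the two-sided bounds $\Lambda^j(\delta_\infty) \leq \Lambda^j(\cdot) \leq \Lambda^j(-\delta_\infty)$ on $[-\delta_\infty, \delta_\infty]$, which provide uniform exponential moments for the relative density, so that both the bound on the second Hölder factor and on $\mathbb E^\delta[\sup_t |V_{M_0}(\cdot,\delta^1,\xi^0)|^{1+p}]$ are indeed independent of $\tilde\delta^0$.
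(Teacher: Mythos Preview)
Your argument for strict negativity is incomplete: an essential supremum of almost surely strictly negative random variables can very well equal $0$ on a set of positive measure, so the observation that each $J_{M_0}(\tau,\tilde\delta^0,\delta^1,\xi^0)<0$ does not by itself yield $V_{M_0}<0$. The paper instead produces an explicit upper bound: using that $L^{\delta}_T/L^{\delta}_t$ is bounded below by a strictly positive $\mathcal F_T$-measurable $\alpha_{t,T}$ (from the two-sided bounds on $\Lambda^j$), one gets $V_{M_0}(t,\delta^1,\xi^0)\le \mathbb E^{\mathbb P}_t[-\alpha_{t,T}\,e^{(\cdots)}]<0$ a.s. This matters, since later one takes $-\tfrac{1}{\gamma^0}\log(-V_{M_0})$.

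For the moment bound on $|V_{M_0}|$ you are working much harder than necessary, and in doing so you create a genuine difficulty. Because $V_{M_0}$ is an essential supremum of negative quantities, $|V_{M_0}(t,\delta^1,\xi^0)|=-V_{M_0}(t,\delta^1,\xi^0)\le -J_{M_0}(t,\delta^0,\delta^1,\xi^0)$ for \emph{any} fixed $\delta^0\in\mathcal A^0(\delta^1)$; in particular for the $\delta^0$ already given as the first component of $\delta$. This collapses the bound to a single $\mathbb P^{\delta}$-conditional expectation, and Doob's inequality applies directly under $\mathbb P^\delta$ with no change of measure and no uniformity in $\tilde\delta^0$ to worry about. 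This is exactly what the paper does. Your route---keeping the $\esssup_{\tilde\delta^0}$, splitting via H\"older, and then trying to apply Doob to the second factor---runs into the problem that you need $\mathbb E^\delta\big[\sup_t\big(\esssup_{\tilde\delta^0}\mathbb E^{\tilde\delta^0\otimes\delta^1}_t[\,\cdot\,]\big)^{(1+p)/b}\big]$, whereas Doob for each fixed $\tilde\delta^0$ only controls $\mathbb E^{\tilde\delta^0\otimes\delta^1}\big[\sup_t(\cdot)^{(1+p)/b}\big]$; interchanging the $\esssup$ with $\sup_t$ and the outer expectation is not automatic. It can be salvaged by first dominating $L^{\tilde\delta^0\otimes\delta^1}_T/L^{\tilde\delta^0\otimes\delta^1}_t$ pointwise by a $\tilde\delta^0$-independent random variable (again via the bounds on $\Lambda^j$), thereby removing the $\esssup$ \emph{before} taking $\sup_t$---but at that point you have essentially rediscovered the paper's simpler argument. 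Your treatment of the second (double-sup) estimate is fine and in the same spirit as the paper's.
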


\begin{proof}

We only prove the result for ${M_0}$, as the proof for ${M_1}$ is similar.\\

Let us first show that $V_{M_0}\left(., \delta^1, \xi^0\right)$ is a negative process. We have:
\begin{align*}
    &\frac{L^{\delta}_T}{L^{\delta}_t}\\
    \ge & \alpha_{t,T} := \exp \Bigg( \sum_{(i,j) \in \{0,1\}} \sum_{k\in \{b,a\}} \bigg( \Big( \log (A^j) - \frac{\kappa}{\sigma}(\delta_{\infty} + c^j)\Big) (N^{k,i,j}_T-N^{k,i,j}_t)  - \Big( A^j e^{-\frac{\kappa}{\sigma}(c-\delta_\infty)}-1\Big) (T-t)  \bigg)  \Bigg)\\
    > & 0.
\end{align*}
Hence we have
\begin{align*}
    V_{M_0}\left(t, \delta^1, \xi^0\right) \le \mathbb E^{\mathbb P} \left[-\alpha_{t,T}e^{-\gamma^0 \bigg(\delta_{\infty} (N^{a,0}_T - N^{a,0}_t + N^{b,0}_T - N^{b ,0}_t)  \int_t^T Q^{0}_s \mathrm dS_s + \xi^0 \bigg)} \right] <0.
\end{align*}

Furthermore, we have
\begin{align*}
    V_{M_0}\left(t, \delta^1, \xi^0\right) = &\underset{\delta^0 \in \mathcal A^0(\delta^1)}{\esssup} \mathbb E^{\delta^0 \otimes  \delta^1}_t \left[ -e^{-\gamma^0 \bigg(\int_t^T \Big( \delta^{a,0}_s - \beta \big( \delta^{a,0}_s - \underline{\delta^{a}_s}  \big) \Big) \mathrm dN^{a,0}_s + \int_t^T \Big( \delta^{b,0}_s - \beta \big(  \delta^{b,0}_s - \underline{ \delta^{b}_s} \big) \Big) \mathrm dN^{b,0}_s + \int_t^T Q^{0}_s \mathrm dS_s + \xi^0 \bigg)} \right]\\
    \ge &\   \mathbb E^{\delta^0 \otimes  \delta^1}_t \left[ -e^{\gamma^0 \bigg( \delta_{\infty}(N^{a,0}_T + N^{b,0}_T) - \int_t^T Q^{0}_s \mathrm dS_s - \xi^0 \bigg)} \right]\ge    \mathbb E^{\delta^0 \otimes  \delta^1}_t \left[ -e^{\gamma^0 \bigg( \delta_{\infty}(N^{a,0}_T + N^{b,0}_T) - \xi^0 \bigg)} \right]e^{ \frac { (\gamma^0)^2 \bar q^2 \sigma^2 T}{2}}
\end{align*}

with $\delta^0 \in \mathcal A^0(\delta^1)$. Therefore, writing $\delta =\delta^0 \otimes  \delta^1  $, we have for $p>0$
\begin{align*}
    \mathbb E^{\delta} \left[ \underset{t \in [0,T]}{\sup} \left| V_{M_0}\left(t, \delta^1, \xi^0\right)\right|^{1+p}\right] \le e^{ \frac {(1+p) (\gamma^0)^2 \bar q^2 \sigma^2 T}{2}}\mathbb E^{\delta} \left[ \underset{t \in [0,T]}{\sup}\mathbb E^{\delta^0 \otimes  \delta^1}_t \left[ e^{\gamma^0 \bigg( \delta_{\infty}(N^{a,0}_T + N^{b,0}_T) - \xi^0 \bigg)} \right]^{1+p} \right]. 
\end{align*}

By Hölder's inequality along with condition \eqref{integcond0}, the term inside the conditional expectation is integrable. Therefore, Doob's inequality gives us
$$\mathbb E^{\delta} \left[ \underset{t \in [0,T]}{\sup} \left| V_{M_0}\left(t, \delta^1, \xi^0\right)\right|^{1+p}\right] \le  C_p e^{ \frac {(1+p) (\gamma^0)^2 \bar q^2 \sigma^2 T}{2}}\mathbb E^{\delta} \left[  e^{(1+p)\gamma^0 \bigg( \delta_{\infty}(N^{a,0}_T + N^{b,0}_T) - \xi^0 \bigg)}  \right] < +\infty, $$
for a given $C_p >0$.\\

We can prove
$$\mathbb E^{\delta} \left[ \underset{(u,t) \in [0,T]^2}{\sup} \left( e^{-\gamma^0 \bigg(\int_u^t \Big( \delta^{a,0}_s - \beta \big( \delta^{a,0}_s - \underline{ \delta^{a}_s}  \big) \Big) \mathrm dN^{a,0}_s + \int_u^t \Big( \delta^{b,0}_s - \beta \big(  \delta^{b,0}_s - \underline{ \delta^{b}_s} \big) \Big) \mathrm dN^{b,0}_s + \int_u^t Q^{0}_s \mathrm dS_s\bigg)}\right)^{1+p}\right]<+\infty$$
using the same arguments.
    
\end{proof}

We now proceed in several steps to prove \Cref{contractrep} and \Cref{NashMM}.\\

\begin{enumerate}[wide, labelindent=0pt]
    \item Let $\xi=(\xi^0, \xi^1)\in \mathcal C$ and $\bar \delta \in \overline{NE}(\xi)$. It follows from the dynamic programming principle of \Cref{PPD} that, for $\delta^0 \in \mathcal A^0(\bar \delta^1)$, the process $\left(U^{0,\delta^0 \otimes \bar \delta^1}_t\right)_{t\in [0,T]}$ given by
    $$U^{0,\delta^0 \otimes \bar \delta^1}_t = V_{M_0}\left(t, \bar \delta^1, \xi^0\right) e^{-\gamma^0 \bigg(\int_0^t \Big( \delta^{a,0}_s - \beta \big( \delta^{a,0}_s - \underline{\delta^{a,0}_s \otimes \bar \delta^{a,1}_s}  \big) \Big) \mathrm dN^{a,0}_s + \int_0^t \Big( \delta^{b,0}_s - \beta \big(  \delta^{b,0}_s - \underline{\delta^{b,0}_s \otimes \bar \delta^{b,1}_s} \big) \Big) \mathrm dN^{b,0}_s + \int_0^t Q^{0}_s \mathrm dS_s\bigg)}$$
    is a $\mathbb P^{\delta^0 \otimes \bar \delta^1}-$supermartingale.\\

    For any $\mathbb F-$predictable stopping time $\tau$ with values in $[0,T]$, we have
    \begin{align*}
        V_{M_0}\left(\bar \delta^1, \xi^0\right) \ge& \mathbb E^{\bar \delta} \left[V_{M_0}\left(\tau, \bar \delta^1, \xi^0\right) e^{-\gamma^0 \bigg(\int_0^\tau \Big( \bar \delta^{a,0}_s - \beta \big( \bar \delta^{a,0}_s - \underline{ \bar \delta^{a}_s}  \big) \Big) \mathrm dN^{a,0}_s + \int_0^\tau \Big( \bar\delta^{b,0}_s - \beta \big(  \bar \delta^{b,0}_s - \underline{ \bar \delta^{b}_s} \big) \Big) \mathrm dN^{b,0}_s + \int_0^\tau Q^{0}_s \mathrm dS_s\bigg)}\right]\\
        \ge&  \mathbb E^{\bar \delta} \left[- e^{-\gamma^0 \bigg(\int_0^T \Big( \bar \delta^{a,0}_s - \beta \big( \bar \delta^{a,0}_s - \underline{ \bar \delta^{a}_s}  \big) \Big) \mathrm dN^{a,0}_s + \int_0^T \Big( \bar\delta^{b,0}_s - \beta \big(  \bar \delta^{b,0}_s - \underline{ \bar \delta^{b}_s} \big) \Big) \mathrm dN^{b,0}_s + \int_0^T Q^{0}_s \mathrm dS_s + \xi^i\bigg)}\right] = V_{M_0}\left(\bar \delta^1, \xi^0\right).
    \end{align*}
    Therefore, as the filtration is right-continuous, we deduce from \Cref{leminteg} that $\left(U^{0,\ \bar \delta}_t\right)_{t\in [0,T]}$ is a uniformly integrable $\mathbb P^{\bar \delta}-$martingale.\\

    As all probability measures are equivalent here, we deduce that $\left(U^{0,\delta^0 \otimes \bar \delta^1}_t\right)_{t\in [0,T]}$ admits a \textit{càdlàg} modification for any $\delta^0 \in \mathcal A^0(\bar \delta^1)$, and we can consider its Doob-Meyer decomposition given by:
    $$U^{0,\delta^0 \otimes \bar \delta^1}_t = M^{0,\delta^0 \otimes \bar \delta^1}_t - A^{0,\delta^0 \otimes \bar \delta^1, c}_t - A^{0,\delta^0 \otimes \bar \delta^1, d}_t \quad \forall t \in [0,T],$$
    where $\left(M^{0,\delta^0 \otimes \bar \delta^1}_t\right)_{t\in [0,T]}$ is a uniformly integrable $\mathbb P^{\delta^0 \otimes \bar \delta^1}-$martingale and
    $$A^{0,\delta^0 \otimes \bar \delta^1}_t  := A^{0,\delta^0 \otimes \bar \delta^1, c}_t + A^{0,\delta^0 \otimes \bar \delta^1, d}_t \quad \forall t \in [0,T]$$
    is a predictable, integrable, and non-decreasing process such that $A^{0,\delta^0 \otimes \bar \delta^1, c}_0 = A^{0,\delta^0 \otimes \bar \delta^1, d}_0 = 0$, with pathwise continuous component $\left(A^{0,\delta^0 \otimes \bar \delta^1, c}_t \right)_{t\in [0,T]}$ and a piecewise constant predictable process $\left(A^{0,\delta^0 \otimes \bar \delta^1, d}_t \right)_{t\in [0,T]}$.\\

    By the martingale representation theorem in \cite{el2018optimal}, we can write
    $$M^{0,\delta^0 \otimes \bar \delta^1}_t\!\!\! = V_{M_0}\left(\bar \delta^1, \xi^0\right) + \int_0^t \tilde Z^{0,S, \delta^0 \otimes \bar \delta^1}_s \mathrm dS_s + \sum_{i,j=0}^1 \int_0^t \tilde Z^{0, b, i, j, \delta^0 \otimes \bar \delta^1}_s \mathrm d\tilde N^{b,i,j, \delta^0 \otimes \bar \delta^1}_s + \sum_{i,j=0}^1 \int_0^t \tilde Z^{0,a, i, j,\delta^0 \otimes \bar \delta^1}_s \mathrm d\tilde N^{a,i,j, \delta^0 \otimes \bar \delta^1}_s$$
    where $\forall (i,j) \in \{0,1\}^2$, $\tilde N^{b,i,j, \delta^0 \otimes \bar \delta^1}$ and $\tilde N^{a,i,j, \delta^0 \otimes \bar \delta^1}$ are the compensated process associated to $N^{b,i,j}$ and $N^{a,i,j}$ under $\mathbb P^{\delta^0 \otimes \bar \delta^1}$, and $\tilde Z^{0,S, \delta^0 \otimes \bar \delta^1}$, $\tilde Z^{0, b, i, j, \delta^0 \otimes \bar \delta^1}$, and  $\tilde Z^{0, a, i, j, \delta^0 \otimes \bar \delta^1}$ are predictable processes.

    \item We now introduce the process $\left( Y_{M_0}\left(t, \bar \delta^1, \xi^0\right) \right)_{t \in [0,T]}$ given by
    $$Y_{M_0}\left(t, \bar \delta^1, \xi^0\right) = - \frac{1}{\gamma^0}\log \left(- V_{M_0}\left(t, \bar \delta^1, \xi^0\right) \right) \quad \forall t\in [0,T].$$

    As $A^{0,\delta^0 \otimes \bar \delta^1, d}$ is a predictable point process, and the jump times of  $N^{b,i,j}$ and $N^{a,i,j}$ are totally inaccessible stopping times under $\mathbb P$, it is clear that $\langle N^{b,i,j}, A^{0,\delta^0 \otimes \bar \delta^1, d}\rangle= \langle N^{a,i,j}, A^{0,\delta^0 \otimes \bar \delta^1, d}\rangle =0$ a.s. By Itô's formula, we get $Y_{M_0}\left(T, \bar \delta^1, \xi^0\right) = \xi^0$ and
    $$\mathrm dY_{M_0}\left(t, \bar \delta^1, \xi^0\right) = \sum_{i,j = 0}^1 Z^{0,b,i,j}_t \mathrm d N^{b,i,j}_t + \sum_{i,j = 0}^1 Z^{0,a,i,j}_t \mathrm d N^{a,i,j}_t + Z^{0,S}_t \mathrm dS_t - \mathrm d \tilde A^{0,d}_t - \mathrm d I^0_t,$$

    where we have by identification $\forall (i,j) \in \{0,1\}^2$ and $t \in [0,T]$
    \begin{align*}
        Z^{0,b,i,j}_t &= -\frac{1}{\gamma^0} \log \left( 1 + \frac{\tilde Z^{0, b, i, j, \delta^0 \otimes \bar \delta^1}_t}{U^{0,\delta^0 \otimes \bar \delta^1}_{t-}} \right) - \left(\delta^{b,0}_t - \beta \big(  \delta^{b,0}_t - \underline{ \delta^{b,0}_t \otimes \bar \delta^{b,1}_t} \big) \right),\\
        Z^{0,a,i,j}_t &= -\frac{1}{\gamma^0} \log \left( 1 + \frac{\tilde Z^{0, a, i, j, \delta^0 \otimes \bar \delta^1}_t}{U^{0,\delta^0 \otimes \bar \delta^1}_{t-}} \right) - \left(\delta^{a,0}_t - \beta \big(  \delta^{a,0}_t - \underline{ \delta^{a,0}_t \otimes \bar \delta^{a,1}_t} \big) \right),
    \end{align*}
    $$Z^{0,S}_t = - \frac{\tilde Z^{0,S, \delta^0 \otimes \bar \delta^1}_t}{\gamma^0 U^{0,\delta^0 \otimes \bar \delta^1}_{t-}} - Q^0_{t-}, \qquad \tilde A^{0,d}_t = \frac{1}{\gamma^0} \sum_{0\le s \le t} \log \left(1 - \frac{\Delta A^{0,\delta^0 \otimes \bar \delta^1, d}_t}{U^{0,\delta^0 \otimes \bar \delta^1}_{t-}}\right),$$
    and 
    $$I^0_t = \int _0^t \left( \bar h^0 \left(\delta^0_s,  \bar \delta^1_s, Z^0_s, Q_s\right) \mathrm ds - \frac 1{\gamma^0 U^{0,\delta^0 \otimes \bar \delta^1}_{s}} \mathrm d A^{0,\delta^0 \otimes \bar \delta^1, c}_s\right),$$
    with
    $$\bar h^0 \left(\delta^0_s,  \bar \delta^1_s, Z^0_s, Q_s\right) =  h^0 \left(\delta^0_s,  \bar \delta^1_s, Z^0_s, Q_s\right) - \frac 12 \gamma^0 \sigma^2 \left(Z^{0,S}_s \right)^2. $$
    This shows in particular that the process 
    $$\Delta a^0_t := - \frac{\Delta A^{0,\delta^0 \otimes \bar \delta^1, d}_t}{U^{0,\delta^0 \otimes \bar \delta^1}_{t-}} \ge 0$$
    is independent of $\delta^0 \in \mathcal A^0(\bar \delta^1)$.

    \item Notice now that
    \begin{align*}
        0 &= \underset{\delta^0 \in \mathcal A^0(\bar \delta^1)}{\sup} \mathbb E^{\delta^0 \otimes \bar \delta^1} \left[U^{0,\delta^0 \otimes \bar \delta^1}_{T} \right] - V_{M_0}\left(\bar \delta^1, \xi^0\right) = \underset{\delta^0 \in \mathcal A^0(\bar \delta^1)}{\sup} \mathbb E^{\delta^0 \otimes \bar \delta^1} \left[U^{0,\delta^0 \otimes \bar \delta^1}_{T} -  M^{0,\delta^0 \otimes \bar \delta^1}_{T} \right]\\
        &= \gamma^0 \underset{\delta^0 \in \mathcal A^0(\bar \delta^1)}{\sup} \mathbb E^{\mathbb P} \left[L^{\delta^0 \otimes \bar \delta^1}_{T} \int_0^T U^{0,\delta^0 \otimes \bar \delta^1}_{s-}  \left(\mathrm dI^0_s - \bar h^0 \left(\delta^0_s,  \bar \delta^1_s, Z^0_s, Q_s\right) \mathrm ds + \frac{\mathrm da^0_s}{\gamma^0} \right) \right].
    \end{align*}
    As the controls are uniformly bounded, we have by \Cref{leminteg} $\forall t \in [0,T]$ and $\forall \delta^0 \in \mathcal A^0(\bar \delta^1)$:
    $$U^{0,\delta^0 \otimes \bar \delta^1}_{t} \le -\beta^0_t = V_{M_0}\left(t, \bar \delta^1, \xi^0\right) e^{-\gamma^0\left( \delta_{\infty} \left(N^{a,0}_T- N^{a,0}_0 + N^{b,0}_T- N^{b,0}_0\right) + \int_0^t Q^0_s \mathrm dS_s\right)}<0.$$
    It follows that
    \begin{align*}
        0 &\le \underset{\delta^0 \in \mathcal A^0(\bar \delta^1)}{\sup} \mathbb E^{\mathbb P} \left[\alpha_{0,T} \int_0^T -\beta^0_{s-} \left(\mathrm dI^0_s - \bar h^0 \left(\delta^0_s,  \bar \delta^1_s, Z^0_s, Q_s\right) \mathrm ds + \frac{\mathrm da^0_s}{\gamma^0} \right) \right]\\
        &= - \mathbb E^{\mathbb P} \left[\alpha_{0,T} \int_0^T \beta^0_{s-} \left(\mathrm dI^0_s - \bar H^0 \left( \bar \delta^1_s, Z^0_s, Q_s\right) \mathrm ds + \frac{\mathrm da^0_s}{\gamma^0} \right) \right]
    \end{align*}
    where
    $$\bar H^0 \left(\delta^0_s,  \bar \delta^1_s, Z^0_s, Q_s\right) =  H^0 \left(  \bar \delta^1_s, Z^0_s, Q_s\right) - \frac 12 \gamma^0 \sigma^2 \left(Z^{0,S}_s \right)^2. $$

    Since the random variables $\alpha_{0,T} \int_0^T \beta^0_{s-} \left(\mathrm dI^0_s - \bar H^0 \left( \bar \delta^1_s, Z^0_s, Q_s\right) \mathrm ds  \right) $ and $\alpha_{0,T} \int_0^T \beta^0_{s-} \frac{\mathrm da^0_s}{\gamma^0}$ are non-negative, we finally obtain 
    $$\tilde A^{0,d} \qquad \text{and} \qquad I^0_t = \int_0^t \bar H^0 \left( \bar \delta^1_s, Z^0_s, Q_s\right) \mathrm ds.$$

    Moreover, by Itô's formula, 
    \begin{align*}
        \mathrm dU^{0,\delta^0 \otimes \bar \delta^1}_{t} &= \tilde Z^{0,S, \delta^0 \otimes \bar \delta^1}_t \mathrm dS_t + \sum_{i,j=0}^1 \tilde Z^{0, b, i, j, \delta^0 \otimes \bar \delta^1}_t \mathrm d\tilde N^{b,i,j, \delta^0 \otimes \bar \delta^1}_t + \sum_{i,j=0}^1 \tilde Z^{0,a, i, j,\delta^0 \otimes \bar \delta^1}_t \mathrm d\tilde N^{a,i,j, \delta^0 \otimes \bar \delta^1}_t\\
        & \qquad \qquad + \gamma^0 U^{0,\delta^0 \otimes \bar \delta^1}_{t} \Big( \bar H^0 \left( \bar \delta^1_t, Z^0_t, Q_t\right)  - \bar h^0 \left(\delta^0_t,  \bar \delta^1_t, Z^0_t, Q_t\right) \Big)\mathrm dt.
    \end{align*}
    But since $\left(U^{0,\ \bar \delta}_t\right)_{t\in [0,T]}$ is a $\mathbb P^{\bar \delta}-$martingale, we must have
    $$\bar h^0 \left(\bar \delta^0_t,  \bar \delta^1_t, Z^0_t, Q_t\right) = \bar H^0 \left( \bar \delta^1_t, Z^0_t, Q_t\right), \qquad \text{i.e.} \qquad \bar \delta^0_t \in \underset{\delta^0 \in \mathcal B_{\infty}^2(Q^0_t)}{\mathrm{argmax}}\ h^0(\delta^0, \bar \delta^1_t, Z^0_t, Q_t).$$

    \item Note that, although we only considered the case of market maker $M_0$ so far, similar results can also be obtained for $M_1$, and in particular
    $$\bar \delta^1_t \in \underset{\delta^1
    \in \mathcal B_{\infty}^2(Q^1_t)}{\mathrm{argmax}}\ h^1(\bar \delta^0, \delta^1_t, Z^1_t, Q_t) \qquad \forall t \in [0,T],$$

    where $Z^1$ is defined in the case of market maker $M_1$ as $Z^0$ for $M_0.$ This means that $\forall t\in [0,T]$, $\bar \delta_t \in \mathcal E (Z_t,Q_t)$, and by \Cref{FPMM} we get
    $$h^0 \left(\bar \delta^0_t,  \bar \delta^1_t, Z^0_t, Q_t\right) =  H^0 \left( \bar \delta^1_t, Z^0_t, Q_t\right) = \mathcal H^0 \left( Z^0_t, Q_t\right),$$
    and similarly
    $$h^1 \left(\bar \delta^0_t,  \bar \delta^1_t, Z^1_t, Q_t\right) =  H^1 \left( \bar \delta^0_t, Z^1_t, Q_t\right) = \mathcal H^1 \left( Z^1_t, Q_t\right).$$

    \item It remains to prove that $Z = (Z^0, Z^1) \in \mathcal Z$. By \Cref{leminteg}, we know $\exists\ q>0$ such that
    $$\underset{\delta^0 \in \mathcal A^0(\bar \delta^1)}{\sup} \mathbb E^{\delta^0 \otimes \bar \delta^1} \left[ \underset{t \in [0,T]}{\sup} \left| U^{0,\delta^0 \otimes \bar \delta^1}_t \right|^{1+q}\right]<+\infty.$$

    Using Hölder inequality, and the fact that
    $$e^{-\gamma^0 Y_{M_0}\left(t, \bar \delta^1, \xi^0\right)} = -U^{0,\delta^0 \otimes \bar \delta^1}_t   e^{\gamma^0 \bigg(\int_0^t \Big( \delta^{a,0}_s - \beta \big( \delta^{a,0}_s - \underline{\delta^{a,0}_s \otimes \bar \delta^{a,1}_s}  \big) \Big) \mathrm dN^{a,0}_s + \int_0^t \Big( \delta^{b,0}_s - \beta \big(  \delta^{b,0}_s - \underline{\delta^{b,0}_s \otimes \bar \delta^{b,1}_s} \big) \Big) \mathrm dN^{b,0}_s + \int_0^t Q^{0}_s \mathrm dS_s\bigg)},$$
    we get the desired result for $Z^0$. The result for $Z^1$ is obtained similarly, and finally  $Z \in \mathcal Z$.\\

    This proves that $\mathcal C \subset \Xi$.

    \item Let us now consider  $\xi=\left(\xi^0, \xi^1\right) = \left(Y_T^{0, Y^0_0, Z},  Y_T^{1, Y^1_0, Z} \right)$ for some $(Y_0, Z) \in \mathbb R^2 \times \mathcal Z$. By what precedes and using \Cref{FPMM}, it is then easy to prove \Cref{NashMM}. In particular, $\overline{NE}(\xi) \neq \emptyset$, and therefore $\mathcal C = \Xi$. 
\end{enumerate}

\bibliographystyle{plain}

\end{document}